\documentclass[journal]{IEEEtran}

\usepackage{
amssymb,
latexsym,
soul,
mathrsfs,
times,
blkarray,
enumitem,
multirow
}
\usepackage[cmex10]{amsmath}
\usepackage{amsthm}
\interdisplaylinepenalty=2500
\usepackage[english]{babel}
\usepackage[utf8]{inputenc}
\usepackage[T1]{fontenc}
\usepackage[colorinlistoftodos]{todonotes}
\usepackage{csquotes}

\setcounter{MaxMatrixCols}{20}

\newtheorem{theorem}{Theorem}[section]
\newtheorem{proposition}[theorem]{Proposition}
\newtheorem{lemma}[theorem]{Lemma}
\newtheorem{corollary}[theorem]{Corollary}
\newtheorem{conjecture}[theorem]{Conjecture}
\newtheorem{claim}[theorem]{Claim}

\theoremstyle{definition}
\newtheorem{algorithm}[theorem]{Algorithm}
\newtheorem{example}[theorem]{Example}
\newtheorem{definition}[theorem]{Definition}
\newtheorem{remark}[theorem]{Remark}

\numberwithin{equation}{section} \numberwithin{figure}{section}
\numberwithin{table}{section}

\newcommand{\nc}{\newcommand}

\makeatletter
\newcommand*\Wt[1]{\mathpalette\wthelper{#1}}
\newcommand*\wthelper[2]{%
\hbox{\dimen@\accentfontxheight#1%
\accentfontxheight#11.15\dimen@
$\m@th#1\widetilde{#2}$%
\accentfontxheight#1\dimen@
}%
}

\newcommand*\accentfontxheight[1]{%
\fontdimen5\ifx#1\displaystyle
\textfont
\else\ifx#1\textstyle
\textfont
\else\ifx#1\scriptstyle
\scriptfont
\else
\scriptscriptfont
\fi\fi\fi3
}
\makeatother

\nc{\Z}{\mathbb{Z}}
\nc{\C}{\mathcal{C}}
\nc{\tC}{\Wt{\C}}
\nc{\tG}{\widetilde{G}}
\nc{\oC}{\overline{C}}
\nc{\oG}{\overline{G}}
\nc{\hC}{\widehat{\C}}
\nc{\bx}{\mathbf{x}}
\nc{\by}{\mathbf{y}}
\nc{\wt}{\mathrm{wt}}
\nc{\SO}{\mathrm{SO}}
\nc{\sfh}{\mathsf{h}}
\nc{\nmr}{\mathnormal{r}}
\nc{\nmc}{\mathnormal{c}}
\nc{\calI}{\mathcal{I}}
\nc{\calP}{\mathcal{P}}
\nc{\calS}{\mathcal{S}}
\nc{\nmH}{H}
\nc{\calE}{\mathcal{E}}
\nc{\bfI}{\mathbf{I}}
\nc{\hG}{\widehat{G}}
\nc{\fn}{\mathfrak{n}}
\nc{\hfn}{\widehat{\fn}}
\nc{\bfv}{\mathbf{v}}
\nc{\hs}{\widehat{s}}
\nc{\wcI}{\widehat{\calI}}

\nc{\dso}{d_{\mathrm{so}}}

\nc{\floor}[1]{\left\lfloor #1 \right\rfloor}
\nc{\ceil}[1]{\left\lceil #1 \right\rceil}

\setlength{\marginparwidth}{1.5cm}

\nc{\yh}[1]{\todo[size=\tiny,color=blue!10]{#1 \\ \hfill --- Young-Hun}}
\nc{\YH}[1]{\todo[size=\tiny,inline,color=blue!10]{#1
\\ \hfill --- Young-Hun}}
\nc{\N}[1]{\todo[size=\tiny,inline,color=pink!30]{#1
\\ \hfill --- Nari}}

\newenvironment{red}{\relax\color{red}}{\relax}
\newenvironment{blue}{\relax\color{blue}}{\hspace*{.5ex}\relax}
\newenvironment{teal}{\relax\color{teal}}{\hspace*{.5ex}\relax}
\newenvironment{magenta}{\relax\color{magenta}}{\hspace*{.5ex}\relax}

\nc{\ber}{\begin{red}}
\nc{\er}{\end{red}}
\nc{\beb}{\begin{blue}}
\nc{\eb}{\end{blue}}
\nc{\bema}{\begin{magenta}}
\nc{\ema}{\end{magenta}}
\nc{\bete}{\begin{teal}}
\nc{\ete}{\end{teal}}

\begin{document}

\title{Embedding linear codes into self-orthogonal codes and their optimal minimum  distances}

\author{Jon-Lark~Kim,
Young-Hun~Kim,
and~Nari~Lee
\thanks{J.-L. Kim was supported by Basic Research Program through the National Research Foundation of
Korea (NRF) funded by the Ministry of Education (NRF-2019R1A2C1088676).}
\thanks{J.-L. Kim is with the Department of Mathematics, Sogang University, Seoul 04107, Republic of Korea (e-mail: jlkim@sogang.ac.kr).}
\thanks{Y.-H. Kim is with the Department of Mathematics, Sogang University, Seoul 04107, Republic of Korea, 
Research Institute for Basic Science, Sogang University, Seoul 04107, Republic of Korea, and
Department of Mathematics, Ewha Womans University, Seoul 03760, Republic of Korea (e-mail: ykim.math@gmail.com).}
\thanks{N. Lee is with the Affiliated Institute of ETRI, Daejeon 34044, Republic of Korea (e-mail: narilee@nsr.re.kr).}}

%
%

%

\maketitle

\begin{abstract}
We obtain a characterization on self-orthogonality for a given binary linear code in terms of the number of column vectors in its generator matrix, which extends the result of Bouyukliev et al. (2006). As an application, we give an algorithmic method to embed a given binary $k$-dimensional linear code $\C$ ($k = 2,3,4$) into a self-orthogonal code of the shortest length which has the same dimension $k$ and minimum distance $d' \ge d(\C)$.  For $k > 4$, we suggest a recursive method to embed a $k$-dimensional linear code to a self-orthogonal code. 
We also give new explicit formulas for the minimum distances of optimal self-orthogonal codes for any length $n$ with dimension 4 and any length $n \not\equiv 6,13,14,21,22,28,29 \pmod{31}$ with dimension 5. We determine the exact optimal minimum distances of $[n,4]$ self-orthogonal codes which were left open by Li-Xu-Zhao (2008) when $n \equiv 0,3,4,5,10,11,12 \pmod{15}$.  Then, using MAGMA, we observe that our embedding sends an optimal linear code to an optimal self-orthogonal code.
\end{abstract}

\begin{IEEEkeywords}
Binary linear code, self-orthogonal code, optimal code.
\end{IEEEkeywords}

\section{Introduction}
\IEEEPARstart{S}{elf}-orthogonal codes have been extensively studied for their interesting structures and applications. In particular, self-dual codes, a special class of self-orthogonal codes, have attracted much attention because of their connections to other fields of mathematics such as unimodular lattices, secret sharing schemes, and designs (\cite{07CCGHV, 08DMS, 09Ha, 13BBH, 04BG, 04HKM}). 

Since the 1970s, a lot of researchers have studied self-orthogonal codes. For instance, constructions and classification of self-orthogonal codes were steadily studied (\cite{72Pless,75PleSlo,04Bouyu,08Li,19CES}). Self-orthogonal codes were also studied due to their connections to quantum codes (\cite{02Kim,10JLLX,12JX,14LV,17JKW}) and their applications to the resistance to side-channel attacks (\cite{13CG, 13CRZ}). 
However, several questions for self-orthogonal codes remain. To mention a few, the classification of self-orthogonal codes and the explicit formulas for the minimum distances of optimal self-orthogonal codes are partially computed.

From now on, we will only consider binary linear codes.
Pless~\cite{72Pless} classified certain self-orthogonal codes. Since then people have gotten more results for the classification of self-orthogonal codes.
In~\cite[Section 3]{06bou}, Bouyukliev et al. introduced a noteworthy characterization for self-orthogonality of three-dimensional codes in terms of the number of column vectors in a generator matrix. As a consequence, they gave the complete classification of three-dimensional optimal self-orthogonal codes.
In~\cite{08Li}, Li, Xu, and Zhao characterized four-dimensional optimal self-orthogonal codes by systems of linear equations. They also obtained the complete classification of optimal $[n,4]$ self-orthogonal codes for $n \equiv 1,2,6,7,8,9,13,14 \pmod{15}$ and left the other cases open.

In this paper, we generalize the characterization in~\cite[Section 3]{06bou} for arbitrary dimensions. In particular, we give an explicit characterization for $[n,k]$ self-orthogonal codes for $k = 2,4$ and reprove the characterization for $[n,3]$ self-orthogonal codes, which was introduced in \cite{06bou}.

As a consequence of our characterizations, we construct an algorithm that  embeds (or extends) an $[n,k]$ linear code to a self-orthogonal code for $k = 2,3,4$ in Section~\ref{sec: algorithm}. 
Precisely, if we input a generator matrix $G$ of a linear code $\C$, then the algorithm will give a matrix $\tG$ by adding more columns to $G$ which generates a self-orthogonal code $\tC$ and produces a minimum distance greater than or equal to the minimum distance of $\C$. Moreover, we prove that $\tC$ is a shortest (length) self-orthogonal embedding.
In~\cite{95Kob}, Kobayashi and Takada introduced a similar embedding with a critical error. We point out their error (see Remark~\ref{rem: error in 95Kob}).
For $k > 4$, we suggest a recursive method to embed a $k$-dimensional linear code to a self-orthogonal code.

It is a quite natural question whether an optimal linear code results in an optimal self-orthogonal code when it is embedded by our algorithm in Section~\ref{sec: algorithm}. Therefore, we will also discuss the explicit formulas for the minimum distances of optimal linear codes and optimal SO codes in Section~\ref{sec: min dist}.

There have been a number of studies on bounds about the minimum distances of linear codes. In particular, Griesmer~\cite{60Griesmer} introduced a remarkable bound, called the Griesmer bound. Due to this bound, researchers obtained a lot of minimum distances of optimal linear codes. For the details, readers can refer to~\cite{07codetables}. 

Manipulating the Griesmer bound, we calculate an explicit formula for an upper bound of the minimum distance $d(n,k)$ of optimal linear codes of length $n$ for $k = 1,2,3,4,5$. More precisely, we obtain new explicit formulas for the minimum distances $\dso(n,k)$ of optimal self-orthogonal codes for any length $n$ with $k=4$ and for any length $n \not\equiv 6,13,14,21,22,28,29 \pmod{31}$ with $k=5$. We determine the exact optimal minimum distances of $[n,4]$ self-orthogonal codes which were left open by Li-Xu-Zhao in~\cite{08Li} when $n \equiv 0,3,4,5,10,11,12 \pmod{15}$.  Using the code database in MAGMA~\cite{97MAGMA} and our explicit formulas, we observe that our embedding algorithm sends an optimal linear code to an optimal self-orthogonal code.

Readers may refer to~\cite{07codetables} to see the list of $d(n,k)$ for $n,k \le 256$ and refer to~\cite{06bou} to see the known $\dso(n,k)$ for $n\le 40$ and $k \le 10$. For $40 < n \le 100$, the upper bound of $\dso(n,5)$ is obtained from the Griesmer bound~\cite{60Griesmer}. In Tables~\ref{table: dso(n,k)1} and~\ref{table: dso(n,k)2}, we list $\dso(n,k)$ for $n \le 100$ and $k = 4,5$.
Here  the superscript $*$ and $\dagger$ respectively denote our exact value and the conjectured value from Conjecture~\ref{conj: dso(n,5)}.

This paper is organized as follows. In Section~\ref{sec: prelim}, we introduce basic facts. In Section~\ref{sec: equiv cond}, we give characterizations for self-orthogonal codes. In Section~\ref{sec: algorithm}, we introduce algorithms which give shortest self-orthogonal embeddings using our characterizations. In Section~\ref{sec: min dist}, we provide explicit formulas for $d(n,k)$ and $\dso(n,k)$ for $k = 1,2,3,4,5$.

\begin{table}[h]
\centering
\renewcommand{\arraystretch}{1.03}
\begin{tabular}{c||r|c||r|c}
\multirow{2}{*}{$n$}	& \multicolumn{2}{c||}{$\dso(n,4)$}	 &	\multicolumn{2}{c}{$\dso(n,5)$}\\ \cline{2-5}
&Known [ref]	  &Improved 	& Known [ref]	& Improved \\\hline\hline
$8  $ &		$4$		&		-		&				&					\\\hline
$9  $ &		$4$		&		-		&				&					\\\hline
$10 $ &		$4$		&		-		&	$4$			&		-		\\\hline
$11 $ &		$4$		&		-		&	$4$			&		-		\\\hline
$12 $ &		$4$		&		-		&	$4$			&		-		\\\hline
$13 $ &		$4$		&		-		&	$4$			&		-		\\\hline
$14 $ &		$6$		&		-		&	$4$			&		-		\\\hline
$15 $ &		$8$		&		-		&	$6$			&		-		\\\hline
$16 $ &		$8$		&		-		&	$8$			&		-		\\\hline
$17 $ &		$8$		&		-		&	$8$			&		-		\\\hline
$18 $ &		$8$		&		-		&	$8$			&		-		\\\hline
$19 $ &		$8$		&		-		&	$8$			&		-		\\\hline
$20 $ &		$8$		&		-		&	$8$			&		-		\\\hline
$21 $ &		$10$	&		-		&	$8$			&		-		\\\hline
$22 $ &		$10$	&		-		&	$8$			&		-		\\\hline
$23 $ &		$12$	&		-		&	$10$		&		-		\\\hline
$24 $ &		$12$	&		-		&	$12$		&		-		\\\hline
$25 $ &		$12$	&		-		&	$12$		&		-		\\\hline
$26 $ &		$12$	&		-		&	$12$		&		-		\\\hline
$27 $ &		$12$	&		-		&	$12$		&		-		\\\hline
$28 $ &		$14$	&		-		&	$12$		&		-		\\\hline
$29 $ &		$14$	&		-		&	$12$		&		-		\\\hline
$30 $ &		$16$	&		-		&	$14$		&		-		\\\hline
$31 $ &		$16$	&		-		&	$16$		&		-		\\\hline
$32 $ &		$16$	&		-		&	$16$		&		-		\\\hline
$33 $ &		$16$	&		-		&	$16$		&		-		\\\hline
$34 $ &		$16$	&		-		&	$16$		&		-		\\\hline
$35 $ &		$16$	&		-		&	$16$		&		-		\\\hline
$36 $ &		$18$	&		-		&	$16$		&		-		\\\hline
$37 $ &		$18$	&		-		&	$16$		&		-		\\\hline
$38 $ &		$20$	&		-		&	$18$		&		-		\\\hline
$39 $ &		$20$	&		-		&	$18$		&		-		\\\hline
$40 $ &		$20$	&		-		&	$20$		&		-		\\\hline
$41 $ &		$\le 20~\cite{60Griesmer}$	&		${\bf 20^*}$		&	$20 $		&		-		\\\hline
$42 $ &		$\le 20~\cite{08Li}$	    &		${\bf 20^*}$		&	$20 $		&		-		\\\hline
$43 $ &		$22~\cite{08Li}$	&		-		&	$\le 20~\cite{60Griesmer} $		&		${\bf 20^*}$		\\\hline
$44 $ &		$22~\cite{08Li}$	&		-		&	$\le 22~\cite{60Griesmer} $		&		$20^{\dagger}$	\\\hline
$45 $ &		$24~\cite{97MAGMA}$	&		-		&	$\le 22~\cite{60Griesmer} $		&		$20^{\dagger}$	\\\hline
$46 $ &		$24~\cite{08Li}$	&		-		&	$\le 22~\cite{60Griesmer} $		&		${\bf 22^*}$		\\\hline
$47 $ &		$24~\cite{08Li}$	&		-		&	$24~\cite{97MAGMA} $		&		-		\\\hline
$48 $ &		$24~\cite{97MAGMA}$	&		-		&	$24~\cite{97MAGMA} $		&		-		\\\hline
$49 $ &		$\le 24~\cite{60Griesmer}$	&		${\bf 24^*}$		&	$24~\cite{97MAGMA} $		&		-		\\\hline
$50 $ &		$\le 24~\cite{08Li}$	&		${\bf 24^*}$		&	$24~\cite{97MAGMA} $		&		-	
\end{tabular}
\caption{$\dso(n,k)$ for $n \le 50$ and $k=4,5$}
\label{table: dso(n,k)1}
\renewcommand{\arraystretch}{1}
\end{table}
\newpage

\begin{table}[h]
\centering
\renewcommand{\arraystretch}{1.03}
\begin{tabular}{c||r|c||r|c}
\multirow{2}{*}{$n$} &   \multicolumn{2}{c||}{$\dso(n,4)$}   &     \multicolumn{2}{c}{$\dso(n,5)$}      \\ \cline{2-5}
&                Known	[ref] & Improved & Known [ref] &    Improved    \\ \hline\hline
$51 $         &           $26~\cite{08Li}$ &   -   &        $\le 24~\cite{60Griesmer}$        &     ${\bf 24^*}$     \\ \hline
$52 $         &           $26~\cite{08Li}$ &   -   &        $\le 26~\cite{60Griesmer} $        & $24^{\dagger}$ \\ \hline
$53 $         &           $28~\cite{08Li}$ &   -   &        $\le 26~\cite{60Griesmer} $        & $24^{\dagger}$ \\ \hline
$54 $         &           $28~\cite{08Li}$ &   -   &        $\le 26~\cite{60Griesmer} $        &     ${\bf 26^*}$     \\ \hline
$55 $         & $28~\cite{97MAGMA}$ &  -  &        $28~\cite{97MAGMA} $        &     -     \\ \hline
$56 $         & $\le 28~\cite{60Griesmer}$ &  ${\bf 28^*}$  &        $28~\cite{97MAGMA} $        &     -     \\ \hline
$57 $         &       $\le 28~\cite{08Li}$ &  ${\bf 28^*}$  &        $28~\cite{97MAGMA} $        &     -     \\ \hline
$58 $         &           $30~\cite{08Li}$ &   -   &        $\le 28~\cite{60Griesmer} $        &     ${\bf 28^*}$     \\ \hline
$59 $         &           $30~\cite{08Li}$ &   -   &        $\le 30~\cite{60Griesmer} $        & $28^{\dagger}$ \\ \hline
$60 $         & $32~\cite{97MAGMA}$ &  -  &        $\le 30~\cite{60Griesmer} $        & $28^{\dagger}$ \\ \hline
$61 $         &           $32~\cite{08Li}$ &   -   &        $\le 30~\cite{60Griesmer} $        &     ${\bf 30^*}$     \\ \hline
$62 $         &           $32~\cite{08Li}$ &   -   &        $32~\cite{97MAGMA}$        &     -     \\ \hline
$63 $         & $32~\cite{97MAGMA}$ &  -  &        $32~\cite{97MAGMA} $        &     -     \\ \hline
$64 $         & $\le 32~\cite{60Griesmer}$ &  ${\bf 32^*}$  &        $32~\cite{97MAGMA} $        &     -     \\ \hline
$65 $         &       $\le 32~\cite{08Li}$ &  ${\bf 32^*}$  &        $32~\cite{97MAGMA} $        &     -     \\ \hline
$66 $         &           $34~\cite{08Li}$ &   -   &        $32~\cite{97MAGMA} $        &     -     \\ \hline
$67 $         &           $34~\cite{08Li}$ &   -   &        $\le 32~\cite{60Griesmer} $        &     ${\bf 32^*}$     \\ \hline
$68 $         &           $36~\cite{08Li}$ &   -   &        $\le 34~\cite{60Griesmer} $        & $32^{\dagger}$ \\ \hline
$69 $         &           $36~\cite{08Li}$ &   -   &        $\le 34~\cite{60Griesmer} $        &     ${\bf 34^*}$     \\ \hline
$70 $         & $36~\cite{97MAGMA}$ &  -  &        $\le 34~\cite{60Griesmer} $        &     ${\bf 34^*}$ \\ \hline
$71 $         & $\le 36~\cite{60Griesmer}$ &  ${\bf 36^*}$  &        $ 36~\cite{97MAGMA} $        &     -     \\ \hline
$72 $         &       $\le 36~\cite{08Li}$ &  ${\bf 36^*}$   &        $ 36~\cite{97MAGMA} $        &    -     \\ \hline
$73 $         &           $38~\cite{08Li}$ &   -   &        $\le 36~\cite{60Griesmer} $        &     ${\bf 36^*}$     \\ \hline
$74 $         &           $38~\cite{08Li}$ &   -   &        $\le 36~\cite{60Griesmer} $        &     ${\bf 36^*}$     \\ \hline
$75 $         & $40~\cite{97MAGMA}$ &  -  &        $\le 38~\cite{60Griesmer} $        & $36^{\dagger}$ \\ \hline
$76 $         &           $40~\cite{08Li}$ &  -   &        $\le 38~\cite{60Griesmer} $        & $36^{\dagger}$ \\ \hline
$77 $         &           $40~\cite{08Li}$ &   -  &        $\le 38~\cite{60Griesmer} $        &     ${\bf 38^*}$     \\ \hline
$78 $         & $40~\cite{97MAGMA}$ &  -  &        $ 40~\cite{97MAGMA} $        &     -     \\ \hline
$79 $         & $\le 40~\cite{60Griesmer}$ &  ${\bf 40^*}$  &        $ 40~\cite{97MAGMA} $        &     -     \\ \hline
$80 $         &       $\le 40~\cite{08Li}$ &   ${\bf 40^*}$   &        $ 40~\cite{97MAGMA} $        &     -    \\ \hline
$81 $         &           $42~\cite{08Li}$ &   -   &        $ 40~\cite{97MAGMA} $        &     -     \\ \hline
$82 $         &           $42~\cite{08Li}$ &   -   &        $\le 40~\cite{60Griesmer} $        &     ${\bf 40^*}$     \\ \hline
$83 $         &           $44~\cite{08Li}$ &   -   &        $\le 42~\cite{60Griesmer} $        & $40^{\dagger}$ \\ \hline
$84 $         &           $44~\cite{08Li}$ &   -   &        $\le 42~\cite{60Griesmer} $        & $40^{\dagger}$ \\ \hline
$85 $         & $44~\cite{97MAGMA}$ &  -  &        $\le 42~\cite{60Griesmer} $        &     ${\bf 42^*}$     \\ \hline
$86 $         & $\le 44~\cite{60Griesmer}$ &  ${\bf 44^*}$  &        $44~\cite{97MAGMA} $        &     -     \\ \hline
$87 $         &       $\le 44~\cite{08Li}$ &  ${\bf 44^*}$  &        $44~\cite{97MAGMA} $        &     -     \\ \hline
$88 $         &           $46~\cite{08Li}$ &   -   &        $44~\cite{97MAGMA} $        &     -     \\ \hline
$89 $         &           $46~\cite{08Li}$ &   -   &        $\le 44~\cite{60Griesmer} $        &     ${\bf 44^*}$     \\ \hline
$90 $         & $48~\cite{97MAGMA}$ &  -  &        $\le 46~\cite{60Griesmer} $        & $44^{\dagger}$ \\ \hline
$91 $         &           $48~\cite{08Li}$ &   -   &        $\le 46~\cite{60Griesmer} $        & $44^{\dagger}$ \\ \hline
$92 $         &           $48~\cite{08Li}$ &   -   &        $\le 46~\cite{60Griesmer} $        &     ${\bf 46^*}$     \\ \hline
$93 $         & $48~\cite{97MAGMA}$ &  -  &        $ 48~\cite{97MAGMA} $        &     -     \\ \hline
$94 $         & $\le 48~\cite{60Griesmer}$ &  ${\bf 48^*}$  &        $ 48~\cite{97MAGMA} $        &     -     \\ \hline
$95 $         &       $\le 48~\cite{08Li}$ &  ${\bf 48^*}$  &        $ 48~\cite{97MAGMA} $        &     -     \\ \hline
$96 $         &           $50~\cite{08Li}$ &   -   &        $ 48~\cite{97MAGMA} $        &    -    \\ \hline
$97 $         &           $50~\cite{08Li}$ &   -   &        $ 48~\cite{97MAGMA} $        &    -     \\ \hline
$98 $         &           $52~\cite{08Li}$ &   -   &        $\le 48~\cite{60Griesmer} $        &     ${\bf 48^*}$     \\ \hline
$99 $         &           $52~\cite{08Li}$ &   -   &        $\le 50~\cite{60Griesmer} $        & $48^{\dagger}$ \\ \hline
$100$         & $\le 52~\cite{60Griesmer}$ &  ${\bf 52^*}$  &        $\le 50~\cite{60Griesmer} $        &     ${\bf 50^*}$
\end{tabular}
\caption{$\dso(n,k)$ for $50 < n \le 100$ and $k=4,5$}
\label{table: dso(n,k)2}
\renewcommand{\arraystretch}{1}
\end{table}

\section{Preliminaries}\label{sec: prelim}
Let $GF(q)$ be a finite field with $q$ elements. We consider the case where $q=2$ only. A subspace $\C$ of $GF(q)^n$ is called a \emph{linear code} of length $n$.  
For $n,k \in \Z^+$, a $k$-dimensional linear code $\C \subset GF(q)^{n}$ is called an \emph{$[n,k]$ code}.   The elements of $\C$ are called \emph{codewords}. A \emph{generator matrix} for $\C$ is a $k\times n$ matrix $G$ whose rows form a basis for $\C$.

For $\bx = (x_1,x_2,\ldots,x_n),\by = (y_1,y_2,\ldots,y_n) \in GF(q)^n$, the \emph{ordinary inner product} $\bx \cdot \by$ of $\bx$ and $\by$ is $\sum_{i=1}^n x_iy_i$.
For a linear code $\C$, the code
\[
\C^{\perp} := \left\{ \bx \in GF(q)^n \; \middle| \; \bx \cdot \by =0\: {\mbox{for  all }} \by \in \C\right\}.
\]
is called the \emph{dual} of $\C$. A linear code $\C$ satisfying $\C \subseteq \C^\perp$ (resp. $\C = \C^\perp$) is called \emph{self-orthogonal} (abbr. SO) (resp. \emph{self-dual}). 

The \emph{(Hamming) weight} $\wt(\bx)$ of a vector $\bx$ in $GF(q)^n$ is the total number of nonzero coordinates in $\bx$. 
For $\bx,\by \in GF(q)^n$, we define the \emph{(Hamming) distance} $d(\bx,\by)$ between $\bx$ and $\by$ by the number of coordinates in which $\bx$ and $\by$ differ. The \emph{minimum distance} of a code $\C$ is the smallest nonzero distance between any two distinct codewords. 
For $n,k,d \in \Z^+$, an \emph{$[n,k,d]$ code} $\C$ is an $[n,k]$ code whose minimum distance is $d$. 
We call an $[n,k,d]$ code $\C$ \emph{optimal} if its minimum distance $d$ is the highest among all $[n,k]$ linear codes. We denote by $d(n,k)$ the minimum distance of an optimal $[n,k]$ code.
An $[n,k,d]$ SO code $\C$ is called by \emph{optimal SO} if its minimum distance $d$ is the highest among all $[n,k]$ SO codes. We denote by $\dso(n,k)$ the minimum distance of an optimal $[n,k]$ SO code. 

For $k,d \in \Z^+$,  let $n(k,d)$ be the smallest value of $n$ for which an $[n,k,d]$ code exists. 
A notable lower bound on $n(k,d)$ was obtained by Griesmer as follows.
\begin{theorem}{\rm (\cite{60Griesmer,03HufPle})} \label{thm: Griesmer bdd}
Let $\C$ be an $[n,k,d]$ code over $GF(2)$ with $k \ge 1$. Then
\[n(k,d) \ge  g(k,d) := \sum_{i=0}^{k-1}  \ceil{\frac{d}{2^i}}.\]
\end{theorem}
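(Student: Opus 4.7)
The plan is to prove the bound by induction on the dimension $k$. The base case $k=1$ is immediate: an $[n,1,d]$ code is generated by a single nonzero codeword, whose weight is $d$, so $n \geq d = g(1,d)$. For the inductive step, I would invoke the classical residual code construction. Given an $[n,k,d]$ binary code $\C$ with $k\geq 2$, pick a codeword $c\in\C$ of minimum weight $d$, and after permuting coordinates assume $c=(1,\ldots,1,0,\ldots,0)$ with $d$ leading ones; let $\mathrm{Res}(\C,c)\subseteq GF(2)^{n-d}$ denote the code obtained by puncturing $\C$ on $\mathrm{supp}(c)$.

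The key lemma I need is that $\mathrm{Res}(\C,c)$ has parameters $[n-d,\,k-1,\,d']$ with $d'\geq \lceil d/2\rceil$. For the dimension count: if $c_1\in\C$ projects to $0$, then $c_1$ is supported inside $\mathrm{supp}(c)$, so $c_1$ and $c_1+c$ restrict to complementary binary strings on $\mathrm{supp}(c)$ with weights summing to $d$; at least one must have weight $\leq d/2<d$, forcing it to be $0$ by minimality of $d$. Hence the kernel of the projection inside $\C$ equals $\{0,c\}$, so $\dim\mathrm{Res}(\C,c)=k-1$. For the minimum distance: for any nonzero $c'\in\mathrm{Res}(\C,c)$ with preimage $c_1\in\C$, both $\wt(c_1)$ and $\wt(c_1+c)$ are at least $d$, and since $c_1$ and $c_1+c$ agree on the complement of $\mathrm{supp}(c)$ and are complementary on $\mathrm{supp}(c)$, adding the two weights yields $d+2\,\wt(c')\geq 2d$, hence $\wt(c')\geq \lceil d/2\rceil$.

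Given the lemma, the induction hypothesis applied to $\mathrm{Res}(\C,c)$ gives
\[
n-d \;\geq\; g\!\left(k-1,\,\lceil d/2\rceil\right) \;=\; \sum_{i=0}^{k-2}\left\lceil\frac{\lceil d/2\rceil}{2^{i}}\right\rceil \;=\; \sum_{j=1}^{k-1}\left\lceil\frac{d}{2^{j}}\right\rceil,
\]
using the elementary identity $\lceil\lceil x\rceil/2^{i}\rceil=\lceil x/2^{i}\rceil$ for $x\geq 0$ and $i\in\Z^{+}$. Adding $d=\lceil d/2^{0}\rceil$ to both sides completes the induction.

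The main obstacle is establishing the sharp residual bound $d'\geq\lceil d/2\rceil$ with the correct ceiling, which rests entirely on the pairing $c_1\leftrightarrow c_1+c$ and on the crucial choice of $c$ as a minimum weight codeword (so that $\wt(c)=d<2d$ ensures the projection kernel is exactly one-dimensional rather than larger). Everything else is routine bookkeeping with ceilings.
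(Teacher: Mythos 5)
Your proof is correct. The paper states this theorem without proof, citing \cite{60Griesmer} and \cite{03HufPle}, and your residual-code induction (one-dimensional projection kernel $\{0,c\}$, the weight identity $\wt(c_1)+\wt(c_1+c)=d+2\wt(c')$, and the ceiling identity $\ceil{\ceil{d/2}/2^i}=\ceil{d/2^{i+1}}$) is exactly the standard argument given in those references; the only step you leave implicit---replacing the actual residual distance $d'$ by the lower bound $\ceil{d/2}$ inside the induction hypothesis, which requires the monotonicity of each ceiling term in $d$---is indeed routine.
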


Let us collect some required notations. For any $[n,k]$ code $\C$ generated by $G$, we denote by $\nmr_i(G)$ the $i$th row of $G$ from the top for $1\le i \le k$  and $\nmc_j(G)$ the $j$th column of $G$ from the left for $1 \le j \le n$. If there is no danger of confusion to the matrix $G$, then we will write $\nmr_i$ (resp. $\nmc_j$) for $\nmr_i(G)$ (resp. $\nmc_j(G)$). For a positive integer $s$, we denote by $sG = (G,G,\ldots,G)$ the juxtaposition of $s$ copies of $G$.

For $k \in \Z^+$, we denote by $\calS_k$ the $[2^k - 1, k]$ simplex code and $\nmH_k$ the generator matrix of $\calS_k$ whose $i$th column is the $k$-dimensional binary representation of $i$ for $1 \le i \le 2^k -1$. For example, the first column is written as $[0,0,\ldots,0,1]^T$.
For $i = 1,2,\ldots,2^k -1$, we let
\[
\sfh_i := \text{the $i$th column vector of $\nmH_k$ (from the left hand side)}.
\]

Let $m_1,m_2,\ldots,m_t$ be nonnegative integers and $a, n \in \Z^+$. If $n\equiv  m_1,m_2,\ldots,\text{or } m_t \pmod{a}$, then we write
\[
n\equiv_a  m_1,m_2,\ldots,m_t.
\]
Otherwise, we denote
\[
n \not\equiv_a  m_1,m_2,\ldots,m_t.
\]
If $n \equiv m_1 \pmod{a},~n \equiv m_2 \pmod{a},\ldots, \text{ and } n \equiv m_t \pmod{a}$, then we write
\[
n \equiv_a m_1  \equiv_a m_2 \cdots  \equiv_a m_t.
\] 
For a statement $P$, we define
\[
\delta(P) := \begin{cases}
1 & \text{if $P$ is true},\\
0 & \text{otherwise}.
\end{cases}
\]
For instance, for nonnegative integers $m_1,m_2,\ldots,m_t$ and $a, n \in \Z^+$, 
\[
\delta(n\equiv_a  m_1,m_2,\ldots,m_t) = \begin{cases}
1 & \text{if $n\equiv_a m_1,m_2,\ldots,m_t$},\\
0 & \text{otherwise}.
\end{cases}
\]

\section{Characterizations for self-orthogonality}\label{sec: equiv cond}
In this section, we obtain  characterizations  for self-orthogonality by reading column vectors of a generator matrix. 

For a $k \times n$ matrix $G$ and $i = 1,2,\ldots,2^k -1$, we define
\[\ell_{\sfh_i}(G) := \text{the number of $\sfh_i$ among the columns of $G$.}\]
If there is no danger of confusion to the matrix $G$, then we will write $\ell_i$ for $\ell_{\sfh_i}(G)$.

For a $k \times n$ matrix $G$ and $0 < j \le k$, we define a multiset
\begin{align*}
\bfI(j) := \left\{c_i(G) \; \middle|
\begin{array}{l}
\mathrm{(i)}~0 < i \le n,\\
\mathrm{(ii)}~\text{$c_i(G) = \sfh_{t}$ for $1\le t \le 2^k -1$}\\
\quad \text{satisfying $\floor{\frac{t}{2^{j-1}}}\equiv_2 1$}
\end{array}
\right\}.
\end{align*}
\begin{example}\label{eg: I(j)}
Let $\C_{10,3}$ be a $[10,3]$ code generated by
\[\arraycolsep=2.5pt
G_{10,3} = \left[ \begin{array}{ccccccccccc}
0&0&0&0&1&1&1&1&1&1 \\
0&1&1&1&0&0&1&1&1&1 \\
1&0&1&1&0&1&0&0&0&1 \\
\end{array}
\right].
\]
Then we have
\[
\ell_1 = \ell_2 =\ell_4 =\ell_5 = \ell_7 = 1, \quad \ell_3 = 2,\quad \text{and} \quad \ell_6 =3.
\]
and
\begin{align}\label{eq: I(1,2,3)}
\begin{aligned}
\bfI(1) &= \left\{c_1, c_3, c_4, c_6, c_{10} \right\} \\
&= \{\sfh_1, \sfh_3, \sfh_3, \sfh_5,\sfh_7\},\\
\bfI(2) &= \left\{c_2,c_3,c_4,c_7,c_8,c_9,c_{10} \right\}\\
&= \{\sfh_2, \sfh_3, \sfh_3, \sfh_6, \sfh_6, \sfh_6,\sfh_7\},\\
\bfI(3) &= \left\{ c_5, c_6, c_7, c_8, c_9, c_{10} \right\}\\
&= \{\sfh_4, \sfh_5, \sfh_6, \sfh_6, \sfh_6, \sfh_7\}.
\end{aligned}
\end{align}
\end{example}

In terms of $\bfI(j)$, we obtain the following characterization for self-orthogonality.
\begin{theorem}\label{thm: eq cond for SO}
Let $\C$ be an $[n,k]$ code generated by $G$. Then, $\C$ is SO if and only if for all $0 < j \le j' \le k$, $|\bfI(j) \cap \bfI(j')|$ is even.
\end{theorem}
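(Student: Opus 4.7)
The plan is to read off both conditions as statements about inner products of rows of $G$, so that the theorem becomes a direct reindexing of the standard self-orthogonality criterion. First I would recall that $\C$ is self-orthogonal iff $\nmr_p(G) \cdot \nmr_q(G) \equiv 0 \pmod 2$ for every $1 \le p \le q \le k$, because the ordinary inner product is bilinear and the rows of $G$ span $\C$; the diagonal case $p = q$ just says that each row has even weight.

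Next I would decode the defining condition of $\bfI(j)$. By construction of $\nmH_k$, the column $\sfh_t$ is the binary expansion of $t$ read with the most significant bit on top (so that $\sfh_1 = [0,\ldots,0,1]^T$), and consequently its $(k-j+1)$-th entry from the top equals bit $(j-1)$ of $t$, which is $1$ precisely when $\floor{t/2^{j-1}} \equiv_2 1$. Therefore $c_i(G) \in \bfI(j)$ iff the $(k-j+1)$-th coordinate of $c_i(G)$ equals $1$, i.e., iff $(\nmr_{k-j+1}(G))_i = 1$. Reading $\bfI(j)$ as the sub-multiset of columns indexed by $\{i : (\nmr_{k-j+1}(G))_i = 1\}$ (which is exactly how Example~\ref{eg: I(j)} is tallied: $\sfh_3$ appears twice in $\bfI(1)$ because $c_3 = c_4 = \sfh_3$), the multiset intersection is indexed by the common set of indices, so
\[
|\bfI(j) \cap \bfI(j')| = \#\{i : (\nmr_{k-j+1}(G))_i = (\nmr_{k-j'+1}(G))_i = 1\} = \sum_{i=1}^n (\nmr_{k-j+1}(G))_i\,(\nmr_{k-j'+1}(G))_i,
\]
which is the integer lift of the $GF(2)$ inner product of $\nmr_{k-j+1}(G)$ and $\nmr_{k-j'+1}(G)$.

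Combining the two observations, $|\bfI(j) \cap \bfI(j')|$ is even iff $\nmr_{k-j+1}(G) \cdot \nmr_{k-j'+1}(G) = 0$ over $GF(2)$, and the substitution $(p,q) = (k-j'+1,\, k-j+1)$ is a bijection between $\{(j,j') : 1 \le j \le j' \le k\}$ and $\{(p,q) : 1 \le p \le q \le k\}$. Applying this bijection converts the hypothesis of the theorem into the self-orthogonality criterion of the first paragraph, giving both implications at once. The only point that needs real care is pinning down the convention for the multiset intersection consistently with Example~\ref{eg: I(j)} (counting each repeated column with its column multiplicity); once that bookkeeping is fixed, the rest is a bit-by-bit translation using the binary indexing of $\nmH_k$, with no further obstacle.
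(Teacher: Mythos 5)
Your proof is correct and follows essentially the same route as the paper's: both identify membership of $c_i(G)$ in $\bfI(j)$ with the $(k-j+1)$-th row of $G$ having a $1$ in position $i$, deduce $|\bfI(j)\cap\bfI(j')| = \nmr_{k-j+1}\cdot\nmr_{k-j'+1}$ as an integer inner product, and invoke the standard criterion that self-orthogonality is equivalent to all pairwise (including diagonal) row inner products vanishing mod $2$. Your extra care about the multiset/index convention and the bijection $(j,j')\mapsto(k-j'+1,k-j+1)$ is a welcome clarification but does not change the argument.
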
 
\begin{proof}
Note that for $i = 1,2,\ldots, 2^k-1$,
\[\arraycolsep=2.5pt
\sfh_i = \left[\begin{array}{c}
\delta\left(\floor{\frac{i}{2^{k-1}}} \equiv_2 1\right) \\[1ex]
\delta\left(\floor{\frac{i}{2^{k-2}}} \equiv_2 1\right) \\
\vdots \\
\delta\left(\floor{\frac{i}{2^{0}}}  \equiv_2 1\right)
\end{array} \right],
\]
that is, $\floor{\frac{i}{2^{j-1}}}$ is odd for $i = 1,2,\ldots, 2^k-1$  and $j = 1,2,\ldots,k$ if and only if the $j$th component of $\sfh_i$ from the bottom is $1$.
Therefore, $|\bfI(j) \cap \bfI(j')|$ counts the number of columns of $G$ whose $j$th and $j'$th components are both $1$.
Thus, in terms of row indices, we have
\[
|\bfI(j) \cap \bfI(j')| = \nmr_{k - j + 1} \cdot \nmr_{k - j' + 1}.
\]
for $0 < j \le j' \le k$ and hence, our assertion follows.
\end{proof}

\begin{example}\label{eg: SO example}\hfill \\
\noindent (1)  Let $\C_{10,3}$ be the $[10,3]$ code generated by $G_{10,3}$ appeared in Example~\ref{eg: I(j)}. 
From Equation~\eqref{eq: I(1,2,3)}, we see that 
\[
|\bfI(1) \cap \bfI(1)| =  |\{c_1, c_3, c_4, c_6, c_{10}\}|= 5
\]
is odd. Thus, by Theorem \ref{thm: eq cond for SO}, $\C_{10,3}$ is not SO.  \medskip

\noindent (2) Let $\tC_{10,3}$ be an $[11,3]$ code generated by
\[\arraycolsep=2.5pt
\tG_{10,3} = \left[ \begin{array}{cccccccccc||c}
0&0&0&0&1&1&1&1&1&1&0 \\
0&1&1&1&0&0&1&1&1&1&1 \\
1&0&1&1&0&1&0&0&0&1&1 \\
\end{array}
\right].
\]
It is easy to obtain $\bfI(j)$ as follows:
\begin{align*}
\bfI(1) &= \left\{ c_1,c_3,c_4,c_6,c_{10},c_{11} \right\},\\
\bfI(2) &= \left\{ c_2,c_3,c_4,c_7,c_8,c_9,c_{10},c_{11} \right\},\\
\bfI(3) &= \left\{  c_5, c_6, c_7, c_8, c_9, c_{10} \right\}.
\end{align*}
Therefore, we have
\[
\begin{array}{ll}
\bfI(1) \cap \bfI(2)= \left\{ c_3,c_4,c_{10},c_{11} \right\},\\
\bfI(1) \cap \bfI(3)= \left\{ c_6,c_{10} \right\},\\
\bfI(2) \cap \bfI(3)= \left\{ c_7,c_8,c_9,c_{10} \right\}.
\end{array}
\]
Since $|\bfI(j) \cap \bfI(j')|$ is even for all $0 < j \le j' \le 3$, by Theorem \ref{thm: eq cond for SO}, $\tC_{10,3}$ is SO. 
\end{example}

\subsection{Self-orthogonality for dimension 2 and 3}\label{subsec: SO code gen by 2,3 codewords}
In this subsection, we characterize self-orthogonality for $[n,2]$ codes in terms of $\ell_i$ using Theorem~\ref{thm: eq cond for SO}. For $[n,3]$ codes, we introduce the result in~\cite{06bou} which gives a characterization for self-orthogonality and we reprove it using Theorem~\ref{thm: eq cond for SO}.

For $[n,2]$ codes, we obtain the following characterization for self-orthogonality.

\begin{lemma}\label{lem: 2 by n SO condtion}
Let $\C$ be an $[n,2]$ code. The code $\C$ is SO if and only if $\C$ is generated by a matrix $G$ satisfying
\begin{align}\label{eq: equiv cond for SO dim 2}
\ell_1  \equiv_2 \ell_2 \equiv_2 \ell_3 \equiv_2 0.
\end{align}
\end{lemma}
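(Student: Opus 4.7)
The plan is to deduce Lemma~\ref{lem: 2 by n SO condtion} as an immediate specialization of Theorem~\ref{thm: eq cond for SO} to $k=2$, by unpacking the multisets $\bfI(j)$ explicitly in terms of the column counts $\ell_1,\ell_2,\ell_3$.

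First I would write down the three nonzero binary columns of length $2$: for $k=2$ we have $\sfh_1 = [0,1]^T$, $\sfh_2 = [1,0]^T$, and $\sfh_3 = [1,1]^T$. Then I would compute the two index multisets. By the definition of $\bfI(j)$, the multiset $\bfI(1)$ collects those columns $\sfh_t$ with $\lfloor t/2^0\rfloor = t$ odd, namely $t \in\{1,3\}$, while $\bfI(2)$ collects those with $\lfloor t/2\rfloor$ odd, namely $t\in\{2,3\}$. Hence, for any generator matrix $G$ of $\C$,
\begin{align*}
|\bfI(1)| &= \ell_1 + \ell_3,\\
|\bfI(2)| &= \ell_2 + \ell_3,\\
|\bfI(1)\cap\bfI(2)| &= \ell_3.
\end{align*}

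Now I would invoke Theorem~\ref{thm: eq cond for SO}: $\C$ is SO if and only if every $|\bfI(j)\cap\bfI(j')|$ with $0<j\le j'\le 2$ is even. In our case this is precisely the three conditions $\ell_1+\ell_3\equiv_2 0$, $\ell_3\equiv_2 0$, and $\ell_2+\ell_3\equiv_2 0$, which are equivalent to \eqref{eq: equiv cond for SO dim 2}. For the ``only if'' direction, note that these conditions depend on the chosen generator matrix, but if $\C$ is SO then they hold for \emph{every} generator matrix; in particular such a $G$ exists. Conversely, if some generator matrix $G$ satisfies \eqref{eq: equiv cond for SO dim 2}, Theorem~\ref{thm: eq cond for SO} immediately yields self-orthogonality of $\C$.

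There is essentially no obstacle here: the entire content of the lemma is bookkeeping of which columns contribute to $\bfI(1)$ and $\bfI(2)$ when $k=2$. The only point worth stating carefully is that the parity conditions must hold with respect to the \emph{same} generator matrix $G$, which is built into the quantifier structure of the statement.
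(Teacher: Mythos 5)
Your proposal is correct and follows essentially the same route as the paper: both specialize Theorem~\ref{thm: eq cond for SO} to $k=2$, compute $|\bfI(1)|=\ell_1+\ell_3$, $|\bfI(2)|=\ell_2+\ell_3$, $|\bfI(1)\cap\bfI(2)|=\ell_3$, and observe that the three parities being even is equivalent to \eqref{eq: equiv cond for SO dim 2}. Your extra remark about the quantifier over generator matrices is a fair point of care but does not change the argument.
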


\begin{proof}
Let $G$ be a generator matrix of $\C$. By Theorem~\ref{thm: eq cond for SO}, it suffices to show that Equation~\eqref{eq: equiv cond for SO dim 2} holds if and only if for all $0 < j \le j' \le 2$, $|\bfI(j) \cap \bfI(j')|$ is even. By definition of~$\bfI(j)$, we obtain that
\begin{align*}
&|\bfI(1)| = \ell_1 + \ell_3, \ \  |\bfI(2)| = \ell_2 + \ell_3, \ \  \text{and} \ \   |\bfI(1) \cap \bfI(2)| = \ell_3.
\end{align*}
This shows that for all $0 < j \le j' \le 2$, $|\bfI(j) \cap \bfI(j')|$ is even if and only if
\[
\ell_1  \equiv_2 \ell_2 \equiv_2 \ell_3 \equiv_2 0.\qedhere
\]
\end{proof}

For $[n,3]$ codes, a characterization for the self-orthogonality was introduced in~\cite{06bou}. We reprove this characterization using Theorem~\ref{thm: eq cond for SO}.

\begin{lemma}{\rm \cite[Lemma 3]{06bou}}\label{lem: 3 by n SO condtion}
Let $\C$ be an $[n,3]$ code. The code $\C$ is SO if and only if $\C$ is generated by a matrix $G$ satisfying
\begin{align}\label{eq: equiv cond for SO dim 3}
\ell_1 \equiv_2 \ell_2 \equiv_2 \cdots \equiv_2 \ell_7.
\end{align}
\end{lemma}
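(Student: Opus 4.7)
The plan is to apply Theorem~\ref{thm: eq cond for SO} with $k=3$ and translate the six parity conditions $|\bfI(j)\cap\bfI(j')| \equiv_2 0$ (for $0 < j \le j' \le 3$) into a linear system in the column-type counts $\ell_1,\ldots,\ell_7$. For each $i$ with $1 \le i \le 7$, the column $\sfh_i$ contributes to $\bfI(j)$ precisely when the $j$-th bit from the bottom of the binary representation of $i$ is $1$, so one reads off directly
\begin{align*}
|\bfI(1)| &= \ell_1+\ell_3+\ell_5+\ell_7,
& |\bfI(1)\cap\bfI(2)| &= \ell_3+\ell_7,\\
|\bfI(2)| &= \ell_2+\ell_3+\ell_6+\ell_7,
& |\bfI(1)\cap\bfI(3)| &= \ell_5+\ell_7,\\
|\bfI(3)| &= \ell_4+\ell_5+\ell_6+\ell_7,
& |\bfI(2)\cap\bfI(3)| &= \ell_6+\ell_7.
\end{align*}

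Next, I would solve the corresponding system of six parity conditions over $GF(2)$. The three ``off-diagonal'' conditions force $\ell_3\equiv_2\ell_5\equiv_2\ell_6\equiv_2\ell_7$. Substituting into the three ``diagonal'' conditions collapses each of them to $\ell_1\equiv_2\ell_7$, $\ell_2\equiv_2\ell_7$, and $\ell_4\equiv_2\ell_7$, respectively, which together yield $\ell_1\equiv_2\ell_2\equiv_2\cdots\equiv_2\ell_7$. This gives the forward direction.

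For the converse, if all seven $\ell_i$ share a common parity, then every expression appearing above is either a sum of two or a sum of four quantities of equal parity, and therefore even. Hence the six parity conditions of Theorem~\ref{thm: eq cond for SO} are satisfied and $\C$ is self-orthogonal.

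The only minor subtlety is that the lemma speaks of the existence of \emph{some} generator matrix satisfying Equation~\eqref{eq: equiv cond for SO dim 3}; since the quantities $\ell_i$ depend on the chosen generator matrix, I would point out that Theorem~\ref{thm: eq cond for SO} already holds for the particular matrix $G$ used to compute $\bfI(j)$, so the equivalence is proved for that $G$, and the ``if and only if'' in the lemma follows by simply instantiating this $G$. No deeper obstacle is expected: the entire argument is a direct specialization of Theorem~\ref{thm: eq cond for SO} plus a small $\mathbb{F}_2$-linear algebra computation.
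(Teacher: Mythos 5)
Your proposal is correct and follows essentially the same route as the paper: both specialize Theorem~\ref{thm: eq cond for SO} to $k=3$, express the six quantities $|\bfI(j)\cap\bfI(j')|$ in terms of the $\ell_i$, and observe that the six parity conditions are equivalent to $\ell_1\equiv_2\cdots\equiv_2\ell_7$. The paper simply asserts this last equivalence, while you spell out the $GF(2)$ elimination and the remark about the choice of generator matrix; both of these are harmless elaborations of the same argument.
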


\begin{proof}
Let $G$ be a generator matrix of $\C$. By Theorem~\ref{thm: eq cond for SO}, it suffices to show that Equation~\eqref{eq: equiv cond for SO dim 3} holds if and only if for all $0 < j \le j' \le 3$, $|\bfI(j) \cap \bfI(j')|$ is even. By definition of~$\bfI(j)$, we obtain that
\begin{align*}
&|\bfI(1)| = \ell_1 + \ell_3 + \ell_5 + \ell_7, \quad  |\bfI(1) \cap \bfI(2)| = \ell_3 + \ell_7, \\
& |\bfI(2)| = \ell_2 + \ell_3 + \ell_6 + \ell_7, \quad |\bfI(1) \cap \bfI(3)| = \ell_5 + \ell_7,\\
&|\bfI(3)| = \ell_4+ \ell_5 + \ell_6 + \ell_7, \quad|\bfI(2) \cap \bfI(3)| = \ell_6 + \ell_7.
\end{align*}
This shows that for all $0 < j \le j' \le 3$, $|\bfI(j) \cap \bfI(j')|$ is even if and only if
\[
\ell_1 \equiv_2 \ell_2\equiv_2 \cdots \equiv_2 \ell_7. \qedhere
\]
\end{proof}

\subsection{Self-orthogonal codes of dim 4}
In this subsection, we provide a characterization for self-orthogonality for $[n,4]$ codes in terms of congruence equations on $\ell_i$'s.

Recall that for an $[n,4]$ code $\C$ generated by $G$,
\begin{align*}
\bfI(1) & = \{ c_i  \mid c_i = \sfh_t ~\text{for $t \in \{1,3,5,7,9,11,13,15\}$} \},\\
\bfI(2) & = \{ c_i  \mid c_i = \sfh_t ~\text{for $t \in \{2,3,6,7,10,11,14,15\}$} \},\\
\bfI(3) & = \{ c_i  \mid c_i = \sfh_t ~\text{for $t \in \{4,5,6,7,12,13,14,15\}$} \},\\
\bfI(4) & = \{ c_i  \mid c_i = \sfh_t ~\text{for $t \in \{8,9,10,11,12,13,14,15\}$} \}.
\end{align*}
Therefore, we have
\begin{align}
{\small\begin{aligned}\label{eq: inner product of rows dim 4}
& |\bfI(1)| = \ell_{1} + \ell_{3} + \ell_{5} + \ell_{7} + \ell_{9} + \ell_{11} + \ell_{13} + \ell_{15},\\
& |\bfI(2)| = \ell_{2} + \ell_{3} + \ell_{6} + \ell_{7} + \ell_{10} + \ell_{11} + \ell_{14} + \ell_{15}, \\
& |\bfI(3)| = \ell_{4} + \ell_{5} + \ell_{6} + \ell_{7} + \ell_{12} + \ell_{13} + \ell_{14} + \ell_{15}, \\
& |\bfI(4)| = \ell_{8} + \ell_{9} + \ell_{10} + \ell_{11} + \ell_{12} + \ell_{13} + \ell_{14} + \ell_{15},\\
& |\bfI(1) \cap \bfI(2)| = \ell_{3} + \ell_{7} + \ell_{11} + \ell_{15},\\
& |\bfI(1) \cap \bfI(3)| = \ell_{5} + \ell_{7} + \ell_{13} + \ell_{15},\\
& |\bfI(1) \cap \bfI(4)| = \ell_{9} + \ell_{11} + \ell_{13} + \ell_{15},\\
& |\bfI(2) \cap \bfI(3)| = \ell_{6} + \ell_{7} + \ell_{14} + \ell_{15},\\
& |\bfI(2) \cap \bfI(4)| = \ell_{10} + \ell_{11} + \ell_{14} + \ell_{15},\\
& |\bfI(3) \cap \bfI(4)| = \ell_{12} + \ell_{13} + \ell_{14} + \ell_{15}.
\end{aligned}}
\end{align}

We define the sets $\calP_s^{(1)}$ and $\calP_s^{(2)}$ for $s = 1,2,\ldots, 15$ as in Table~\ref{table: calP}.
Although we only use the union $\calP_s^{(1)} \cup \calP_s^{(2)}$ of these sets in the following lemma, we define them separately for later use.

\begin{lemma}\label{lem: equiv cond for SO dim 4} 
Let $\C$ be an $[n,4]$ code.
If $\C$ is SO, then any generator matrix $G$ of $\C$ satisfies that
\begin{align}\label{eq: equiv cond for SO dim 4 in lem}
\ell_{i_1} + \ell_{i_2} \equiv_2 \ell_{j_1} + \ell_{j_2}
\end{align}
for $s = 1,2,\ldots, 15$ and all $(i_1,i_2),(j_1,j_2) \in \calP_s^{(1)} \cup \calP_s^{(2)}$. 
\end{lemma}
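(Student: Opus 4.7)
The plan is to translate Theorem~\ref{thm: eq cond for SO} into a system of mod-$2$ linear equations in the variables $\ell_1,\ldots,\ell_{15}$ and then verify the assertion by linear algebra over $GF(2)$. Assuming $\C$ is SO, Theorem~\ref{thm: eq cond for SO} together with Equation~\eqref{eq: inner product of rows dim 4} produces ten mod-$2$ linear relations, one for each pair $0 < j \le j' \le 4$, each of which asserts that a certain subset sum of the $\ell_i$'s is even. Collect these into a $10 \times 15$ matrix $A$ over $GF(2)$ whose rows are the characteristic vectors of those subsets; the SO condition then reads $A\boldsymbol{\ell} \equiv 0 \pmod{2}$ for $\boldsymbol{\ell} = (\ell_1,\ldots,\ell_{15})^T$.

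The conclusion to prove is that for each $s \in \{1,\ldots,15\}$ and any two pairs $(i_1,i_2),(j_1,j_2) \in \calP_s^{(1)} \cup \calP_s^{(2)}$, the indicator vector in $GF(2)^{15}$ of the symmetric difference of $\{i_1,i_2\}$ and $\{j_1,j_2\}$ lies in the row span of $A$. The strategy is to fix, for each $s$, a base pair $p_s^\star \in \calP_s^{(1)} \cup \calP_s^{(2)}$ and, for every other pair $p$ in this set, to exhibit an explicit $GF(2)$-linear combination of rows of $A$ equal to the indicator vector of the symmetric difference of $p_s^\star$ and $p$. Since ``being congruent mod $2$'' is transitive, once the claim is verified between every $p$ and its base $p_s^\star$, the pairwise congruences in the statement follow at once.

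The main obstacle is bookkeeping rather than conceptual difficulty: fifteen values of $s$, each with its own list of pairs recorded in Table~\ref{table: calP}, and a distinct witnessing combination for each non-base pair. To keep the case analysis manageable, I would exploit the symmetry that permuting the rows of $G$ simultaneously permutes the ten SO relations and the indices $i$ of the $\ell_i$'s (via the induced permutation of the columns $\sfh_1,\ldots,\sfh_{15}$ of $\nmH_4$). This collapses the fifteen $s$-cases into a handful of archetypes; within each archetype, the required combinations can be read off directly from Equation~\eqref{eq: inner product of rows dim 4}, typically as a sum of at most two or three of the listed quantities $|\bfI(j)|$ and $|\bfI(j) \cap \bfI(j')|$.
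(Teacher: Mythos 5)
Your proposal is correct and follows essentially the same route as the paper: the paper likewise takes the ten parity relations $|\bfI(j)\cap\bfI(j')|\equiv_2 0$ from Theorem~\ref{thm: eq cond for SO} together with Equation~\eqref{eq: inner product of rows dim 4}, and for each required congruence exhibits a sum of one or two of these quantities equal mod $2$ to $\ell_{i_1}+\ell_{i_2}+\ell_{j_1}+\ell_{j_2}$ (worked out explicitly for $s=1$, with the remaining $s$ dismissed as analogous). Your row-span formulation and the $S_4$ row-permutation symmetry collapsing the fifteen values of $s$ into a few weight classes is a tidier way to organize the same verification, not a different argument.
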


\begin{table}[ht]
\centering
\renewcommand{\arraystretch}{1.5}
\begin{tabular}{c|c|c}
$s$ & $\calP_s^{(1)}$ & $\calP_s^{(2)}$  \\ \hline
$1$ & $\{(2,3),(4,5),(6,7)\}$ & $\{(8,9),(10,11),(12,13),(14,15) \}$  \\ \hline
$2$ & $\{(1,3),(4,6),(5,7)\}$ & $\{(8,10),(9,11),(12,14),(13,15) \}$  \\ \hline
$3$ & $\{(1,2),(4,7),(5,6)\}$ & $\{(8,11),(9,10),(12,15),(13,14) \}$  \\ \hline
$4$ & $\{(1,5),(2,6),(3,7)\}$ & $\{(8,12),(9,13),(10,14),(11,15) \}$  \\ \hline
$5$ & $\{(1,4),(2,7),(3,6)\}$ & $\{(8,13),(9,12),(10,15),(11,14) \}$  \\ \hline
$6$ & $\{(1,7),(2,4),(3,5)\}$ & $\{(8,14),(9,15),(10,12),(11,13) \}$  \\ \hline
$7$ & $\{(1,6),(2,5),(3,4)\}$ & $\{(8,15),(9,14),(10,13),(11,12) \}$  \\ \hline
$8$ & $\{(1,9),(2,10),(3,11)\}$ & $\{(4,12),(5,13),(6,14),(7,15) \}$  \\ \hline
$9$ & $\{(1,8),(2,11),(3,10)\}$ & $\{(4,13),(5,12),(6,15),(7,14) \}$  \\ \hline
$10$ & $\{(1,11),(2,8),(3,9)\}$ & $\{(4,14),(5,15),(6,12),(7,13) \}$  \\ \hline
$11$ & $\{(1,10),(2,9),(3,8)\}$ & $\{(4,15),(5,14),(6,13),(7,12) \}$  \\ \hline
$12$ & $\{(1,13),(2,14),(3,15)\}$ & $\{(4,8),(5,9),(6,10),(7,11) \}$  \\ \hline
$13$ & $\{(1,12),(2,15),(3,14)\}$ & $\{(4,9),(5,8),(6,11),(7,10) \}$  \\ \hline
$14$ & $\{(1,15),(2,12),(3,13)\}$ & $\{(4,10),(5,11),(6,8),(7,9) \}$  \\ \hline
$15$ & $\{(1,14),(2,13),(3,12)\}$ & $\{(4,11),(5,10),(6,9),(7,8) \}$ 
\end{tabular}
\caption{$\calP_s^{(t)}$ for $s=1,2,\ldots,15$ and $t = 1,2$}
\label{table: calP}
\end{table}

\begin{proof}
Since $\C$ is SO, for all $0 < j \le j' \le 4$, $|\bfI(j) \cap \bfI(j')| \equiv_2 0$ by Theorem~\ref{thm: eq cond for SO}. Thus, by Equation~\eqref{eq: inner product of rows dim 4}, we obtain the following:
\begin{itemize}
\item Since $\ell_{2} + \ell_{3} + \ell_{6} + \ell_{7} \equiv_2 |\bfI(2)| + |\bfI(2) \cap \bfI(4)| \equiv_2 0$, we have $\ell_{2} + \ell_{3} \equiv_2 \ell_{6} + \ell_{7}$.
\item Since $\ell_{4} + \ell_{5} + \ell_{6} + \ell_{7} \equiv_2 |\bfI(3)| + |\bfI(3) \cap \bfI(4)| \equiv_2 0$, we have $\ell_{4} + \ell_{5} \equiv_2 \ell_{6} + \ell_{7}$.
\item Since $\ell_6 + \ell_7 + \ell_{14} + \ell_{15} \equiv_2 |\bfI(2) \cap \bfI(3)| \equiv_2 0$, we have $\ell_6 + \ell_7 \equiv_2 \ell_{14} + \ell_{15}$.
\item Since $\ell_{10} + \ell_{11} + \ell_{14} + \ell_{15} \equiv_2 |\bfI(2) \cap \bfI(4)| \equiv_2 0$, we have $\ell_{10} + \ell_{11} \equiv_2 \ell_{14} + \ell_{15}$.
\item Since $\ell_{12} + \ell_{13} + \ell_{14} + \ell_{15} \equiv_2 |\bfI(3) \cap \bfI(4)| \equiv_2 0$, we have $\ell_{12} + \ell_{13} \equiv_2 \ell_{14} + \ell_{15}$.
\item Since $\ell_{8} + \ell_{9} + \ell_{10} + \ell_{11} \equiv_2 |\bfI(4)| + |\bfI(3) \cap \bfI(4)| \equiv_2 0$, we have $\ell_{8} + \ell_{9} \equiv_2 \ell_{10} + \ell_{11}$.
\end{itemize}
This implies the case $s = 1$ since Equation~\eqref{eq: equiv cond for SO dim 4 in lem} can be simply written as
\begin{align*}
&\ell_2 + \ell_3 \equiv_2 \ell_4 +\ell_5 \equiv_2 \ell_6 + \ell_7 \equiv_2 \ell_8 + \ell_9 \\
&\equiv_2 \ell_{10} + \ell_{11} \equiv_2 \ell_{12} + \ell_{13} \equiv_2 \ell_{14} + \ell_{15}.
\end{align*}
The rest cases can be shown in the same manner.
\end{proof}
\begin{remark}\label{rem: prop of calP}\hfill 
\begin{enumerate}[label = {\rm (\arabic*)}]
\item For $s \in \{1,2,\ldots,7\}$, 
\[
\{i \in \Z \mid \text{$i$ appears in $\calP_s^{(1)}$}\} = \{1,2,\ldots,7\} \setminus\{s\}.
\]
\item For any $i \neq j \in \{8,9,\ldots, 15\}$, there exists $s \in \{1,2,\ldots,7\}$ such that $(i,j) \in \calP_s^{(2)}$.
\end{enumerate}
\end{remark}

Now we are ready to introduce our characterization for self-orthogonality.

\begin{theorem}\label{thm: equiv cond for SO dim 4}
Let $\C$ be an $[n,4]$ code generated by $G$. The code $\C$ is SO  if and only if there is $s \in \{1,2,\ldots, 15\}$ such that for each $t = 1,2$,
\begin{align}\label{eq: cong eqn for SO dim 4}
\ell_i \equiv_2 \ell_j \quad \text{for $i,j \in \calI_s^{(t)}$}.
\end{align}
Here, $\calI_s^{(t)}$'s are sets given in Table~\ref{table: calI}.
\end{theorem}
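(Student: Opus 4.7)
The plan is to combine Theorem~\ref{thm: eq cond for SO} (which characterizes SO via $|\bfI(j) \cap \bfI(j')| \equiv_2 0$ for all $1 \le j \le j' \le 4$) with the pair-sum conditions from Lemma~\ref{lem: equiv cond for SO dim 4}, and to sharpen them into the individual congruences required by the theorem.

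For the reverse direction, I would suppose that for some $s$, $\ell_i \equiv_2 \ell_j$ whenever $i, j \in \calI_s^{(t)}$ for each $t = 1, 2$, and write $c_t$ for the common parity on $\calI_s^{(t)}$. The task then reduces to substituting these parities into each of the ten expressions in Equation~\eqref{eq: inner product of rows dim 4} and observing that $|\bfI(j) \cap \bfI(j')|$ collapses modulo $2$ to $n_1 c_1 + n_2 c_2$, where $n_t$ counts the number of indices summed in that expression that lie in $\calI_s^{(t)}$. By the construction of the sets in Table~\ref{table: calI}, each $n_t$ is even, so every intersection is even and $\C$ is SO by Theorem~\ref{thm: eq cond for SO}.

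For the forward direction, assume $\C$ is SO. Lemma~\ref{lem: equiv cond for SO dim 4} then provides, for every $s \in \{1, \ldots, 15\}$, a common value $\epsilon_s \in GF(2)$ such that every pair sum $\ell_{i_1} + \ell_{i_2}$ with $(i_1, i_2) \in \calP_s^{(1)} \cup \calP_s^{(2)}$ equals $\epsilon_s$ modulo~$2$. The key observation is that $\calP_s^{(1)} \cup \calP_s^{(2)}$ is precisely the family of unordered pairs $\{i_1, i_2\}$ with $\sfh_{i_1} + \sfh_{i_2} = \sfh_s$. Defining $g : \{1, \ldots, 15\} \to GF(2)$ by $g(i) := \ell_i \bmod 2$ and setting $g(0) := 0$ by convention, the pair-sum condition rewrites as $g(v) + g(w) = \epsilon_{v + w}$ for all distinct nonzero $v, w \in GF(2)^4$. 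A short inductive argument then forces $g$ into the affine form $g(v) = a \cdot v + c$ for some $a \in GF(2)^4$ and $c \in GF(2)$. If $a = 0$, the conclusion holds for any $s$; otherwise, picking $s$ so that $\sfh_s = a$, we get $\ell_i \bmod 2 = \sfh_s \cdot \sfh_i + c$, which is constant on each level set of $v \mapsto \sfh_s \cdot v$, namely on each $\calI_s^{(t)}$.

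The main obstacle will be extracting the affine form of $g$ cleanly from the 15 pair-sum conditions. My plan is to define $a$ coordinatewise by $a_j := \epsilon_{e_j}$ where $e_j = \sfh_{2^{4-j}}$ is a standard basis vector, verify via a triangle identity (applying Lemma~\ref{lem: equiv cond for SO dim 4} to three collinear nonzero points in $GF(2)^4$) that the map $s \mapsto \epsilon_s$ is linear, and then verify $g(v) = a \cdot v + c$ by induction on the Hamming weight of $v$. Finally, matching the resulting level sets to the explicit sets $\calI_s^{(t)}$ listed in Table~\ref{table: calI} is a routine case check across the fifteen values of $s$.
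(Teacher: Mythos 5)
Your proposal is sound, and its forward direction is a genuinely different argument from the paper's. The paper proves $(\Rightarrow)$ by a heavy combinatorial case analysis: it extracts an $8$-subset $I$ of indices of equal parity, isolates a $4$-subset $S$ of $I$ inside $\{1,\dots,7\}$ or $\{8,\dots,15\}$, establishes four Claims about when $\calP_s^{(1)}$ or $\calP_s^{(2)}$ meets $S$, splits into four cases according to $|I\cap\{1,\dots,7\}|$, and finishes with an explicit table of $14$ exceptional pairs $(\{i_1,\dots,i_4\},\{i_5,\dots,i_8\})$ each matched by hand to a row of Table~\ref{table: calI}. You instead observe that $\calP_s^{(1)}\cup\calP_s^{(2)}$ is exactly the set of pairs with $\sfh_{i_1}+\sfh_{i_2}=\sfh_s$ (true; every row of Table~\ref{table: calP} consists of the pairs XOR-ing to $s$), so Lemma~\ref{lem: equiv cond for SO dim 4} says $g(v)+g(w)=\epsilon_{v+w}$; the triangle identity then makes $u\mapsto\epsilon_u$ linear, forcing $g$ affine, and the level sets of a nonzero functional are precisely one of the fifteen partitions $\{\calI_s^{(1)}\cup\{0\},\,\calI_s^{(2)}\}$. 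This is shorter, structural, and explains \emph{why} Table~\ref{table: calI} is what it is, at the cost of needing Lemma~\ref{lem: equiv cond for SO dim 4} for all fifteen values of $s$ (the paper's $(\Rightarrow)$ also uses several of them). Your $(\Leftarrow)$ direction is essentially the paper's substitution argument, repackaged via the (correct) parity count $n_1c_1+n_2c_2$ with $n_t$ even, which follows because each index set in Equation~\eqref{eq: inner product of rows dim 4} is an affine subspace and each $\calI_s^{(1)}\cup\{0\}$ is a hyperplane. One indexing slip to fix: the partition in row $s$ of Table~\ref{table: calI} is the level-set partition of $v\mapsto a_s\cdot v$ for a functional $a_s$ that is generally \emph{not} $\sfh_s$ (e.g.\ $\calI_1^{(1)}\cup\{0\}=\{0,1,\dots,7\}$ is the kernel of $\sfh_8$, not of $\sfh_1$; the correspondence $s\mapsto a_s$ is a bit-reversal-type permutation). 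So ``pick $s$ with $\sfh_s=a$'' should read ``pick the unique $s$ whose row of Table~\ref{table: calI} realizes the level sets of $a$''; since the fifteen rows exhaust the fifteen hyperplane partitions, such an $s$ exists, and your closing remark about matching level sets to the table already covers this check.
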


\begin{table}[ht]
\centering
\renewcommand{\arraystretch}{1.5}
\begin{tabular}{c|c|c}
$s$ & $\calI_s^{(1)}$ & $\calI_s^{(2)}$  \\ \hline
$1$ & $\{1,2,3,4,5,6,7\}$ & $\{8,9,10,11,12,13,14,15\}$ \\ \hline
$2$ & $\{1,2,3,8,9,10,11\}$ & $\{4,5,6,7,12,13,14,15\}$ \\ \hline
$3$ & $\{1,2,3,12,13,14,15\}$ & $\{4,5,6,7,8,9,10,11\}$ \\ \hline
$4$ & $\{1,4,5,8,9,12,13\}$ & $\{2,3,6,7,10,11,14,15\}$ \\ \hline
$5$ & $\{1,4,5,10,11,14,15\}$ & $\{2,3,6,7,8,9,12,13\}$ \\ \hline
$6$ & $\{1,6,7,8,9,14,15\}$ & $\{2,3,4,5,10,11,12,13\}$ \\ \hline
$7$ & $\{1,6,7,10,11,12,13\}$ & $\{2,3,4,5,8,9,14,15\}$ \\ \hline
$8$ & $\{2,4,6,8,10,12,14\}$ & $\{1,3,5,7,9,11,13,15\}$ \\ \hline
$9$ & $\{2,4,6,9,11,13,15\}$ & $\{1,3,5,7,8,10,12,14\}$ \\ \hline
$10$ & $\{2,5,7,8,10,13,15\}$ & $\{1,3,4,6,9,11,12,14\}$ \\ \hline
$11$ & $\{2,5,7,9,11,12,14\}$ & $\{1,3,4,6,8,10,13,15\}$ \\ \hline
$12$ & $\{3,4,7,8,11,12,15\}$ & $\{1,2,5,6,9,10,13,14\}$ \\ \hline
$13$ & $\{3,4,7,9,10,13,14\}$ & $\{1,2,5,6,8,11,12,15\}$ \\ \hline
$14$ & $\{3,5,6,8,11,13,14\}$ & $\{1,2,4,7,9,10,12,15\}$ \\ \hline
$15$ & $\{3,5,6,9,10,12,15\}$ & $\{1,2,4,7,8,11,13,14\}$
\end{tabular}
\caption{$\calI_s^{(t)}$ for $s=1,2,\ldots,15$ and $t =1,2$}
\label{table: calI}
\end{table}

\begin{proof}
It will be proved in Appendix~\ref{apx: Proof of Thm 3.8}.
\end{proof}

\section{Algorithms to construct shortest SO embeddings}\label{sec: algorithm}
In this section, considering Lemmas~\ref{lem: 2 by n SO condtion},~\ref{lem: 3 by n SO condtion}, and Theorem~\ref{thm: equiv cond for SO dim 4}, we introduce an algorithm which extends an $[n,k]$ code to an SO code by adding the smallest number of columns for $k = 2,3,4$. We also introduce an algorithm which extends an $[n,5]$ code to an SO code.

\begin{definition} Let $\C$ be an $[n,k]$ code generated by $G$.
\begin{enumerate}[label = (\arabic*)]
\item An \emph{SO embedding} of $\C$ is an SO code whose generator matrix $\tG$ is obtained by adding a set $S$ of column vectors to $G$, that is,
\[
\tG := \left[ \arraycolsep=2.5pt\begin{array}{c||c} G & S  \end{array} \right].
\]
\item An SO embedding of $\C$ is called a \emph{shortest SO embedding} of $\C$ if its length is shortest among all SO embeddings of $\C$.
\end{enumerate}
\end{definition}

In Example~\ref{eg: SO example}, $\tC_{10,3}$ is a shortest SO embedding of $\C_{10,3}$.

\subsection{Algorithms for dimension 2 and 3}\label{subsec: shortest SO ext dim 2,3}

We begin with the following algorithm for two-dimensional linear codes.

\begin{algorithm}\label{alg: SO embedding dim2} \hfill
\begin{enumerate}[label=$\bullet$]
\item Input: A generator matrix $G$ of an $[n,2]$ code.
\item Output: A generator matrix $\tG$ for a shortest SO embedding.
\begin{enumerate}[label=({\bf A}\arabic*)]
\item Put $\tG \leftarrow G$ and $i \leftarrow 1$.
\item Let 
\[
\tG \leftarrow \begin{cases}
\tG & \text{if $\ell_i(\tG) \equiv_2 0$}, \\
\left[ \arraycolsep=2.5pt\begin{array}{c||c} \tG & \sfh_i  \end{array} \right] & \text{if $\ell_i(\tG) \equiv_2 1$},
\end{cases}
\]
where $\left[ \arraycolsep=2.5pt\begin{array}{c||c} \tG & \sfh_i  \end{array} \right]$ is the juxtaposition of $\tG$ and $\sfh_i$.\\[-1.5ex]
\item If $i < 3$, then put $i \leftarrow i+1$ and go to ({\bf A}2). Otherwise, terminate the algorithm. 
\end{enumerate}
\end{enumerate}
\end{algorithm}

With the resulting matrix $\tG$ of Algorithm~\ref{alg: SO embedding dim2}, we let 
\begin{align}\label{eq: def of tC dim 2}
\tC := \text{the linear code generated by $\tG$.}
\end{align}

\begin{remark}
We can obtain an SO code from an $[n,2]$ code by adding at most $3$ columns.
\end{remark}

More precisely, we have the following.

\begin{theorem}\label{thm: shortest alg dim 2}
Let $\C$ be an $[n,2]$ code generated by $G$.  Then $\tC$ is a shortest SO embedding of $\C$.
\end{theorem}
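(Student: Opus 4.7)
The plan is to split the proof into a correctness step (that $\tC$ is self-orthogonal) and a minimality step (that no shorter SO embedding of $\C$ exists), both driven by Lemma~\ref{lem: 2 by n SO condtion}. For correctness, I will trace Algorithm~\ref{alg: SO embedding dim2}: step $(\mathbf{A}2)$ at iteration $i$ forces $\ell_i(\tG)$ to become even, and this parity cannot be disturbed thereafter because $\sfh_1, \sfh_2, \sfh_3$ are pairwise distinct, so appending $\sfh_j$ with $j > i$ leaves $\ell_i(\tG)$ unchanged. After the three iterations all three counts $\ell_1(\tG), \ell_2(\tG), \ell_3(\tG)$ are even, and since $\tG$ still has row rank $2$ (appending columns cannot decrease rank), Lemma~\ref{lem: 2 by n SO condtion} gives that $\tC$ is an SO code of dimension $2$.

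For minimality, let $m$ be the number of indices $i \in \{1,2,3\}$ for which $\ell_i(G)$ is odd; the algorithm outputs $\tG$ of length $n + m$. Let $\tG' = [\,G \mid S'\,]$ generate an arbitrary SO embedding of $\C$. Each column of $S'$ is a vector in $GF(2)^2 = \{0, \sfh_1, \sfh_2, \sfh_3\}$: a zero column contributes nothing to any $\ell_i$, while $\sfh_i$ flips the parity of $\ell_i$ and leaves the other two counts unchanged. Since Lemma~\ref{lem: 2 by n SO condtion} forces each $\ell_i(\tG')$ to be even, for every $i$ with $\ell_i(G)$ odd the multiset $S'$ must contain an odd, hence positive, number of copies of $\sfh_i$. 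Summing over these indices yields $|S'| \geq m$, which matches the length achieved by the algorithm.

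No step poses a genuine difficulty; the only subtlety worth flagging is that minimality must hold against every possible SO embedding, not merely those produced by the greedy procedure. This is handled by the observation that each appended column flips at most one of the three parities, so the constraints $\ell_i(\tG') \equiv_2 0$ for $i = 1,2,3$ decouple and the minimum number of extra columns is simply the number of currently odd parities.
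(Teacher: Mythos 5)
Your proposal is correct and follows exactly the route the paper intends: the paper's proof of this theorem is simply ``By Lemma~\ref{lem: 2 by n SO condtion}, it is obvious,'' and your argument fills in precisely the details being elided — the algorithm makes each $\ell_i$ even (and later appended columns $\sfh_j$, $j>i$, do not disturb earlier parities), while any competing SO embedding must, by the same lemma, append at least one copy of $\sfh_i$ for each index $i$ with $\ell_i(G)$ odd, since each appended column flips at most one of the three parities. Nothing is missing; your write-up is a valid expansion of the paper's one-line proof.
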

\begin{proof}
By Lemma~\ref{lem: 2 by n SO condtion}, it is obvious.
\end{proof}

\begin{example}\label{eg: counter for Kob}
Let $\C_{7,2}$ be an optimal $[7,2,4]$ code generated by 
\[
\arraycolsep=2.5pt G_{7,2} =
\begin{bmatrix}
0&0&0&1&1&1&1\\
1&1&1&0&0&0&1
\end{bmatrix}.
\]
We can construct an SO code from $\C_{7,2}$ using  Algorithm~\ref{alg: SO embedding dim2}.
In Step ({\bf A}1), we put $\tG_{7,2} \leftarrow G_{7,2}$. Note that
\[
\ell_1(\tG_{7,2}) = 3, \quad  \ell_2(\tG_{7,2}) = 3, \quad \text{and} \quad \ell_3(\tG_{7,2}) = 1.
\]
Therefore, when we apply Step ({\bf A}2) and ({\bf A}3), we put the juxtaposition of $\tG_{7,2}$, $\sfh_1$, $\sfh_2$, and $\sfh_3$ to new $\tG_{7,2}$. Thus, we obtain
\[
\tG_{7,2} =\left[\arraycolsep=2.5pt
\begin{array}{ccccccc||ccc}
0&0&0&1&1&1&1&0&1&1\\
1&1&1&0&0&0&1&1&0&1
\end{array}\right].
\]
It is easy to check that the $[10,2,6]$ code $\tC_{7,2}$ generated by $\tG_{7,2}$ satisfies the condition in Lemma \ref{lem: 2 by n SO condtion}, and thus it is a shortest  SO embedding of $\C_{7,2}$. Moreover, $\tC_{7,2}$ is optimal SO.
\end{example}

\begin{remark}\label{rem: error in 95Kob}
In \cite[Section 2]{95Kob}, Kobayashi and Takada proposed an algorithm which extends an $[n,k]$ code $\C$ to an SO code by juxtaposing $k$ column vectors to its generator matrix $G$.
In the algorithm, they first find a solution $a_0$ of the equation $r_1 \cdot r_1 + x^2 = 0$.
Next, they claim that the equation $r_1 \cdot r_i + a_0 x = 0$ has a solution for every $1 < i \le k$.
However, in case where $a_0 = 0$ and $r_1 \cdot r_i = 1$ for some $1 < i \le k$, there are no solutions to this equation.
It makes the entire method proposed in~\cite{95Kob} wrong.
Indeed, if $\C$ is an $[n,2]$ code with a generator matrix
$
\arraycolsep=1.8pt
\renewcommand{\arraystretch}{1}
G =
\begin{bmatrix}
0&1&1\\[- 1 ex]
1&0&1
\end{bmatrix}$,
then we cannot extend $\C$ to an SO code by adding any two columns to $G$.
\end{remark}

Now, let us consider the three-dimensional case.
First, for a $3 \times n$ matrix $G$ and $j = 0,1$, we set
\[
J_j(G) := \left\{i \in \{1,2,\ldots,7\} \; \middle| \; \ell_i \equiv_2 j \right\}.
\]
The basic idea of the following algorithm is that we keep adding the smallest number of columns to $G$ so that all $\ell_i$'s have the same parity.
\begin{algorithm}\label{alg: SO embedding dim3}\hfill
\begin{enumerate}[label=$\bullet$]
\item  Input: A generator matrix $G$ of an $[n,3]$ code.
\item  Output: A generator matrix $\tG$ for a shortest SO embedding.
\begin{enumerate}[label = ({\bf B}\arabic*)]
\item Put $\tG \leftarrow G$.
\item If $J_0(\tG) = \emptyset$ (i.e., $\ell_i \equiv_2 1$ for all $i = 1,2,\ldots,7$) or $J_1(\tG) = \emptyset$ (i.e., $\ell_i \equiv_2 0$ for all $i = 1,2,\ldots,7$), then terminate the algorithm. Otherwise, go to ({\bf B3}).
\item If $|J_0(\tG)| < |J_1(\tG)|$, then let $i_0$ be the smallest integer in $J_0(\tG)$. Otherwise, let $i_0$ be the smallest integer in $J_1(\tG)$.
\item Let
\[\arraycolsep=2.5pt
\tG \leftarrow \left[ \begin{array}{c||c} \tG & \sfh_{i_0} \end{array} \right].
\]
Go to ({\bf B}2).
\end{enumerate}
\end{enumerate}
\end{algorithm}

\begin{remark}
Note that when we apply Algorithm~\ref{alg: SO embedding dim3} to a generator matrix $G$ of an $[n,3]$ code, we add $|J_0(G)|$ columns to $G$ if $|J_0(G)| < |J_1(G)|$ and $|J_1(G)|$ columns to $G$ otherwise. Thus, we can obtain an SO code from an $[n,3]$ code by adding at most $3$ columns because $\min\{|J_0(G)|, |J_1(G)|\} \le \floor{7/2}$.
\end{remark}

With the resulting matrix $\tG$ of Algorithm~\ref{alg: SO embedding dim3}, let 
\begin{align*}
\tC := \text{the linear code generated by $\tG$.}
\end{align*}
Then we have the following theorem.
\begin{theorem}\label{thm: shortest alg dim 3}
Let $\C$ be an $[n,3]$ code generated by $G$.  Then $\tC$ is a shortest SO embedding of $\C$.
\end{theorem}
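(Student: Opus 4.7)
The plan is to prove two complementary statements about the output $\tG$ of Algorithm~\ref{alg: SO embedding dim3}: (a) the algorithm terminates and produces a generator matrix of an SO code, adding exactly $m := \min(|J_0(G)|, |J_1(G)|)$ columns to $G$; and (b) every SO embedding of $\C$ requires at least $m$ extra columns. Together these yield that $\tC$ is a shortest SO embedding.

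For (a), I would track how the sizes of $J_0(\tG)$ and $J_1(\tG)$ evolve through the loop. At each iteration the algorithm picks the smallest index $i_0$ in the current minority class and appends $\sfh_{i_0}$, which increments $\ell_{i_0}(\tG)$ by one and thus moves $i_0$ from the minority class to the majority class. Hence $|J_0(\tG)| + |J_1(\tG)| = 7$ is preserved while the minority size strictly drops by one each step. Because $7$ is odd, the two sizes are never equal, the minority class is unambiguous, and after exactly $m$ iterations the minority class becomes empty and step ({\bf B}2) terminates the algorithm. At that point all $\ell_i(\tG)$ share a common parity, so Lemma~\ref{lem: 3 by n SO condtion} gives that $\tC$ is SO. In particular $\tC$ has length $n + m$.

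For (b), let $\tC'$ be any SO embedding of $\C$ with generator matrix $\tG' = \bigl[\, G \mid S \,\bigr]$, where $S$ is the (multiset) list of appended columns; each such column is some $\sfh_t$ with $t \in \{0, 1, \ldots, 7\}$. Applying Lemma~\ref{lem: 3 by n SO condtion} to $\tG'$ yields $\ell_1(\tG') \equiv_2 \ell_2(\tG') \equiv_2 \cdots \equiv_2 \ell_7(\tG')$. Since $\ell_i(\tG') = \ell_i(G) + \ell_i(S)$, the parities of $\ell_i(S)$ for $i \in J_0(G)$ all agree, those for $i \in J_1(G)$ all agree, and the two common parities must differ. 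Therefore either $\ell_i(S) \ge 1$ for every $i \in J_1(G)$, or $\ell_i(S) \ge 1$ for every $i \in J_0(G)$, so the total number of columns in $S$ is at least $\min(|J_0(G)|, |J_1(G)|) = m$. Appended columns equal to $\sfh_0 = \mathbf{0}$ are irrelevant to this count (they alter no $\ell_i$ and only inflate the length), so the bound is tight.

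The only delicate point is applying Lemma~\ref{lem: 3 by n SO condtion} to the extended matrix $\tG'$ rather than to $G$, and recognizing that the parity constraint on $\ell_i(\tG')$ forces a lower bound on $|S|$ via a split according to $J_0(G)$ versus $J_1(G)$. Once that observation is made the argument is a clean parity count, and I do not anticipate any further obstacle.
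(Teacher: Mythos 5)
Your proposal is correct and follows essentially the same route as the paper: correctness of the output via Lemma~\ref{lem: 3 by n SO condtion}, and minimality via the observation that each appended column flips the parity of at most one $\ell_i$, so all of $\min(|J_0(G)|,|J_1(G)|)$ parities cannot be fixed with fewer columns. The paper states this minimality step in contrapositive form (adding $l<m$ columns leaves both $J_0(\hG)$ and $J_1(\hG)$ nonempty), whereas you lower-bound $|S|$ directly, but the counting argument is the same.
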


\begin{proof}
Note that if $\C$ is SO, then  $J_0(\tG) = \emptyset$ or $J_1(\tG) = \emptyset$ by Lemma~\ref{lem: 3 by n SO condtion}. Thus, we have $\tC = \C$. Therefore, we may assume that $\C$ is not SO.
Then we have $|J_{j}(G)| < |J_{j'}(G)|$ for some $j\ne j' \in \{0,1\}$.

From Step ({\bf B}2) to Step ({\bf B}4), one can see that Algorithm~\ref{alg: SO embedding dim3} will stop only when $J_{j}(\tG) = \emptyset$. Therefore, $\tG$ satisfies Equation~\eqref{eq: equiv cond for SO dim 3} and thus $\tC$ is SO by Lemma~\ref{lem: 3 by n SO condtion}.

To prove $\tC$ is a shortest SO embedding of $\C$, we let 
\[
\arraycolsep=2.5pt \hG := \left[\begin{array}{c||c} G & M \end{array}\right],
\]
where $M$ is a $3 \times l$ matrix for some $0< l <|J_j(G)|$ and $\hC$ be a linear code generated by $\hG$.
Since $0< l <|J_j(G)|$ and $|J_j(G)| + |J_{j'}(G)| = 7$, the inequality
\[
0 < |J_j(G)| - l \le |J_r(\hG)| \le |J_{j'}(G)| + l < 7
\]
holds for $r =0,1$.
Thus, by Lemma~~\ref{lem: 3 by n SO condtion}, $\hC$ is not SO.
\end{proof}

\begin{example} Let $\C_{10,3}$ be an optimal $[10,3,5]$ code generated by 
\[\arraycolsep=2.5pt
G_{10,3} =
\begin{bmatrix}
0&0&0&0&0&1&1&1&1&1\\
0&0&1&1&1&0&0&0&1&1\\
1&1&0&0&1&0&0&1&1&1
\end{bmatrix}.
\]
It is feasible to construct an SO code from $\C_{10,3}$ using Algorithm \ref{alg: SO embedding dim3} as follows.

In Step ({\bf B}1), put $\tG_{10,3} \leftarrow G_{10,3}$. In  Step ({\bf B}2), considering the number of the columns of $\tG_{10,3}$, we let
\[
J_0 (\tG_{10,3}) = \{1,2,4,6,7 \} \quad \text{and} \quad J_1 (\tG_{10,3}) = \{3, 5\}.
\]
Since both are nonempty sets,  proceed to Step ({\bf B}3).  In Step ({\bf B}3), let $i_0 = 3$ since $|J_1 (\tG_{10,3})| < |J_0 (\tG_{10,3})|$. In Step ({\bf B}4), put
\begin{align*}
\tG_{10,3} \leftarrow \left[\arraycolsep=2pt \begin{array}{c||c} \tG_{10,3} & \sfh_{3} \end{array} \right]=\left[\arraycolsep=2pt \begin{array}{cccccccccc||c} 
0&0&0&0&0&1&1&1&1&1&0\\
0&0&1&1&1&0&0&0&1&1&1\\
1&1&0&0&1&0&0&1&1&1&1
\end{array} \right].
\end{align*}
Next, go to Step ({\bf B}2) and repeat the steps. Now 
\[
J_0 (\tG_{10,3}) = \{1,2,3,4,6,7 \} \quad \text{and}\quad J_1 (\tG_{10,3}) = \{5\}.
\]
Since neither $J_0 (\tG_{10,3})$ nor $J_1 (\tG_{10,3})$ is an empty set, go to Step ({\bf B}3). In Step ({\bf B}3), we let $i_0 = 5$ since $|J_1 (\tG_{10,3})| < |J_0 (\tG_{10,3})|$.  Applying Step ({\bf B}4), we put
\begin{align*}
\tG_{10,3} \leftarrow \left[\arraycolsep=2pt \begin{array}{c||c} \tG_{10,3} & \sfh_{5} \end{array} \right]=\left[\arraycolsep=2pt \begin{array}{ccccccccccc||c} 
0&0&0&0&0&1&1&1&1&1&0&1\\
0&0&1&1&1&0&0&0&1&1&1&0\\
1&1&0&0&1&0&0&1&1&1&1&1
\end{array} \right].
\end{align*}
Go to Step ({\bf B}2) again. Since $J_1(\tG_{10,3}) = \emptyset$, the algorithm is terminated and we obtain the desired $\tG_{10,3}$. The code $\tC_{10,3}$ generated by $\tG_{10,3}$  is the $[12,3,6]$ SO code, which is a shortest SO embedding of $\C_{10,3}$. Moreover, $\tC_{10,3}$ is optimal SO.
\end{example}

\subsection{Algorithm for dimension 4}
In this subsection, we introduce an algorithm that embeds an $[n,4]$ code to an SO code by Theorem~\ref{thm: equiv cond for SO dim 4}. We let
\[
J_1(G) := \left\{i \in \{1,2,\ldots,15\} \; \middle| \; \ell_i \equiv_2 1  \right\}
\]
for a $4 \times n$ matrix $G$. 

The basic idea of the following algorithm is that we keep adding the smallest number of columns to $G$ so that seven and eight $\ell_i$'s each have the same parity.

\begin{algorithm}\label{alg: SO embedding dim4} \hfill
\begin{enumerate}[label=$\bullet$]
\item  Input: A generator matrix $G$ of an $[n,4]$ code. 
\item  Output: A generator matrix $\tG$ for a shortest SO embedding. 
\begin{enumerate}[label = ({\bf C}\arabic*)]
\item For $s = 1,2,\ldots, 15$, if $|\calI_s^{(1)} \cap J_1(G)| < 4$ (resp. $|\calI_s^{(1)} \cap J_1(G)| \ge 4$), then let 
\[
\fn^{(1)}_s \leftarrow |\calI_s^{(1)} \cap J_1(G)| \quad \text{(resp. $\fn^{(1)}_s \leftarrow |\calI_s^{(1)} \setminus J_1(G)|$)}.
\]
\item For $s = 1,2,\ldots, 15$, if $|\calI_s^{(2)} \cap J_1(G)| \le 4$ (resp. $|\calI_s^{(2)} \cap J_1(G)| > 4$), then let 
\[
\fn^{(2)}_s \leftarrow |\calI_s^{(2)} \cap J_1(G)| \quad \text{(resp. $\fn^{(2)}_s \leftarrow |\calI_s^{(2)} \setminus J_1(G)|$)}.
\]
\item Find the smallest $s_0 \in \{1,2,\ldots,15\}$ such that 
\[
\fn^{(1)}_{s_0} + \fn^{(2)}_{s_0} = \mathrm{min}\{\fn^{(1)}_s + \fn^{(2)}_s \mid 1 \le s \le 15\}.
\]
\item Put $\tG \leftarrow G$.
\item If $|\calI_{s_0}^{(1)} \cap J_1(\tG)| < 4$ (resp. $|\calI_{s_0}^{(1)} \cap J_1(\tG)| \ge 4$), then let  
\[
\calI^{(1)} \leftarrow \calI_{s_0}^{(1)} \cap J_1(\tG) \quad \text{(resp. $\calI^{(1)} \leftarrow \calI_{s_0}^{(1)} \setminus J_1(\tG)$).}
\]
\item  If $\calI^{(1)} \neq \emptyset$, then take the smallest $i_0 \in \calI^{(1)}$ and put
\[\arraycolsep=2.5pt
\tG \leftarrow \left[\begin{array}{c||c} \tG & \sfh_{i_0} \end{array} \right].
\]
Go to ({\bf C}5). Otherwise, go to ({\bf C}7).
\item If $|\calI_{s_0}^{(2)} \cap J_1(\tG)| \le 4$ (resp. $|\calI_{s_0}^{(2)} \cap J_1(\tG)| > 4$), then let   
\[
\calI^{(2)} \leftarrow \calI_{s_0}^{(2)} \cap J_1(\tG) \quad \text{(resp. $\calI^{(2)} \leftarrow \calI_{s_0}^{(2)} \setminus J_1(\tG)$).}
\]

\item If $\calI^{(2)} \neq \emptyset$, then take the smallest $i_0 \in \calI^{(2)}$ and put
\[\arraycolsep=2.5pt
\tG \leftarrow \left[\begin{array}{c||c} \tG & \sfh_{i_0} \end{array} \right].
\]
Go to ({\bf C}7). Otherwise, terminate the algorithm.
\end{enumerate}
\end{enumerate}
\end{algorithm}

For readers' convenience, we give Algorithm~\ref{alg: SO embedding dim4} written in MAGMA in Appendix~\ref{apx: alg dim 4}.
With the resulting matrix $\tG$ of Algorithm~\ref{alg: SO embedding dim4}, we let
\begin{align*}
\tC := \text{the linear code generated by $\tG$.}
\end{align*}
Then we have the following theorem.

\begin{theorem}\label{thm: shortest alg dim 4}
Let $\C$ be an $[n,4]$ code generated by $G$.  Then $\tC$ is a shortest SO embedding of $\C$.
\end{theorem}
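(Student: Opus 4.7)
The plan mirrors the structure of the proofs of Theorems~\ref{thm: shortest alg dim 2} and~\ref{thm: shortest alg dim 3}, now grafted onto the dimension-$4$ criterion of Theorem~\ref{thm: equiv cond for SO dim 4}. The key observation is that appending a column $\sfh_i$ to a matrix toggles the parity of $\ell_i$ and leaves every other $\ell_j$ fixed; thus each step of Algorithm~\ref{alg: SO embedding dim4} flips exactly one coordinate of the parity vector $(\ell_1\bmod 2, \ldots, \ell_{15}\bmod 2)$. Because $\calI_{s_0}^{(1)}$ and $\calI_{s_0}^{(2)}$ partition $\{1,\ldots,15\}$, the toggles carried out in the loop (C5)--(C6) operate on disjoint coordinates from those in (C7)--(C8), so the two halves can be analyzed independently.

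For the self-orthogonality of $\tC$, I would walk through (C5)--(C6) and verify that, once the loop enters the branch $|\calI_{s_0}^{(1)} \cap J_1(\tG)| < 4$, this quantity strictly decreases (the chosen $\sfh_{i_0}$ flips $\ell_{i_0}$ to even and so removes $i_0$ from $J_1$), never exits that branch, and reaches $0$ after exactly $\fn_{s_0}^{(1)}$ iterations, leaving every $\ell_i$ with $i \in \calI_{s_0}^{(1)}$ even. The ``$\ge 4$'' branch is symmetric and leaves them all odd. An identical analysis of (C7)--(C8) forces the parities on $\calI_{s_0}^{(2)}$ to be constant after $\fn_{s_0}^{(2)}$ additional columns. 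Thus $\tG$ satisfies~\eqref{eq: cong eqn for SO dim 4} with $s = s_0$, and $\tC$ is SO by Theorem~\ref{thm: equiv cond for SO dim 4}; the algorithm has appended exactly $\fn_{s_0}^{(1)} + \fn_{s_0}^{(2)}$ columns in total.

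For the minimality claim, let $\hC$ be any SO embedding of $\C$ obtained by juxtaposing an $l$-column matrix $M$ to $G$, and let $m_i$ denote the number of occurrences of $\sfh_i$ among the columns of $M$. Theorem~\ref{thm: equiv cond for SO dim 4} supplies an $s^* \in \{1,\ldots,15\}$ and constants $c^{(1)}, c^{(2)} \in \{0,1\}$ with $\ell_i(G) + m_i \equiv c^{(t)} \pmod 2$ for every $i \in \calI_{s^*}^{(t)}$ and $t=1,2$. A parity check in each case shows that $m_i \ge 1$ for every $i \in \calI_{s^*}^{(t)} \cap J_1(G)$ (when $c^{(t)} = 0$) or for every $i \in \calI_{s^*}^{(t)} \setminus J_1(G)$ (when $c^{(t)} = 1$), whence $\sum_{i \in \calI_{s^*}^{(t)}} m_i \ge \fn_{s^*}^{(t)}$. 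Disjointness of $\calI_{s^*}^{(1)}$ and $\calI_{s^*}^{(2)}$ then yields $l \ge \fn_{s^*}^{(1)} + \fn_{s^*}^{(2)} \ge \fn_{s_0}^{(1)} + \fn_{s_0}^{(2)}$ by the minimal choice of $s_0$ in (C3), matching the length produced by the algorithm.

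The main obstacle is purely bookkeeping: verifying that the branch test in each inner loop is stable under the single-coordinate toggles it performs, so that the loop progresses monotonically and terminates after exactly $\fn_{s_0}^{(t)}$ iterations without overshooting. This reduces to monitoring the single integer $|\calI_{s_0}^{(t)} \cap J_1(\tG)|$, and together with the partition property $\calI_s^{(1)} \sqcup \calI_s^{(2)} = \{1,\ldots,15\}$ recorded implicitly in Table~\ref{table: calI}, the entire argument becomes routine once Theorem~\ref{thm: equiv cond for SO dim 4} is in hand.
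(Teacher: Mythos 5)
Your proposal is correct and follows essentially the same route as the paper's proof in Appendix~\ref{apx: Proof of 4.12}: both directions rest on Theorem~\ref{thm: equiv cond for SO dim 4}, your loop analysis is a more detailed version of the paper's observation that (C5)--(C6) and (C7)--(C8) terminate exactly when $\calI_{s_0}^{(t)} \cap J_1(\tG)$ is empty or all of $\calI_{s_0}^{(t)}$, and your count of the multiplicities $m_i$ is just the direct lower-bound phrasing of the paper's contrapositive sandwich $0 < \fn_{s,j^{(t_0)}}^{(t_0)} - l_s^{(t_0)} \le |\calI_s^{(t_0)} \cap J_1(\hG)| \le \fn_{s,j'^{(t_0)}}^{(t_0)} + l_s^{(t_0)} < |\calI_s^{(t_0)}|$. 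No changes needed.
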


\begin{proof}
It will be proved in Appendix~\ref{apx: Proof of 4.12}.
\end{proof}

\begin{remark}\label{rem: added column}
In Algorithm~\ref{alg: SO embedding dim4}, $\tG$ is obtained by juxtaposing $G$ and $\sfh_{i}$'s for $i \in \calI^{(1)}(G) \cup \calI^{(2)}(G)$, where
\begin{align*}
\calI^{(1)}(G) := \begin{cases}
\calI_{s_0}^{(1)} \cap J_1(G) & \text{if $|\calI_{s_0}^{(1)} \cap J_1(G)| < 4$},\\
\calI_{s_0}^{(1)} \setminus J_1(G) & \text{if $|\calI_{s_0}^{(1)} \cap J_1(G)| \ge 4$},
\end{cases}\\
\calI^{(2)}(G) := \begin{cases}
\calI_{s_0}^{(2)} \cap J_1(G) & \text{if $|\calI_{s_0}^{(1)} \cap J_1(G)| \le 4$},\\
\calI_{s_0}^{(2)} \setminus J_1(G) & \text{if $|\calI_{s_0}^{(1)} \cap J_1(G)| > 4$}.
\end{cases}
\end{align*} 
\end{remark}

\begin{proposition}
A shortest SO embedding of an $[n,4]$ code can be obtained by adding at most $5$ columns.
\end{proposition}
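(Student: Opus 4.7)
The plan is to reduce the claim to a purely combinatorial inequality about the $15$ hyperplanes of $GF(2)^4$, and then close it with a short geometric argument in $PG(3,2)$. By Remark~\ref{rem: added column}, Algorithm~\ref{alg: SO embedding dim4} appends to $G$ exactly $\fn_{s_0}^{(1)}+\fn_{s_0}^{(2)}$ new columns, where $s_0$ is the index chosen in Step~({\bf C}3) to minimize $\fn_s^{(1)}+\fn_s^{(2)}$; Theorem~\ref{thm: shortest alg dim 4} tells us this output is already a shortest SO embedding, so it suffices to prove
\[
\min_{1\le s\le 15}\bigl(\fn_s^{(1)}+\fn_s^{(2)}\bigr)\le 5.
\]

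Identify $\{1,\ldots,15\}$ with $GF(2)^4\setminus\{0\}$ via $i\leftrightarrow\sfh_i$. A direct reading of Table~\ref{table: calI} (or of the bit-level parity conditions in the proof of Theorem~\ref{thm: equiv cond for SO dim 4}) shows that each $\calI_s^{(1)}$ is the set of $7$ nonzero vectors of some $3$-dimensional subspace $H_s$ of $GF(2)^4$, that $\calI_s^{(2)}$ is its $8$-element complement, and that as $s$ ranges over $\{1,\ldots,15\}$ the subspaces $H_s$ exhaust all $15$ hyperplanes of $GF(2)^4$. Writing $T:=J_1(G)$, $N:=|T|$, $a_s:=|\calI_s^{(1)}\cap T|$, $b_s:=|\calI_s^{(2)}\cap T|$, the quantity to minimize is $\min(a_s,7-a_s)+\min(b_s,8-b_s)$. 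Replacing $T$ by its complement in $\{1,\ldots,15\}$ swaps $(a_s,b_s)$ with $(7-a_s,8-b_s)$ and leaves this quantity invariant, so I may assume $N\le 7$. If $N\le 5$ then for every $s$ one has $\min(a_s,7-a_s)+\min(b_s,8-b_s)\le a_s+b_s=N\le 5$, and the bound is immediate.

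The cases $N\in\{6,7\}$ I handle by contradiction. Suppose $\fn_s^{(1)}+\fn_s^{(2)}\ge 6$ for every $s$; a short enumeration of the pairs $(a_s,b_s)$ with $a_s+b_s=N$ that satisfy this inequality forces $a_s\in\{2,3\}$ when $N=6$ and $a_s\in\{3,4\}$ when $N=7$, for every $s$. The key identity, valid for any two distinct nonzero $v_1,v_2\in GF(2)^4$ with $W:=\ker v_1\cap\ker v_2$, is
\[
a_{v_1}+a_{v_2}+a_{v_1+v_2}\;=\;N+2\bigl|T\cap(W\setminus\{0\})\bigr|,
\]
which follows by observing that for each $x\ne 0$ either $x\in W$ (in which case all three of $v_1\cdot x,v_2\cdot x,(v_1+v_2)\cdot x$ vanish) or exactly one of them vanishes. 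Combined with the constraints on $a_s$ above, the left side is at most $9$ if $N=6$ and at most $12$ if $N=7$, which in either case forces $|T\cap(W\setminus\{0\})|\le 2$ for every $2$-dimensional subspace $W\subset GF(2)^4$. Hence $T$ is a cap of $PG(3,2)$, meeting every projective line in at most two points; but the largest cap in $PG(3,2)$ has only $5$ points (an ovoid), contradicting $N\ge 6$.

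The main obstacle is precisely the range $N\in\{6,7\}$: the naive bounds $\fn_s^{(1)}\le 3$ and $\fn_s^{(2)}\le 4$ yield only $\fn_s^{(1)}+\fn_s^{(2)}\le 7$, so the sharper bound $\le 5$ genuinely exploits the hyperplane structure of the $\calI_s^{(t)}$'s and the classical cap bound in $PG(3,2)$.
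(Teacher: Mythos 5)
Your reduction is correct up to its very last step, and the identity
\[
a_{v_1}+a_{v_2}+a_{v_1+v_2}\;=\;N+2\,\bigl|T\cap(W\setminus\{0\})\bigr|,\qquad W=\ker v_1\cap\ker v_2,
\]
is both true and the right tool: the identification of the $\calI_s^{(1)}$ with the $15$ hyperplanes of $GF(2)^4$, the complementation symmetry reducing to $N\le 7$, the trivial case $N\le 5$, and the determination that a counterexample would force $a_s\in\{2,3\}$ for all $s$ when $N=6$ and $a_s\in\{3,4\}$ when $N=7$ all check out. The gap is the final sentence: the largest cap of $PG(3,2)$ does \emph{not} have $5$ points. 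The ovoid bound $q^2+1$ is a $q>2$ phenomenon; for $q=2$ the complement of a plane, $\{x\mid v\cdot x=1\}$, is a set of $8$ points no three of which are collinear (any three of them sum to a vector $y$ with $v\cdot y=1\ne 0$), so caps of sizes $6$, $7$, $8$ exist in $PG(3,2)$ and the conclusion ``$|T\cap(W\setminus\{0\})|\le 2$ for every line $W$'' yields no contradiction with $N\in\{6,7\}$.

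Both cases can still be closed, but they need different repairs. For $N=6$ your own inequality already gives more than you used: $6+2|T\cap(W\setminus\{0\})|\le 9$ forces $|T\cap(W\setminus\{0\})|\le 1$ for every line, which is absurd for any set of at least two points, since two points always span a line. For $N=7$ the upper bound genuinely only gives $\le 2$, and $7$-caps exist, so you must also invoke the lower bound: $a_{v_1}+a_{v_2}+a_{v_1+v_2}\ge 3+3+3=N+2$ gives $|T\cap(W\setminus\{0\})|\ge 1$, i.e.\ every one of the $35$ lines of $PG(3,2)$ meets $T$. Now count incidences: $\sum_W|T\cap(W\setminus\{0\})|=7\cdot 7=49$, and since no line carries three points of $T$ there are exactly $\binom{7}{2}=21$ secant lines contributing $42$, so the remaining $14$ lines contribute only $7$ and hence seven lines miss $T$ entirely --- contradiction. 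With these two fixes your argument becomes a complete, computer-free alternative to the paper's proof, which instead uses a symmetry of the sets $\calI_s^{(t)}$ to reduce to $s_0=1$ and then verifies the $2450+3430$ residual configurations by MAGMA.
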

\begin{proof}
For each $s \in \{1,2,\ldots, 15\}$, denote the sets $\calI_s^{(1)}$ and $\calI_s^{(2)}$ given in Table~\ref{table: calI} by 
\begin{align*}
\{a_{s,1} < a_{s,2} < \cdots < a_{s,7}\} ~ \text{and}~ \{b_{s,1} < b_{s,2} < \cdots < b_{s,8}\},
\end{align*}
respectively.
Let
\begin{align*}
\bfv_s & = (v_{s,1}, v_{s,2}, \ldots, v_{s,15}) \\
& := (a_{s,1}, a_{s,2}, \ldots, a_{s,7}, b_{s,1}, b_{s,2}, \ldots, b_{s,8}) \in (\Z^+)^{15}.
\end{align*}

By the definition of $\bfv_s$,  one can see that $\{v_{s,1},v_{s,2}, \ldots, v_{s,15}\} = \{1,2,\ldots, 15\}$ for each $s \in \{1,2,\ldots,15\}$. Therefore, for each $s \in \{1,2,\ldots,15\}$, there is a permutation $\sigma_s$ of $\{1,2,\ldots,15\}$  such that
\begin{align*}
\sigma_s(\bfv_s) & := \left(\sigma_s(v_{s,1}), \sigma_s(v_{s,2}), \ldots, \sigma_s(v_{s,15})\right)\\
& = (1,2,\ldots, 15) = \bfv_1.
\end{align*}
It is checked by MAGMA that for each $s, i \in \{1,2, \ldots, 15\}$, there exists $j(s,i) \in \{1,2, \ldots, 15\}$ such that
\begin{align}\label{eq: sigma_s I}
\begin{aligned}
\sigma_s (\calI_{i}^{(1)}) &= \left\{\sigma_s(a_{i,1}), \ldots, \sigma_s(a_{i,7}) \right\} = \calI_{j(s,i)}^{(1)},\\
\sigma_s (\calI_{i}^{(2)}) &= \left\{\sigma_s(b_{i,1}), \ldots, \sigma_s(b_{i,8}) \right\} = \calI_{j(s,i)}^{(2)}.
\end{aligned}
\end{align}

Assume that Algorithm~\ref{alg: SO embedding dim4} is applied to a generator matrix $G$ of an $[n,4]$ code. Note that $s_0 \in \{1,2,\ldots, 15\}$ is obtained in Step ({\bf C}3).
For each $i \in \{1,2,\ldots, 15\}$, Equation~\eqref{eq: sigma_s I} gives that
\begin{align}\label{eq: sigma inv card}
\begin{aligned}
\calI_i^{(t)} \cap J_1(G) & = \sigma_{s_0}^{-1}\left( \calI_{j(s_0,i)}^{(t)}\right) \cap J_1(G),\\
\calI_i^{(t)} \setminus J_1(G) & = \sigma_{s_0}^{-1}\left( \calI_{j(s_0,i)}^{(t)} \right) \setminus J_1(G)
\end{aligned}
\end{align}
for $t = 1,2$.
Note that $\sigma_{s_0}$ is a bijection. Then, by Equation~\eqref{eq: sigma inv card}, for $t = 1,2$,
\begin{align}\label{eq: sigma_s same cardinal}
\begin{aligned}
|\calI_i^{(t)} \cap J_1(G)| & = |\calI_{j(s_0,i)}^{(t)} \cap \sigma_{s_0}(J_1(G))|,\\
|\calI_i^{(t)} \setminus J_1(G)| & = |\calI_{j(s_0,i)}^{(t)} \setminus \sigma_{s_0}(J_1(G))|.
\end{aligned}
\end{align}

Let us consider a new algorithm obtained by replacing
\begin{align}\label{eq: replacing}
\begin{aligned}
&\calI_i^{(t)} \cap J_1(G)~\text{by}~\calI_{i}^{(t)} \cap \sigma_{s_0}(J_1(G)),\\
&\calI_i^{(t)} \setminus J_1(G)~\text{by}~\calI_{i}^{(t)} \setminus \sigma_{s_0}(J_1(G)),~\text{and}\\
&\sfh_{i_0}~\text{by}~\sfh_{\sigma_{s_0}^{-1}(i_0)}
\end{aligned}
\end{align}
in Algorithm~\ref{alg: SO embedding dim4}. To emphasize the difference between this new algorithm and Algorithm~\ref{alg: SO embedding dim4}, we denote the integer $\fn_s^{(1)}$(resp. $\fn_s^{(2)}$) obtained in Step ({\bf C}1) (resp. ({\bf C}2)) of the new algorithm by $\hfn_s^{(1)}$(resp. $\hfn_s^{(2)}$), $s_0$ in Step ({\bf C}3) by $\hs_0$, and $\tG$ by $\hG$. Then, by Equation~\eqref{eq: sigma_s same cardinal}, we have
\[
\fn_s^{(1)} = \hfn_{j(s_0,s)}^{(1)} \quad \text{and} \quad \hfn_s^{(2)} = \fn_{j(s_0,s)}^{(2)}.
\]
Therefore, the facts that $j(s_0,s_0) = 1$ and
\[
\fn^{(1)}_{s_0} + \fn^{(2)}_{s_0} = \mathrm{min}\{\fn^{(1)}_s + \fn^{(2)}_s \mid 1 \le s \le 15\}
\]
imply $\hs_0 = 1$.

Note that Remark~\ref{rem: added column} can be modified as follows: $\hG$ is obtained by juxtaposing $G$ and $\sfh_{\sigma_{s_0}^{-1}(i)}$'s for $i \in \wcI^{(1)}(G) \cup \wcI^{(2)}(G)$, where
\begin{align*}
\wcI^{(1)}(G) := \begin{cases}
\calI_{1}^{(1)} \cap \sigma_{s_0}(J_1(G)) & \text{if $|\calI_{1}^{(1)} \cap \sigma_{s_0}(J_1(G))| < 4$},\\
\calI_{1}^{(1)} \setminus \sigma_{s_0}(J_1(G)) & \text{if $|\calI_{1}^{(1)} \cap \sigma_{s_0}(J_1(G))| \ge 4$},
\end{cases}\\
\wcI^{(2)}(G) := \begin{cases}
\calI_{1}^{(2)} \cap \sigma_{s_0}(J_1(G)) & \text{if $|\calI_{1}^{(2)} \cap \sigma_{s_0}(J_1(G))| \le 4$},\\
\calI_{1}^{(2)} \setminus \sigma_{s_0}(J_1(G)) & \text{if $|\calI_{1}^{(2)} \cap \sigma_{s_0}(J_1(G))| > 4$}.
\end{cases}
\end{align*}

One can see that
\begin{align}\label{eq: I and hI equal}
\{\sfh_i \mid i \in \calI^{(t)}(G)\} = \{\sfh_{\sigma_{s_0}^{-1}(i)} \mid i \in \wcI^{(t)}(G)\}.
\end{align}
for $t = 1,2$. 
For instance, assume that $|\calI_{s_0}^{(1)} \cap J_1(G)| < 4$. Then, by Equation~\eqref{eq: sigma_s same cardinal}, $|\calI_{j(s_0,i)}^{(t)} \cap \sigma_{s_0}(J_1(G))| < 4$. Thus, we have
\begin{align*}
\calI^{(1)}(G) = \calI_{s_0}^{(1)} \cap J_1(G) ~\; \text{and} ~\; \wcI^{(1)}(G) = \calI_{1}^{(1)} \cap \sigma_{s_0}(J_1(G)).
\end{align*}
Observe that
\begin{align*}
\{\sfh_i \mid i \in \calI^{(1)}(G) \} & = \{\sfh_i \mid i \in \calI_{s_0}^{(1)} \cap J_1(G) \} \\
& = \left\{\sfh_{\sigma_{s_0}^{-1}(i)} \; \middle|\; i \in \sigma_{s_0}\left(\calI_{s_0}^{(1)} \cap J_1(G)\right) \right\} \\
& = \left\{\sfh_{\sigma_{s_0}^{-1}(i)} \; \middle|\; i \in \calI_{1}^{(1)} \cap \sigma_{s_0}(J_1(G)) \right\} \\
& = \left\{\sfh_{\sigma_{s_0}^{-1}(i)} \; \middle|\; i \in \wcI^{(1)}(G) \right\}.
\end{align*}

By Remark~\ref{rem: added column}, the above modified remark, and Equation~\eqref{eq: I and hI equal}, $\tG$ and $\hG$ are equal up to column permutations. This implies that it suffices to consider all the possible cases only when $s=1$ to show the extended length by Algorithm \ref{alg: SO embedding dim4} is at most five. 

For the remaining part of the proof, we prove the following two claims.\smallskip

\noindent{\bf Claim 1.}
The difference between the length of $\tC$ and that of $\C$  cannot  be $7$.\smallskip

\noindent{\it Proof.\;\;}It is easy to know that $0 \le \fn^{(1)}_1 + \fn^{(2)}_1 \le 7$ in Step $({\bf C} {3})$ of Algorithm \ref{alg: SO embedding dim4} since $0\le \fn^{(1)}_1 \le 3$ and $0\le \fn^{(2)}_1 \le 4$. Note that $\fn^{(1)}_1 + \fn^{(2)}_1 =7$ only when $\fn^{(1)}_1 =3$ and $\fn^{(2)}_1 = 4$. Therefore, the number of all possible cases is $\binom {7}{3}\cdot\binom{8}{4}=2450$. The values of $\mathrm{min}\{\fn^{(1)}_s + \fn^{(2)}_s \mid 1 \le s \le 15\}$ for  such cases in Step $({\bf C} {3})$ are checked by MAGMA to find out that they are at most five.\smallskip

\noindent{\bf Claim 2.}
The difference between the length of $\tC$ and that of $\C$  cannot be $6$.\smallskip

\noindent{\it Proof.\;\;} Similarly, $\fn^{(1)}_1 + \fn^{(2)}_1 =6$ only when $\fn^{(1)}_1 = \fn^{(2)}_1 = 3$,  or $\fn^{(1)}_1 =2$ and $\fn^{(2)}_1 = 4$. Therefore, the number of possible cases are $\binom {7}{3}\cdot \binom{8}{3} + \binom {7}{2}\cdot\binom{8}{4}= 1960 + 1470=3430$. The values of $\mathrm{min}\{\fn^{(1)}_s + \fn^{(2)}_s \mid 1 \le s \le 15\}$ for  such cases in Step $({\bf C} {3})$ are checked by MAGMA to find out that they are at most five.\smallskip

Hence a shortest SO embedding of an $[n,4]$ code can be constructed by juxtaposing at most five columns.
\end{proof}

\begin{example}\label{example:dim4} \hfill\\
\noindent  (1) Let $\C_{7,4}$ be the \emph{$[7,4]$ Hamming  code} and take its generator matrix
\begin{align*}
G_{7,4} = \left[\arraycolsep=2.5pt \begin{array}{ccccccc} 
0&0&0&0&1&1&1\\
0&0&1&1&0&0&1\\
0&1&0&1&0&1&0\\
1&0&0&1&0&1&1
\end{array} \right].
\end{align*}

Since $\ell_1 = \ell_2 = \ell_4 = \ell_7 = \ell_8 = \ell_{11} = \ell_{13}= 1$, we have
\begin{align*}
J_1(G_{7,4}) &= \left\{i \in \{1,2,\ldots,15\} \; \middle| \; \ell_i \equiv_2 1  \right\}\\
& = \{1,2,4,7,8,11,13\}.
\end{align*}
Applying Steps ({\bf C}1) and ({\bf C}2), we have
\[\arraycolsep=3pt
\begin{array}{c|c|c|c|c|c|c|c|c|c|c|c|c|c|c|cl}
s           &1&2&3&4&5&6&7&8&9&10&11&12&13&14&15&  \\ \cline{1-16}
\fn_s^{(1)} &3&3&3&3&3&3&3&3&3& 3& 3& 3& 3& 3& 0 &\\ \cline{1-16}
\fn_s^{(2)} &3&3&4&4&4&3&3&4&3& 3& 4& 4& 4& 4& 1&.
\end{array}
\]
Thus, $s_0$ is set to be $15$ in Step ({\bf C}3). Recall that
\[
\calI_{15}^{(1)} = \{3,5,6,9,10,12,15\}
\]
and
\[
\calI_{15}^{(2)} = \{1,2,4,7,8,11,13,14\}. 
\]
In Step ({\bf C}4), we put $\tG_{7,4} \leftarrow G_{7,4}$ and in Step ({\bf C}5), we let
\[
\calI^{(1)} \leftarrow \calI_{15}^{(1)} \cap J_1(\tG_{7,4}) = \emptyset.
\]
Thus, we pass Step ({\bf C}6) and go to Step ({\bf C}7). In Step ({\bf C}7), let
\[
\calI^{(2)} \leftarrow \calI_{15}^{(2)} \setminus J_1(\tG_{7,4}) = \{14\}.
\]
Applying Step ({\bf C}8), we put
\begin{align*}
\tG_{7,4} \leftarrow \left[\arraycolsep=2.5pt \begin{array}{ccccccc||c} 
0&0&0&0&1&1&1&1\\
0&0&1&1&0&0&1&1\\
0&1&0&1&0&1&0&1\\
1&0&0&1&0&1&1&0
\end{array} \right].
\end{align*}
Go to Step ({\bf C}7) and repeat the steps. Since $J_1(\tG_{7,4}) = \emptyset$, 
\[
\calI^{(2)} \leftarrow \calI_{15}^{(2)} \cap J_1(\tG_{7,4}) = \emptyset
\]
and thus, the algorithm is terminated.

The code $\tC_{7,4}$ generated by $\tG_{7,4}$ is a shortest SO embedding of $\C_{7,4}$, which is optimal as well.  Notice that $\tC_{7,4}$ is the \emph{$[8,4]$ extended Hamming  code}. \medskip

\noindent (2) Let $\C_{5,4}$ be an optimal $[5,4,2]$ code generated by
\[\arraycolsep=2.5pt
G_{5,4} = \left[
\begin{array}{ccccc}
0&0&0&1&1\\
0&0&1&0&1\\
0&1&0&0&1\\
1&0&0&0&1
\end{array}
\right].
\]
Note that $\ell_1 = \ell_2 = \ell_4 = \ell_8 = \ell_{15}$. Therefore,
\begin{align*}
J_1(G_{5,4}) & = \left\{i \in \{1,2,\ldots,15\} \; \middle| \; \ell_i \equiv_2 1  \right\} \\
& = \{1,2,4,8,15\}.
\end{align*}
Applying Step ({\bf C}1) and Step ({\bf C}2), we have
\[\arraycolsep=3pt
\begin{array}{c|c|c|c|c|c|c|c|c|c|c|c|c|c|c|cl}
s           &1&2&3&4&5&6&7&8&9&10&11&12&13&14&15 & \\ \cline{1-16}
\fn_s^{(1)} &3&3&3&3&3&3&1&3&3& 3& 1& 3& 1& 1& 1 &\\ \cline{1-16}
\fn_s^{(2)} &2&2&2&2&2&2&4&2&2& 2& 4& 2& 4& 4& 4 &.
\end{array}
\]
Thus, $s_0$ is set to be $1$ in Step ({\bf C}3). Recall that
\[
\calI_{1}^{(1)} = \{1,2,3,4,5,6,7\}
\]
and
\[
\calI_{1}^{(2)} = \{8,9,10,11,12,13,14,15\}. 
\]
In Step ({\bf C}4), we put $\tG_{5,4} \leftarrow G_{5,4}$ and in Step ({\bf C}5), we let
\[
\calI^{(1)} \leftarrow \calI_{1}^{(1)} \cap J_1(\tG_{5,4}) = \{1,2,4\} .
\]
Applying Step ({\bf C}6), we put
\[\arraycolsep=2.5pt
\tG_{5,4} \leftarrow \left[
\begin{array}{ccccc||c}
0&0&0&1&1&0\\
0&0&1&0&1&0\\
0&1&0&0&1&0\\
1&0&0&0&1&1
\end{array}
\right].
\]
Go to Step ({\bf C}5) and repeat the steps. Then, we have
\[\arraycolsep=2.5pt
\tG_{5,4} \leftarrow \left[
\begin{array}{ccccc||ccc}
0&0&0&1&1&0&0&0\\
0&0&1&0&1&0&0&1\\
0&1&0&0&1&0&1&0\\
1&0&0&0&1&1&0&0
\end{array}
\right].
\]
and $J_1(\tG_{5,4}) = \{8,15\}$.
Therefore, we have 
\[
\calI^{(1)} = \calI_{1}^{(1)} \cap J_1(\tG_{5,4}) = \emptyset
\]
and thus, go to Step ({\bf C}7). In Step ({\bf C}7), let
\[
\calI^{(2)} \leftarrow \calI_{1}^{(2)} \cap J_1(\tG_{5,4}) = \{8,15\}.
\]
Applying Step ({\bf C}8), we put
\[\arraycolsep=2.5pt
\tG_{5,4} \leftarrow \left[
\begin{array}{cccccccc||c}
0&0&0&1&1&0&0&0&1\\
0&0&1&0&1&0&0&1&0\\
0&1&0&0&1&0&1&0&0\\
1&0&0&0&1&1&0&0&0
\end{array}
\right].
\]
Go to Step ({\bf C}7) and repeat the steps. Then we have
\[\arraycolsep=2.5pt
\tG_{5,4} \leftarrow \left[
\begin{array}{cccccccc||cc}
0&0&0&1&1&0&0&0&1&1\\
0&0&1&0&1&0&0&1&0&1\\
0&1&0&0&1&0&1&0&0&1\\
1&0&0&0&1&1&0&0&0&1
\end{array}
\right]
\]
and $J_1(\tG_{5,4}) = \emptyset$.
Therefore,
\[
\calI^{(2)} \leftarrow \calI_{1}^{(2)} \cap J_1(\tG_{5,4}) = \emptyset
\]
and thus, we terminate the algorithm in Step ({\bf C}8).

The code $\tC_{5,4}$ generated by $\tG_{5,4}$ is an $[10,4,4]$ SO code, which is a shortest SO embedding of $\C_{5,4}$. Moreover, $\tC_{5,4}$ is optimal SO.
\end{example}

\begin{remark}\label{rem: Alg dim 4 modify}\hfill
\begin{enumerate}[label = {\rm (\arabic*)}]
\item For simplicity of the algorithm, we chose $s_0$ to be the smallest among $1\le s\le 15$ satisfying the condition 
\[
\fn^{(1)}_{s_0} + \fn^{(2)}_{s_0} = \mathrm{min}\{\fn^{(1)}_s + \fn^{(2)}_s \mid 1 \le s \le 15\}
\]
in Step  ({\bf C}3) of Algorithm \ref{alg: SO embedding dim4}. Notice that even if we choose any element $1\le s_0' \le 15$ satisfying
\[
\fn^{(1)}_{s_0'} + \fn^{(2)}_{s_0'} = \mathrm{min}\{\fn^{(1)}_s + \fn^{(2)}_s \mid 1 \le s \le 15\},
\]
Algorithm~\ref{alg: SO embedding dim4} still gives a shortest SO embedding.
\item For simplicity of the algorithm, when $|\calI_{s_0}^{(2)} \cap J_1(\tG)| = 4$, we let
\[
\calI^{(2)} \leftarrow \calI_{s_0}^{(2)} \cap J_1(\tG)
\]
in Step ({\bf C}7) of Algorithm \ref{alg: SO embedding dim4}. However, even if we let
\[
\calI^{(2)} \leftarrow \calI_{s_0}^{(2)} \setminus J_1(\tG),
\]
Algorithm~\ref{alg: SO embedding dim4} still gives a shortest SO embedding.
\end{enumerate}
\end{remark}

\subsection{Algorithm for dimension 5}
In this subsection,  we introduce an algorithm that embeds an $[n,5]$ code to an SO code using Algorithm~\ref{alg: SO embedding dim4}. Recall that for a matrix $G$, we denote by $r_i(G)$ the $i$th row of $G$. 

The basic idea of the following algorithm is as follows.
\begin{enumerate}[label = -]
\item We add at most two columns to $G$ to obtain a matrix $G_1$ satisfying $r_{j_0}(G_1) \cdot r_i(G_1) \equiv_2 0$ for $j_0 \in \{1,2,\ldots,5\}$ and all $1 \le i \le 5$.
\item For the submatrix $G_2$ which consists of the rows of $G_1$ except $r_{j_0}(G_1)$, apply Algorithm~\ref{alg: SO embedding dim4} (let $\tG_2$ be the resulting matrix).
\item Juxtapose $r_{j_0}(G_1)$ and a zero vector horizontally so that the resulting vector $\widetilde{r}_{j_0}(G_1)$ has the same number of coordinates to the number of columns of $\tG_2$.
\item Attach $\widetilde{r}_{j_0}(G_1)$ and $\tG_2$ vertically to obtain the matrix  $\tG$.
\end{enumerate}

\begin{algorithm}\label{alg: SO embedding dim5}\hfill
\begin{enumerate}[label=$\bullet$]
\item  Input: A generator matrix $G$ of an $[n,5]$ code. 
\item  Output: A generator matrix $\tG$ for an SO embedding. 
\begin{enumerate}[label = ({\bf D}\arabic*)]
\item If there is $j_0 \in \{1,2,3,4,5\}$ such that $r_{j_0}(G) \cdot r_i(G) \equiv_2 0$ for all $1 \le i \le 5$, then put $G_1 \leftarrow G$, let $G_2$ be the matrix obtained by deleting the $j_0$th row of $G_1$, and go to Step ({\bf D}3). Otherwise, go to Step ({\bf D}2).
\item Do Step ({\bf D}2-1) or Step ({\bf D}2-2).
\begin{enumerate}[label = ({\bf D}2-\arabic*)]
\item Suppose $r_{j_0}(G) \cdot r_{j_0}(G) \equiv_2 1$ for some $1 \le j_0 \le 5$. Then put
\[G_1 \leftarrow \left[ \arraycolsep=2.5pt\begin{array}{c||c} 
r_1(G) & y_1\\
\vdots & \vdots \\
r_{j_0}(G) & 1\\
\vdots & \vdots \\
r_5(G) & y_5 
\end{array} \right]. \]
where $y_i = r_i(G) \cdot r_{j_0}(G)$. Let $G_2$ be the matrix obtained by deleting the $j_0$th row of $G_1$. Go to  Step ({\bf D}3).
\item Otherwise, that is, if $r_{j}(G) \cdot r_{j}(G) \equiv_2 0$ for all $1 \le j \le 5$, then let $j_0 = 1$ so that $r_{j_0}(G) = r_1(G)$ and put
\[
G_1 \leftarrow \left[ \arraycolsep=2.5pt\begin{array}{c||cc} 
r_1(G) & 1		& 1 \\
r_2(G) & y_2 	& 0\\
\vdots & \vdots	& \vdots\\
r_5(G) & y_5 	& 0
\end{array} \right].
\]
where $y_i = r_i(G) \cdot r_{1}(G)$. Let $G_2$ be the matrix obtained by deleting the first row of $G_1$. Go to Step ({\bf D}3).
\end{enumerate}
\item Let $\tG_2$ be the resulting matrix when we input $G_2$ into Algorithm~\ref{alg: SO embedding dim4} and let $l$ be the difference between the number of columns of $\tG_2$ and $G_2$. Go to Step ({\bf D}4).
\item Let $\widetilde{r}_{j_0}(G_1)$ be the vector obtained by juxtaposing $r_{j_0}(G_1)$ and the zero vector $\mathbf{0}$ of length $l$. Go to Step ({\bf D}5)
\item Let $\tG$ be the matrix obtained by putting together $\tG_2$ and $\widetilde{r}_{j_0}(G_1)$. Terminate the algorithm.
\end{enumerate}
\end{enumerate}
\end{algorithm}

\begin{remark}
We can obtain an SO code from an $[n,5]$ code by adding at most $2$ columns using algorithm~\ref{alg: SO embedding dim5} and adding at most $5$ columns using algorithm~\ref{alg: SO embedding dim4}. Therefore, we need at most $7$ columns to get an SO code from an $[n,5]$ code.
\end{remark}

With the resulting matrix $\tG$ of Algorithm~\ref{alg: SO embedding dim5}, we let
\begin{align*}
\tC := \text{the linear code generated by $\tG$.}
\end{align*}
Then we have the following theorem.

\begin{theorem}
Let $\C$ be an $[n,5]$ code generated by $G$. Then $\tC$ is an SO embedding of $\C$.
\end{theorem}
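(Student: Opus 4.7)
The plan is to reduce the five-dimensional claim to the four-dimensional result already established by Theorem~\ref{thm: shortest alg dim 4}. Concretely, Step~({\bf D}1) or~({\bf D}2) attaches at most two new columns to $G$ in such a way that some distinguished row $r_{j_0}(G_1)$ becomes orthogonal modulo $2$ to every row of $G_1$, including itself. Algorithm~\ref{alg: SO embedding dim4} is then applied to the remaining four rows, and $\tG$ is formed by stacking the resulting four-dimensional SO extension $\tG_2$ together with the zero-padded copy $\widetilde{r}_{j_0}(G_1)$.

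First I would verify the key orthogonality property: after Step~({\bf D}1) or~({\bf D}2),
\[
r_{j_0}(G_1)\cdot r_i(G_1)\equiv_2 0\qquad\text{for every }1\le i\le 5.
\]
Case~({\bf D}1) is immediate from the hypothesis of the step. For Case~({\bf D}2-1), the single appended column has entry $1$ in row $j_0$ and $y_i=r_i(G)\cdot r_{j_0}(G)$ in row $i\ne j_0$, so the $(j_0,i)$ inner product becomes $r_{j_0}(G)\cdot r_i(G)+1\cdot y_i=y_i+y_i\equiv_2 0$, while $r_{j_0}(G_1)\cdot r_{j_0}(G_1)=r_{j_0}(G)\cdot r_{j_0}(G)+1\equiv_2 1+1\equiv_2 0$ by the defining hypothesis of~({\bf D}2-1). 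For Case~({\bf D}2-2), where $j_0=1$ and every $r_j(G)\cdot r_j(G)$ is even, the two appended columns contribute $1\cdot y_i+1\cdot 0=y_i$ to $r_{1}(G)\cdot r_i(G)$ for $i\ge 2$, yielding $y_i+y_i\equiv_2 0$, and contribute $1+1\equiv_2 0$ to the self inner product $r_1(G_1)\cdot r_1(G_1)$.

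Next I would invoke the four-dimensional theorem. Deleting the $j_0$th row of $G_1$ leaves a full-rank four-dimensional generator matrix $G_2$, to which Theorem~\ref{thm: shortest alg dim 4} applies, so the output $\tG_2$ of Step~({\bf D}3) generates a four-dimensional SO code. To conclude, I would verify pairwise orthogonality of the rows of $\tG$ case by case: two rows both coming from $\tG_2$ are orthogonal by the SO property of $\tG_2$; the inner product $\widetilde{r}_{j_0}(G_1)\cdot\widetilde{r}_{j_0}(G_1)$ equals $r_{j_0}(G_1)\cdot r_{j_0}(G_1)$, which is even since the appended entries are zeros; and for $i\ne j_0$, the zero padding annihilates any contribution from the $l$ columns added by Algorithm~\ref{alg: SO embedding dim4}, so the inner product of $\widetilde{r}_{j_0}(G_1)$ with the row of $\tG_2$ coming from $r_i(G_1)$ reduces to $r_{j_0}(G_1)\cdot r_i(G_1)\equiv_2 0$ by the orthogonality property of Step~1.

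The main obstacle is the bookkeeping in Case~({\bf D}2-2): a single appended column would either fail to fix the parity of $r_1\cdot r_1$ (already even) or break orthogonality with some other row, so two columns are genuinely necessary, and one has to check that precisely the algorithm's choice (one column $[1,y_2,\ldots,y_5]^T$ and one column $[1,0,\ldots,0]^T$) satisfies both constraints simultaneously. Once the orthogonality of $r_{j_0}(G_1)$ to every row is secured, the reduction to the dim-$4$ theorem plus the zero-padding observation immediately gives that $\tC$ is self-orthogonal, and since $\tG$ is by construction obtained from $G$ by adjoining columns (with rows reindexed to the original positions), $\tC$ is an SO embedding of $\C$.
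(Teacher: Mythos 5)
Your proposal is correct and follows exactly the route the paper intends: the paper's own proof is a one-line appeal to "following the procedure of Algorithm~\ref{alg: SO embedding dim5}," and what you have written is precisely the verification it leaves implicit — that Steps~({\bf D}1)/({\bf D}2) make $r_{j_0}(G_1)$ orthogonal to every row (including itself), that Theorem~\ref{thm: shortest alg dim 4} handles the remaining four rows, and that the zero padding preserves all the needed inner products. All three orthogonality computations, including the observation that Case~({\bf D}2-2) genuinely requires two columns, check out.
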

\begin{proof}
By following the procedure of Algorithm~\ref{alg: SO embedding dim5} we easily see that the code $\tC$ is an SO embedding.
\end{proof}

\begin{example}
Let $\C_{9,5}$ be an optimal $[9,5,3]$ code generated by
\[\arraycolsep=2.5pt
G_{9,5} = \left[
\begin{array}{ccccccccc}
0&0&0&0&1&1&1&1&1\\
0&0&0&1&0&0&1&1&1\\
0&0&1&0&0&1&0&1&1\\
0&1&0&0&0&1&1&0&1\\
1&0&0&0&0&1&1&1&0
\end{array}
\right].
\]

Since there are no $i_0 \in \{1,2,3,4,5\}$ such that $r_{i_0}(G) \cdot r_i(G) \equiv_2 0$ for all $1\le i \le 5$, we pass Step ({\bf D}1). In Step ({\bf D}2), since $ r_1(G)  \cdot  r_1(G) \equiv_2 1$, we apply Step ({\bf D}2-1). In Step ({\bf D}2-1), we put
\[\arraycolsep=2.5pt
G_1 \leftarrow \left[
\begin{array}{c||c}
\multirow{5}{*}{$G$} & 1\\
& 1 \\ 
& 1 \\
& 1 \\
& 1 \\
\end{array}\right]
=
\left[
\begin{array}{ccccccccc||c}
0&0&0&0&1&1&1&1&1&1\\
0&0&0&1&0&0&1&1&1&1\\
0&0&1&0&0&1&0&1&1&1\\
0&1&0&0&0&1&1&0&1&1\\
1&0&0&0&0&1&1&1&0&1
\end{array}
\right]
\]
and by deleting $r_1(G)$, let
\[\arraycolsep=2.5pt
G_2 =
\left[
\begin{array}{ccccccccc||c}
0&0&0&1&0&0&1&1&1&1\\
0&0&1&0&0&1&0&1&1&1\\
0&1&0&0&0&1&1&0&1&1\\
1&0&0&0&0&1&1&1&0&1
\end{array}
\right].
\]
In Step ({\bf D}3), we let
\[\arraycolsep=2.5pt
\tG_2 = 
\left[
\begin{array}{cccccccccc||c}
0&0&0&1&0&0&1&1&1&1&1\\
0&0&1&0&0&1&0&1&1&1&1\\
0&1&0&0&0&1&1&0&1&1&1\\
1&0&0&0&0&1&1&1&0&1&1
\end{array}
\right]
\]
and $l = 1$. In Step ({\bf D}4), we let 
\[
\arraycolsep=2.5pt \widetilde{r}_1(G_1) = \left[\begin{array}{cccccccccc||c}
0&0&0&0&1&1&1&1&1&1 &0 \end{array}\right].
\]
In Step ({\bf D}5), let
\[\arraycolsep=2.5pt
\tG_{9,5} = \left[
\begin{array}{c}
\widetilde{r}_1(G_1) \\ 
\tG_2
\end{array}\right]
=
\left[
\begin{array}{ccccccccccc}
0&0&0&0&1&1&1&1&1&1&0 \\ 
0&0&0&1&0&0&1&1&1&1&1 \\
0&0&1&0&0&1&0&1&1&1&1 \\
0&1&0&0&0&1&1&0&1&1&1 \\
1&0&0&0&0&1&1&1&0&1&1
\end{array}
\right].
\]

The code $\tC_{9,5}$ generated by $\tG_{9,5}$ is an $[11,5,4]$ SO code, which is an SO embedding of $\C_{9,5}$. Moreover, $\tG_{9,5}$ is optimal SO.
\end{example}

\begin{remark}
One can embed an $[n,6]$ code to an SO code in the same manner as Algorithm~\ref{alg: SO embedding dim5}.
In fact, Algorithm~\ref{alg: SO embedding dim5} can be generalized recursively to construct an SO embedding for higher-dimensional linear codes. We leave this generalized SO embedding algorithm written in MAGMA in Appendix~\ref{apx: alg dim ge 5}.
\end{remark}

\section{The minimum distances of optimal linear codes and optimal SO codes}\label{sec: min dist}

In this section, for $k = 1,2,3,4,5$, we enumerate an upper bound on $d(n,k)$ given by Griesmer~\cite{60Griesmer}
and the explicit formula for $d(n,k)$ given by Baumert and McEliece\cite{73BauMc}.
Furthermore, we calculate explicit formulas of $\dso(n,k)$ when $k=1,2,3,4$, and  $\dso(n,5)$ when $n\not\equiv_{31} 6,13,14,21,22,28,29$.

First, let us consider the one-dimensional case. It is clear that for each $n \in \Z^+$, the only optimal $[n,1]$ code is the repetition code.
On the other hand, it is also trivial  that an optimal $[n,1]$ SO code for an even integer $n$ is the $[n, 1, n]$ repetition code and that for an odd $n$ is the $[n,1,n-1]$ code which is obtained from $[n-1,1,n-1]$ repetition code by attaching a zero coordinate. Thus, we have
\[
d(n,1) = n \quad \text{and}\quad \dso(n,1) = \begin{cases}
n & \text{if $n \equiv_2 0$},\\
n-1 & \text{if $n \equiv_2 1$}.
\end{cases}
\]

Next, let us consider the two-dimensional case. 
The optimal $[n,2]$ codes are introduced in~\cite{00Jaf}. 
On the other hand, in \cite[Section 3]{06bou}, the authors explained that optimal $[n,2]$ codes are SO only for $n = 6m, 6m+1, 6m+4$ and constructed all optimal $[n,2]$ SO codes. 
Summing it up, one can immediately obtain the following theorem.

\begin{theorem}{\rm (\cite{06bou, 73BauMc, 00Jaf})}
\label{thm: optimum distance dim 2} 
The following explicit formulas hold.
\begin{enumerate}[label = {\rm (\arabic*)}]
\item For $n \ge 2$, we have
\begin{align*}
d(n,2) = \floor{2n/3}.
\end{align*}
\item For $n \ge 4$,
\begin{align*}
\dso(n,2) = \begin{cases}
\floor{2n/3} & \text{if $n \equiv_6 0,1,4$,} \\
\floor{2n/3} -1 & \text{if $n \equiv_6 2,5$,}\\
\floor{2n/3} -2 & \text{if $n \equiv_6 3$.}
\end{cases}
\end{align*}
\end{enumerate}
\end{theorem}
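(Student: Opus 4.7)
The plan is to treat the two parts separately. For (1), the upper bound is the Griesmer bound and the lower bound comes from juxtapositions of copies of the simplex generator matrix $\nmH_2$. For (2), the $\ell_i$-description of a generator matrix lets one read off the three codeword weights, and Lemma~\ref{lem: 2 by n SO condtion} turns self-orthogonality into an even-parity constraint on the $\ell_i$'s, so $\dso(n,2)$ becomes the optimum of an elementary integer program.

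For (1), Theorem~\ref{thm: Griesmer bdd} with $k=2$ gives $g(2,d)=d+\lceil d/2\rceil=\lceil 3d/2\rceil$, hence $n\ge\lceil 3d/2\rceil$ and $d\le\lfloor 2n/3\rfloor$. For the matching construction, write $n=3m+r$ with $r\in\{0,1,2\}$: the $m$-fold juxtaposition $m\nmH_2$ produces a $[3m,2,2m]$ code; appending one zero column when $r=1$ preserves the minimum distance at $2m=\lfloor 2n/3\rfloor$; and when $r=2$, appending $\sfh_1$ and $\sfh_2$ raises the weight of each nonzero codeword by at least $1$, giving a $[3m+2,2,2m+1]$ code whose distance matches $\lfloor 2n/3\rfloor$.

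For (2), let $\ell_0$ denote the number of zero columns of a generator matrix $G$, so $\ell_0+\ell_1+\ell_2+\ell_3=n$. A direct reading of $\nmH_2$ shows that the three nonzero codewords of $\C$ have weights $\ell_2+\ell_3$, $\ell_1+\ell_3$, and $\ell_1+\ell_2$, so the minimum distance is
\[
d(\C)=(\ell_1+\ell_2+\ell_3)-\max\{\ell_1,\ell_2,\ell_3\}.
\]
By Lemma~\ref{lem: 2 by n SO condtion}, $\C$ is SO if and only if $\ell_1,\ell_2,\ell_3$ are all even; hence $\dso(n,2)$ is the maximum of the displayed expression over nonnegative integers $(\ell_0,\ell_1,\ell_2,\ell_3)$ with sum $n$ and with $\ell_1,\ell_2,\ell_3\equiv_2 0$. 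I would then perform a case split on $n\bmod 6$: when $n\equiv_6 0,4$ a balanced even triple (namely $(2m,2m,2m)$ or $(2m+2,2m+2,2m)$) with $\ell_0=0$ hits the Griesmer value $\lfloor 2n/3\rfloor$; when $n\equiv_6 1$ the parity of the sum forces $\ell_0\ge 1$, but $(2m,2m,2m)$ still achieves $\lfloor 2n/3\rfloor$; the residues $n\equiv_6 2,5$ each lose exactly $1$ and $n\equiv_6 3$ loses exactly $2$, because in these cases the most balanced triple of even numbers summing to at most $n$ is forced to be skewed from $n/3$ by exactly the stated amount. The extremal triples are realized by juxtaposing $\sfh_1,\sfh_2,\sfh_3$ with the corresponding multiplicities (and any needed zero columns). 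The only nontrivial step is this parity bookkeeping in the six residues, and even that is pure arithmetic once the distance formula above is in hand.
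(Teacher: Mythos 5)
Your argument is correct, but it is genuinely different from what the paper does: the paper offers no proof of this theorem at all, stating it as a consequence of the classification of optimal $[n,2]$ codes in \cite{00Jaf} and of optimal $[n,2]$ SO codes in \cite[Section 3]{06bou} (``Summing it up, one can immediately obtain the following theorem''). You instead derive both formulas from first principles. For (1) your route --- Griesmer upper bound $n\ge\ceil{3d/2}$ plus a matching construction by juxtaposing copies of $\nmH_2$ and patching the remainder --- is exactly the template the paper itself uses later for dimensions $3$ and $4$ (Theorems~\ref{thm: optimum distance dim 3} and~\ref{thm: optimum distance dim 4}), so you are being more uniform than the paper. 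For (2) your reduction is the real added content: the weight formula $d(\C)=(\ell_1+\ell_2+\ell_3)-\max\{\ell_1,\ell_2,\ell_3\}$ together with Lemma~\ref{lem: 2 by n SO condtion} turns $\dso(n,2)$ into maximizing the sum of the two smallest entries of an even nonnegative triple with $\ell_1+\ell_2+\ell_3\le n$, and the six-residue bookkeeping checks out (e.g.\ for $n\equiv_6 3$ the even sum is at most $n-1$ and the maximum entry is at least $2m+2$, forcing the loss of exactly $2$). Two small points worth making explicit if you write this up: the invariance of the condition ``all $\ell_i$ even'' under change of generator matrix (a basis change only permutes the roles of $\sfh_1,\sfh_2,\sfh_3$), and the fact that your extremal tuples always have at least two positive $\ell_i$'s for $n\ge 4$, so the constructed matrices really have rank $2$. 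With those remarks your proof is self-contained where the paper relies on external classifications.
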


Theorem~\ref{thm: optimum distance dim 2} implies the following corollary.
\begin{corollary}
For $n \ge 4$, $d(n,2) = \dso(n,2)$ if and only if $n \equiv_6 0,1,4$.
\end{corollary}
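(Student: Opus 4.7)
The plan is to read the corollary off directly from Theorem~\ref{thm: optimum distance dim 2}, since both $d(n,2)$ and $\dso(n,2)$ are given by explicit piecewise formulas in that theorem. No further combinatorial argument is required; the statement is a straightforward equality check across the six residue classes modulo~$6$.

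Concretely, I would proceed as follows. First, recall from Theorem~\ref{thm: optimum distance dim 2}(1) that $d(n,2) = \lfloor 2n/3 \rfloor$ for every $n \ge 2$, with no dependence on the residue class of $n$. Next, invoke Theorem~\ref{thm: optimum distance dim 2}(2) to split into cases according to $n \pmod 6$: when $n \equiv_6 0,1,4$, one has $\dso(n,2) = \lfloor 2n/3 \rfloor = d(n,2)$, so equality holds; when $n \equiv_6 2,5$ one has $\dso(n,2) = \lfloor 2n/3 \rfloor - 1 < d(n,2)$; and when $n \equiv_6 3$ one has $\dso(n,2) = \lfloor 2n/3 \rfloor - 2 < d(n,2)$. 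In the latter three cases $\dso(n,2) \ne d(n,2)$, so equality fails.

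Combining the two directions, the equality $d(n,2) = \dso(n,2)$ holds if and only if $n \equiv_6 0,1,4$, which is exactly the claim. The only potential subtlety is to make sure that the case analysis from Theorem~\ref{thm: optimum distance dim 2}(2) covers all residues modulo~$6$ and that in each of the ``bad'' cases the asserted strict decrease really produces a value different from $\lfloor 2n/3 \rfloor$; both are immediate from the displayed formulas, so there is no real obstacle. The argument is essentially a one-line comparison of the two formulas in Theorem~\ref{thm: optimum distance dim 2}.
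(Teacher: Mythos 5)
Your proof is correct and matches the paper's approach: the paper states this corollary as an immediate consequence of Theorem~\ref{thm: optimum distance dim 2}, and your residue-by-residue comparison of the two explicit formulas is exactly the intended (one-line) argument.
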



Hereafter, we will also leave our proof for known results for the reader's understanding.

Now, let us consider the three-dimensional case. The Griesmer bound says that all $[n,3,d]$ codes satisfy that 
\begin{align}\label{eq: Greismer bound for dim 3}
n \ge d + \ceil{\frac{d}{2}} + \ceil{\frac{d}{4}} = \begin{cases}
7d/4& \text{if $d \equiv_4 0$},\\
(7d+5)/4& \text{if $d \equiv_4 1$},\\
(7d+2)/4& \text{if $d \equiv_4 2$},\\
(7d+3)/4& \text{if $d \equiv_4 3$}.
\end{cases}
\end{align}
Manipulating this bound, one can obtain the following upper bound on $d(n,3)$.
\begin{lemma}{\rm (\cite{60Griesmer})}
\label{lem: d(n,k) upper bound for dim 3}
For $n \ge 3$, 
\begin{align*}
d(n,3) & \le \floor{4n/7} - \delta(n \equiv_7 2) \\
& = \begin{cases}
\floor{4n/7} & \text{if $n \not\equiv_7 2$},\\
\floor{4n/7} - 1 & \text{if $n \equiv_7 2$}. 
\end{cases}
\end{align*}
\end{lemma}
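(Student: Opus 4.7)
The plan is to invert the Griesmer bound in Equation~\eqref{eq: Greismer bound for dim 3}. For a fixed $n$, the bound $d(n,3)$ is the largest integer $d$ such that $n \geq d + \lceil d/2 \rceil + \lceil d/4 \rceil$, and Equation~\eqref{eq: Greismer bound for dim 3} makes this explicit in terms of $d \bmod 4$. So the task reduces to a finite case analysis, which I would organize by $n \bmod 7$ (since the conclusion is stated in those terms).

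First I would write $n = 7q + r$ with $0 \le r \le 6$ and compute $\lfloor 4n/7 \rfloor$ in each residue class: this gives $4q, 4q, 4q+1, 4q+1, 4q+2, 4q+2, 4q+3$ for $r = 0,1,\ldots,6$. Then for each $r$, I would test whether the candidate value $d_0 := \lfloor 4n/7 \rfloor$ actually satisfies the corresponding Griesmer inequality from Equation~\eqref{eq: Greismer bound for dim 3}, using the congruence class of $d_0$ modulo $4$ (which is forced by $q \bmod 1$ and $r$). In every case except $r = 2$ this check will succeed, establishing $d(n,3) \le d_0 = \lfloor 4n/7 \rfloor$. For $r = 2$, the candidate is $d_0 = 4q+1 \equiv_4 1$, requiring $n \ge (7d_0 + 5)/4 = 7q+3$; but $n = 7q+2$, so the Griesmer inequality fails, forcing $d(n,3) \le d_0 - 1 = \lfloor 4n/7 \rfloor - 1$. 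For completeness one also verifies that $d_0 - 1$ does satisfy Griesmer in this case (with $d_0 - 1 \equiv_4 0$, we need $n \ge 7q$, which holds).

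The only slightly delicate point is the $r = 2$ case, which is precisely what the correction term $\delta(n \equiv_7 2)$ in the statement captures. Everywhere else the naive bound $\lfloor 4n/7 \rfloor$ is already sharp against Griesmer. Since each sub-case is a short arithmetic verification, I expect no real obstacle: the argument is essentially tabular, and could be condensed into a single table with columns ``$n \bmod 7$'', ``candidate $d_0$'', ``$d_0 \bmod 4$'', and ``Griesmer RHS versus $n$'', from which the stated bound can be read off.
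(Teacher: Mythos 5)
Your overall strategy---inverting the Griesmer bound \eqref{eq: Greismer bound for dim 3} by a finite case analysis on $n \bmod 7$---is the same as the paper's, and your residue computations ($\floor{4n/7} = 4q + \floor{4r/7}$ giving $4q,4q,4q+1,4q+1,4q+2,4q+2,4q+3$) are correct. However, the logical direction of your main check is reversed, and that is a genuine gap. The Griesmer bound is a \emph{necessary} condition: if an $[n,3,d]$ code exists then $n \ge g(3,d)$. To conclude $d(n,3) \le d_0$ you must therefore show that every $d > d_0$ \emph{violates} the inequality, for which it suffices (by monotonicity of $g(3,d)$ in $d$) to check $g(3,d_0+1) > n$. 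Verifying that $d_0$ itself \emph{satisfies} $n \ge g(3,d_0)$---which is what your plan does in the six classes $r \ne 2$, and what your ``for completeness'' remark does in the class $r=2$---establishes nothing about an upper bound on $d(n,3)$; it only shows that Griesmer does not exclude $d_0$. (That verification is relevant to sharpness, i.e., to the construction in Theorem~\ref{thm: optimum distance dim 3}(1), not to this lemma.)

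The repair is routine and the arithmetic does go through: for $r=0,1$ one has $d_0+1=4q+1\equiv_4 1$ and $g(3,d_0+1)=7q+3>n$; for $r=3$, $d_0+1=4q+2$ and $g=7q+4>n$; for $r=4,5$, $d_0+1=4q+3$ and $g=7q+6>n$; for $r=6$, $d_0+1=4q+4$ and $g=7q+7>n$; and for $r=2$ your observation that $d_0=4q+1$ already requires $n\ge 7q+3$ correctly forces $d(n,3)\le 4q=\floor{4n/7}-1$. With the check redirected from $d_0$ to $d_0+1$ (plus the one-line monotonicity remark), your table-based argument becomes a valid proof and coincides in substance with the paper's, which reaches the same conclusion by taking the maximum over all $d$ compatible with each congruence class in \eqref{eq: Greismer bound for dim 3}.
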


\begin{proof}
From Equation~\eqref{eq: Greismer bound for dim 3}, we have
\begin{align*}
d(n,3) &\le \max\left(\hspace{-1ex}
\begin{array}{l}
\left\{ d_1 \in \Z_{\ge 0} \; \middle| \; d_1 \equiv_4 0 \   \text{and} \   \dfrac{7d_1}{4}\le n \right\}\\[2ex]
\cup \left\{ d_2 \in \Z_{\ge 0} \; \middle| \; d_2 \equiv_4 1 \   \text{and} \   \dfrac{7d_2 + 5}{4} \le n \right\}\\[2ex]
\cup \left\{ d_3 \in \Z_{\ge 0} \; \middle| \; d_3 \equiv_4 2 \   \text{and} \   \dfrac{7d_3 + 2}{4} \le n \right\}\\[2ex]
\cup \left\{ d_4 \in \Z_{\ge 0} \; \middle| \; d_4 \equiv_4 3 \  \text{and} \  \dfrac{7d_4 + 3}{4} \le n \right\}
\end{array}
\hspace{-1ex}\right)\\[3ex]
&= \max\left(\hspace{-1ex}
\begin{array}{l}
\left\{ \dfrac{4t_1}{7} \; \middle| \; t_1 \equiv_7 0 \   \text{and} \   0 \le t_1 \le n \right\}\\[2ex]
\cup \left\{ \dfrac{4t_2 - 5}{7} \; \middle| \; t_2 \equiv_7 3 \   \text{and} \   0 \le t_2 \le n \right\}\\[2ex]
\cup \left\{ \dfrac{4t_3 - 2}{7} \; \middle| \; t_3 \equiv_7 4 \   \text{and} \   0 \le t_3 \le n \right\}\\[2ex]
\cup \left\{ \dfrac{4t_4 - 3}{7} \; \middle| \; t_4 \equiv_7 6 \   \text{and} \   0 \le t_4 \le n \right\}
\end{array}
\hspace{-1ex}\right),
\end{align*}
by substituting
\[
t_1 = \dfrac{7d_1}{4},~ t_2 = \dfrac{7d_2 + 5}{4},~ t_3 = \dfrac{7d_3 + 2}{4},~ \text{and}~ t_4 = \dfrac{7d_4 + 3}{4}.
\]
This gives us that
\begin{align*}
d(n,3) & \le \begin{cases}
4m & \text{if $7m \le n \le 7m+2$,}\\
4m + 1  & \text{if $n = 7m+3$,}\\
4m + 2 & \text{if $n = 7m+4,7m+5$,}\\
4m + 3 & \text{if $n = 7m+6$,}\\
\end{cases}\\[1ex]
& = \floor{4n/7} - \delta(n \equiv_7 2).\qedhere
\end{align*}
\end{proof}
In~\cite{06bou}, the authors classified all optimal $[n,3]$ SO codes. It gives the minimum distances of optimal $[n,3]$ SO codes.
Based on their results, one can obtain the following theorem.
\begin{theorem}{\rm (\cite{06bou, 73BauMc})}\label{thm: optimum distance dim 3}
The following explicit formulas hold:
\begin{enumerate}[label = {\rm (\arabic*)}]
\item For $n \ge 3$,
\begin{align*}
d(n,3) &= \floor{4n/7} - \delta(n \equiv_7 2) \\[1ex]
& = \begin{cases}
\floor{4n/7} & \text{if $n \not\equiv_7 2$},\\
\floor{4n/7} - 1 & \text{if $n \equiv_7 2$}.
\end{cases}
\end{align*}
\item For $n \ge 6$,
\begin{align*}
\dso(n,3) = \begin{cases}
\floor{4n/7} & \text{if $n \equiv_7 0, 1, 5$,}\\
\floor{4n/7} - 1 & \text{if $n \equiv_7 2,3,6$,}\\
\floor{4n/7} - 2 & \text{if $n \equiv_7 4$.}
\end{cases}
\end{align*}
\end{enumerate}
\end{theorem}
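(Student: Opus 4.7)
The plan is to prove Parts (1) and (2) in parallel, with the upper bounds coming from (enhanced) Griesmer/averaging arguments and the lower bounds from explicit constructions built around the simplex code $\calS_3 = [7, 3, 4]$. For Part (1) the upper bound is already Lemma~\ref{lem: d(n,k) upper bound for dim 3}, and writing $n = 7m + r$ with $0 \le r \le 6$, I would juxtapose $m$ copies of $\calS_3$ with a short base code of length $r$: $r$ zero columns for $r \in \{0, 1, 2\}$, and a $[3, 3, 1]$, $[4, 3, 2]$, $[5, 3, 2]$, or $[6, 3, 3]$ code for $r = 3, 4, 5, 6$ respectively, each yielding an $[n, 3, 4m + d_r]$ code that matches the formula.

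For the upper bound of Part (2), Lemma~\ref{lem: 3 by n SO condtion} says that any SO generator matrix has $\ell_1 \equiv_2 \cdots \equiv_2 \ell_7$. Writing $T_v := \{i : v \cdot \sfh_i = 1\}$ for nonzero $v \in GF(2)^3$, every codeword weight $w_v = \sum_{i \in T_v} \ell_i$ is a sum of four $\ell_i$'s of common parity and is therefore even. Since each $\ell_i$ lies in exactly four of the seven sets $T_v$, we have $\sum_{v \ne 0} w_v = 4n$, giving $d = \min_v w_v \le \floor{4n/7}$; rounding down to the largest even integer matches the claimed formula for every residue class except $n \equiv_7 4$.

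The main obstacle is the residual case $n = 7m + 4$, where averaging plus evenness yield only $d \le 4m + 2$ and we must rule out equality. Suppose for contradiction that a $[7m + 4, 3, 4m + 2]$ SO code exists. From $\sum_v w_v = 28m + 16$ together with $w_v \ge 4m + 2$ and $w_v$ even, one deduces that exactly six weights equal $4m + 2$ and one weight $w_{v_0}$ equals $4m + 4$. The 3-element complements $L_v := \{1, \ldots, 7\} \setminus T_v$ are precisely the seven lines of the Fano plane $\mathrm{PG}(2, 2)$, so six lines have $\sum_{i \in L_v} \ell_i = 3m + 2$ and one exceptional line $L_{v_0}$ has sum $3m$. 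For any point $i \notin L_{v_0}$ the three Fano lines through $i$ are all non-deficient, and the identity that they cover $i$ three times and every other point exactly once gives $2\ell_i + n = 3(3m + 2)$, i.e., $\ell_i = m + 1$. But the common parity of $\ell_1, \ldots, \ell_7$ agrees with the parity of $n = 7m + 4$, while $m + 1$ has the opposite parity, a contradiction.

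For the matching SO constructions, I handle each residue class explicitly: for $r \in \{0, 1, 2, 3, 4\}$ take $m$ copies of $\calS_3$ appended with $r$ zero columns (so $\ell_i = m$ for $i > 0$ and $d = 4m$); for $r = 5$ take $(m+1)\calS_3$ and delete two copies of some fixed $\sfh_{i_0}$ (giving $\ell_{i_0} = m - 1$ and $\ell_i = m + 1$ otherwise, all of common parity, with $d = 4m + 2$); and for $r = 6$ set $\ell_1 = \ell_2 = \ell_4 = m + 2$ and $\ell_3 = \ell_5 = \ell_6 = \ell_7 = m$, for which the seven codeword weights take values $a + 3b$, $2a + 2b$, $3a + b$ with $a = m + 2$ and $b = m$, minimized at $a + 3b = 4m + 2$.
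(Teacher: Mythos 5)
Your argument is correct, and for part (2) it takes a genuinely different route from the paper. For part (1) you do essentially what the paper does: Griesmer upper bound plus juxtaposition of copies of $\calS_3$ with short optimal base codes (the paper writes $n=7s+t$ with $3\le t\le 9$ and cites code tables for the base codes; your decomposition $n=7m+r$, $0\le r\le 6$, with explicit $[r,3]$ seeds is an equivalent packaging). For part (2), however, the paper simply \emph{cites} the classification of optimal $[n,3]$ SO codes from Bouyukliev et al.\ and ``reorganizes'' it, whereas you give a self-contained proof: the upper bound comes from the parity condition of Lemma~\ref{lem: 3 by n SO condtion} (all $\ell_i$ of common parity forces all nonzero weights $w_v=\sum_{i\in T_v}\ell_i$ even, and the double count $\sum_v w_v\le 4n$ gives $d\le\floor{4n/7}$), the only delicate residue $n\equiv_7 4$ is killed by your Fano-plane argument (six lines of deficiency $3m+2$, one of deficiency $3m$, forcing $\ell_i=m+1$ off the exceptional line, which contradicts the common parity $\ell_i\equiv_2 n\equiv_2 m$), and the matching constructions are explicit column-multiplicity recipes. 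This buys a proof independent of the classification in \cite{06bou}, at the cost of the case analysis; the identity of the two approaches' outputs is of course the content of the theorem. Two small points you should make explicit: $\sum_{v\ne 0}w_v=4n$ holds only when the generator matrix has no zero columns (in general it is $4(n-\ell_0)$), but this is harmless --- for the upper bound the inequality $\le 4n$ suffices, and in the $n\equiv_7 4$ case any zero column forces $\sum_v w_v\le 28m+12<7(4m+2)$, an even quicker contradiction, so you may assume $\ell_0=0$ throughout that case; and in the $r=5$ construction one needs $m\ge 1$ so that two copies of $\sfh_{i_0}$ exist to delete, which is guaranteed since $n\ge 6$ and $n\equiv_7 5$ imply $n\ge 12$.
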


\begin{proof}
For (1), by Lemma~\ref{lem: d(n,k) upper bound for dim 3}, it suffices to find an $[n,3]$ code whose minimum distance is $\floor{4n/7} - \delta(n \equiv_7 2)$.

For $3 \le t \le 9$, such codes are known in~\cite{07codetables}. For $3 \le t \le 9$, let $\C_{t}$ be an $[t,3,\floor{4t/7} - \delta(t \equiv_7 2)]$ code and $G_t$ be a generator matrix of $\C_t$. For any $n \ge 10$, we define
\[\arraycolsep=3pt
G_n := \left[ \begin{array}{c||c} s \nmH_3 & G_t \end{array} \right],
\]
where $n = 7s + t$ for $3 \le t \le 9$. Since the minimum distance of the simplex code $\calS_3$ is $4$, $G_n$ generates the $[n,3]$ code whose minimum distance is 
\[
4s + (\floor{4t/7} - \delta(n \equiv_7 2)) = \floor{4n/7} - \delta(n \equiv_7 2).
\]

For (2), reorganizing the results in~\cite[Section 3]{06bou}, we have that for $n \ge 6$,
\begin{align*}
\dso(n,3) = \begin{cases}
\floor{4n/7} & \text{if $n \equiv_7 0, 1, 5$ and $n \neq 5$,}\\
\floor{4n/7} - 1 & \text{if $n \equiv_7 2,3,6$,}\\
\floor{4n/7} - 2 & \text{if $n \equiv_7 4$ or $n = 5$.}
\end{cases}
\end{align*}
\end{proof}

From Theorem~\ref{thm: optimum distance dim 3}, we immediately obtain the following corollary.
\begin{corollary}
For $n \ge 6$, $d(n,3) = \dso(n,3)$ if and only if $n \equiv_7 0,1,2,5$.
\end{corollary}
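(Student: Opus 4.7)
The plan is to deduce this corollary directly from Theorem~\ref{thm: optimum distance dim 3} by a residue-by-residue comparison modulo $7$. Since both $d(n,3)$ and $\dso(n,3)$ are given there as $\lfloor 4n/7 \rfloor$ minus a correction term that depends only on $n \bmod 7$, the equality $d(n,3) = \dso(n,3)$ reduces to checking when the two correction terms agree.

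First, I would rewrite the formula for $d(n,3)$ in part~(1) so that all seven residues are displayed: the correction is $1$ when $n \equiv_7 2$ and $0$ in the remaining six cases. Then I would list the correction in part~(2) for $\dso(n,3)$: it is $0$ when $n \equiv_7 0,1,5$, equal to $1$ when $n \equiv_7 2,3,6$, and equal to $2$ when $n \equiv_7 4$. Placing these side by side, one sees immediately that equality holds exactly when the corrections match, i.e.\ in the residues $0,1,2,5$, and fails in $3,4,6$.

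Concretely, I would present this as a short case check: for $n \equiv_7 0,1,5$ both quantities equal $\lfloor 4n/7 \rfloor$; for $n \equiv_7 2$ both equal $\lfloor 4n/7 \rfloor - 1$; and for $n \equiv_7 3,6$ we get $\lfloor 4n/7 \rfloor$ versus $\lfloor 4n/7 \rfloor - 1$, while for $n \equiv_7 4$ we get $\lfloor 4n/7 \rfloor$ versus $\lfloor 4n/7 \rfloor - 2$, so these three residues give strict inequality $\dso(n,3) < d(n,3)$.

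There is no real obstacle here; the only thing to be careful about is that the result is stated for $n \ge 6$, which is exactly the range in which both formulas in Theorem~\ref{thm: optimum distance dim 3} are valid (the classification in~\cite{06bou} being used for the $\dso$ side, and the simplex-juxtaposition construction together with the Griesmer bound for the $d$ side). So this corollary is purely a bookkeeping consequence of the preceding theorem.
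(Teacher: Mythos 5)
Your proposal is correct and matches the paper's treatment: the paper derives this corollary as an immediate consequence of Theorem~\ref{thm: optimum distance dim 3}, and your residue-by-residue comparison of the correction terms (agreeing for $n \equiv_7 0,1,2,5$ and differing by $1$ or $2$ for $n \equiv_7 3,6$ and $n \equiv_7 4$, respectively) is exactly the bookkeeping that justifies it. Nothing further is needed.
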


Let us consider the four-dimensional case. The Griesmer bound says that all $[n,4,d]$ codes satisfy that 
\begin{align*}
\begin{aligned}
n &\ge d + \ceil{\frac{d}{2}} + \ceil{\frac{d}{4}} + \ceil{\frac{d}{8}} \\
&= \begin{cases}
15d/8& \text{if $d \equiv_8 0$},\\
(15d+17)/8& \text{if $d \equiv_8 1$},\\
(15d+10)/8& \text{if $d \equiv_8 2$},\\
(15d+11)/8& \text{if $d \equiv_8 3$},\\
(15d+4)/8& \text{if $d \equiv_8 4$},\\
(15d+13)/8& \text{if $d \equiv_8 5$},\\
(15d+6)/8& \text{if $d \equiv_8 6$},\\
(15d+7)/8& \text{if $d \equiv_8 7$}.
\end{cases}
\end{aligned}
\end{align*}
As the previous cases, one can obtain the following upper bound on $d(n,4)$.
\begin{lemma}{\rm (\cite{60Griesmer})}
\label{lem: d(n,k) upper bound for dim 4}
For $n\ge4$, 
\begin{align*}
d(n,4) & \le \floor{8n/15} - \delta(n \equiv_{15} 2,3,4,6,10) \\[1ex]
& = \begin{cases}
\floor{8n/15} & \text{if $n \not\equiv_{15} 2,3,4,6,10$},\\
\floor{8n/15} - 1 & \text{if $n \equiv_{15} 2,3,4,6,10$}. 
\end{cases}
\end{align*}
\end{lemma}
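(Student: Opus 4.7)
The plan is to mirror the proof of Lemma~\ref{lem: d(n,k) upper bound for dim 3} (the dimension~3 analogue), just with eight residue classes modulo $8$ instead of four modulo $4$. Starting from the piecewise form of $g(4,d) = d + \lceil d/2 \rceil + \lceil d/4 \rceil + \lceil d/8 \rceil$ already displayed right above the lemma, the Griesmer bound $n \ge g(4,d)$ is rewritten as an upper bound on $d$ in each of the eight residue classes $d \equiv_8 i$, so that
\[
d(n,4) \le \max_{i=0,\dots,7}\;\sup\!\left\{d \ge 0 \;\middle|\; d \equiv_8 i,\ g(4,d) \le n \right\}.
\]

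The second step is to clear denominators with a substitution in each class. Writing $t_1 = 15d/8$, $t_2 = (15d+17)/8$, $t_3 = (15d+10)/8$, $t_4 = (15d+11)/8$, $t_5 = (15d+4)/8$, $t_6 = (15d+13)/8$, $t_7 = (15d+6)/8$, $t_8 = (15d+7)/8$ turns each family into a linear function of $t_i$ subject to $t_i \le n$ and a single residue condition modulo $15$. Specifically one finds
\[
d \in \left\{ \tfrac{8t_1}{15},\;\tfrac{8t_2-17}{15},\;\tfrac{8t_3-10}{15},\;\tfrac{8t_4-11}{15},\;\tfrac{8t_5-4}{15},\;\tfrac{8t_6-13}{15},\;\tfrac{8t_7-6}{15},\;\tfrac{8t_8-7}{15}\right\},
\]
with the constraints $t_1 \equiv_{15} 0$, $t_2 \equiv_{15} 4$, $t_3 \equiv_{15} 5$, $t_4 \equiv_{15} 7$, $t_5 \equiv_{15} 8$, $t_6 \equiv_{15} 11$, $t_7 \equiv_{15} 12$, $t_8 \equiv_{15} 14$, and $0 \le t_i \le n$ in every case. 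This is exactly the analogue of the union displayed in the proof of Lemma~\ref{lem: d(n,k) upper bound for dim 3}.

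The third step is to compute, for each residue $r = 0, 1, \dots, 14$, the maximum over the eight families when $n = 15m + r$. In each class the optimal $t_i$ is either the largest multiple pattern $\le n$ from that class (giving a contribution of the form $8m + c_i$ with $c_i \in \{0,1,\dots,7\}$), or, if $r$ is too small, the previous one (giving a much smaller value). A straightforward tabulation shows that the winning class is determined by $r$: the maximum equals $8m$ for $r \in \{0,1,2,3\}$, $8m+1$ for $r=4$, $8m+2$ for $r \in \{5,6\}$, $8m+3$ for $r=7$, $8m+4$ for $r \in \{8,9,10\}$, $8m+5$ for $r=11$, $8m+6$ for $r \in \{12,13\}$, and $8m+7$ for $r=14$. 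Comparing each outcome with $\lfloor 8n/15 \rfloor = 8m + \lfloor 8r/15 \rfloor$ shows the bound drops by one exactly when $r \in \{2,3,4,6,10\}$ and is tight otherwise, yielding the claimed formula.

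The main (and only) obstacle is bookkeeping: there are eight substitution cases paired against fifteen residues of $n$, so one must be careful not to miss a residue where a non-obvious family takes over the maximum (e.g.\ at $r = 4$ the $d \equiv_8 1$ family wins, while at $r = 10$ the $d \equiv_8 4$ family is still forced). No new coding-theoretic idea is required beyond Theorem~\ref{thm: Griesmer bdd}; the argument is a direct, if lengthy, manipulation exactly parallel to Lemma~\ref{lem: d(n,k) upper bound for dim 3}.
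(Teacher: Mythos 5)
Your proposal is correct and is exactly the argument the paper intends: the paper's proof of this lemma is literally ``It can be proved in the same manner as the proof of Lemma~\ref{lem: d(n,k) upper bound for dim 3},'' and you have carried out that same substitution-and-tabulation argument with the eight residue classes of $d$ modulo $8$ mapping to $t \equiv_{15} 0,4,5,7,8,11,12,14$, which checks out (including the final comparison isolating $r \in \{2,3,4,6,10\}$).
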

\begin{proof}
It can be proved in the same manner as the proof of Lemma~\ref{lem: d(n,k) upper bound for dim 3}. 
\end{proof}

\begin{theorem}{\rm (\cite{73BauMc})}\label{thm: optimum distance dim 4}
For $n\ge 4$, 
\begin{align*}
d(n,4) &= \floor{8n/15} - \delta(n \equiv_{15} 2,3,4,6,10) \\[1ex]
& = \begin{cases}
\floor{8n/15} & \text{if $n \not\equiv_{15} 2,3,4,6,10$},\\
\floor{8n/15} - 1 & \text{if $n \equiv_{15} 2,3,4,6,10$}. 
\end{cases}
\end{align*}
\end{theorem}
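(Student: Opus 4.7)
The upper bound $d(n,4)\le \floor{8n/15}-\delta(n\equiv_{15} 2,3,4,6,10)$ is exactly Lemma~\ref{lem: d(n,k) upper bound for dim 4}, so the plan is to exhibit, for every $n\ge 4$, an $[n,4]$ code whose minimum distance matches this upper bound. The construction mirrors the strategy used for dimension $3$ in Theorem~\ref{thm: optimum distance dim 3}.

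First, I would handle the base cases. Fix a set of representatives $R=\{4,5,\ldots,18\}$ covering each residue class modulo $15$. For every $t\in R$, the online database~\cite{07codetables} lists an $[t,4,d(t,4)]$ code, where $d(t,4)=\floor{8t/15}-\delta(t\equiv_{15} 2,3,4,6,10)$; let $G_t$ be a generator matrix of such a code. This takes care of all $n\le 18$.

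For $n\ge 19$, write $n=15s+t$ with $s\ge 1$ and $t\in R$, and define
\[
G_n:=\left[\begin{array}{c||c} s\nmH_4 & G_t\end{array}\right],
\]
where $s\nmH_4=(\nmH_4,\ldots,\nmH_4)$ is the juxtaposition of $s$ copies of the generator matrix $\nmH_4$ of the simplex code $\calS_4$. Since every nonzero linear combination of the rows of $\nmH_4$ has weight equal to the minimum distance $8$ of $\calS_4$, any nonzero codeword in the code generated by $G_n$ has weight at least $8s+d(t,4)$, and this lower bound is attained by choosing a nonzero combination that achieves the minimum on the right-hand block. Hence $G_n$ generates an $[n,4]$ code of minimum distance exactly
\[
8s+\floor{8t/15}-\delta(t\equiv_{15} 2,3,4,6,10)=\floor{8n/15}-\delta(n\equiv_{15} 2,3,4,6,10),
\]
since $n\equiv t\pmod{15}$ and $8n/15=8s+8t/15$.

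The only non-routine step is checking that each of the $15$ residue classes admits a representative $t\in R$ with an $[t,4,d(t,4)]$ code meeting the bound; this is a finite verification from~\cite{07codetables} (equivalently, it follows from the classical constructions of Baumert--McEliece~\cite{73BauMc}). Once that finite check is in hand, the juxtaposition argument propagates optimality to all $n$.
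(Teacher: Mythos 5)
Your proposal is correct and follows essentially the same route as the paper: establish the upper bound via Lemma~\ref{lem: d(n,k) upper bound for dim 4}, take optimal $[t,4]$ codes for the base cases $4\le t\le 18$ from~\cite{07codetables}, and juxtapose $s$ copies of $\nmH_4$ to propagate the bound to all $n=15s+t$. Your added remark that the simplex code is constant-weight (so the minimum distance of the juxtaposition is exactly $8s+d(t,4)$) only makes explicit a step the paper leaves implicit.
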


\begin{proof}
By Lemma~\ref{lem: d(n,k) upper bound for dim 4}, it suffices to find an $[n,4]$ code whose minimum distance is $\floor{8n/15} - \delta(n \equiv_{15} 2,3,4,6,10)$.

For $4 \le t \le 18$, such codes are known in~\cite{07codetables}. 
For $4 \le t \le 18$, let $G_t$ be a generator matrix of an optimal $[t,4]$ code.
For any $n \ge 19$, we define
\[\arraycolsep=3pt
G_n := \left[ \begin{array}{c||c} s \nmH_4 & G_t \end{array} \right],
\]
where $n = 15s + t$ for $4 \le t \le 18$. Since the minimum distance of the simplex code $\calS_4$ is $8$, $G_n$ generates the $[n,4]$ code whose minimum distance is 
\begin{align*}
&8s + (\floor{8t/15} - \delta(n \equiv_{15} 2,3,4,6,10))\\
&= \floor{8n/15} - \delta(n \equiv_{15} 2,3,4,6,10).\qedhere
\end{align*}
\end{proof}

\begin{lemma}\label{lem: no SO [n,4]}
For $n \ge 4$ such that $n\equiv_{15} 4$, there are no $\left[n,4,\floor{\frac{8n}{15}} -1 \right]$ SO codes.
\end{lemma}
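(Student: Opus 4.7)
The plan is to observe that the claimed minimum distance is odd, while every codeword of a binary self-orthogonal code has even weight, so no such code can exist.

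First I would compute $\lfloor 8n/15 \rfloor - 1$ explicitly under the hypothesis $n \equiv_{15} 4$. Writing $n = 15m + 4$ for some integer $m \ge 0$, we get
\[
\frac{8n}{15} = 8m + \frac{32}{15} = 8m + 2 + \frac{2}{15},
\]
so $\lfloor 8n/15 \rfloor = 8m+2$ and therefore $\lfloor 8n/15 \rfloor - 1 = 8m + 1$, which is odd.

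Next I would invoke the standard fact that over $GF(2)$, every codeword $\mathbf{x}$ of a self-orthogonal code $\C$ satisfies
\[
\wt(\mathbf{x}) \equiv \sum_{i=1}^n x_i^2 = \mathbf{x} \cdot \mathbf{x} = 0 \pmod 2,
\]
since $\mathbf{x} \in \C \subseteq \C^\perp$. Hence every nonzero codeword has even weight, and in particular the minimum distance of any binary SO code must be even.

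Combining these two observations, since $8m+1$ is odd, no $[n, 4, 8m+1]$ SO code can exist for $n = 15m+4$, which is exactly the claim. There is no real obstacle here; the lemma is essentially a parity observation, and its role in the paper is presumably to rule out the Griesmer-type upper bound being attained by an SO code in this congruence class, forcing $\dso(n,4) \le \lfloor 8n/15\rfloor - 2$.
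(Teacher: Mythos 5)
Your proof is correct and is essentially identical to the paper's: both write $n = 15m+4$, compute $\lfloor 8n/15\rfloor - 1 = 8m+1$, and conclude from the fact that every binary self-orthogonal code has even minimum distance (the paper simply cites this fact, while you supply the one-line justification via $\wt(\mathbf{x}) \equiv \mathbf{x}\cdot\mathbf{x} = 0 \pmod 2$). No issues.
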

\begin{proof}
Let $n = 15t + 4$ for some $t \in \Z_{\ge 0}$. Then we have
\[
\floor{\frac{8n}{15}} -1 = 8t + 1 \equiv_2 1.
\]
Since the minimum distances of SO codes should be even, our assertion follows.
\end{proof}

In~\cite{08Li}, the authors classified all optimal $[n,4]$ SO codes for $n\equiv_{15}1,2,6,7,8,9,13,14$ and proved that $\dso(n,4) < \floor{8n/15}$ for each $n \equiv_{15} 5,12$.
Combining the results in~\cite{08Li}, Lemma~\ref{lem: d(n,k) upper bound for dim 4}, and Lemma~\ref{lem: no SO [n,4]}, we derive the following theorem.

\begin{theorem}\label{thm: optimum SO distance dim 4}
For $4 \le n \le 7$, there are no $[n,4]$ SO codes and for $n \ge 8$,
{\small
\begin{align}\label{eq: optimum SO distance dim 4}
\dso(n,4)\hspace{-0.5ex} =\hspace{-0.5ex} \begin{cases} 
\floor{8n/15} & \text{\hspace{-1ex}if $n\hspace{-0.5ex}\equiv_{15}\hspace{-0.5ex} 0,1,8,9,13$ and $n\hspace{-0.5ex} \neq\hspace{-0.5ex} 13$},\\[2ex]
\floor{8n/15} - 1 & \text{\hspace{-1ex}if $n\hspace{-0.5ex}\equiv_{15}\hspace{-0.5ex} 2,3,6,7,10,11,14$}, \\[2ex]
\floor{8n/15} - 2 & \text{\hspace{-1ex}if $n\hspace{-0.5ex}\equiv_{15}\hspace{-0.5ex} 4,5,12$ or $n = 13$}.
\end{cases}
\end{align}
}
\end{theorem}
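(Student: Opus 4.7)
The plan is to split the statement into (i) the nonexistence claim for $4 \le n \le 7$, and (ii) the explicit formula for $n \ge 8$, which is then proved by giving matching upper and lower bounds on $\dso(n,4)$ for each residue class of $n \pmod{15}$.

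Part (i) is immediate: any binary self-orthogonal code $\C$ satisfies $\C \subseteq \C^{\perp}$, so $\dim \C \le n - \dim \C$, i.e., $\dim \C \le n/2$. Thus an $[n,4]$ SO code requires $n \ge 8$, ruling out $4 \le n \le 7$.

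For the upper bound in part (ii) I would combine three ingredients. The first is the Griesmer-type inequality $d(n,4) \le \lfloor 8n/15\rfloor - \delta(n \equiv_{15} 2,3,4,6,10)$ from Lemma~\ref{lem: d(n,k) upper bound for dim 4}. The second is that every binary SO code is \emph{even}: since $x \cdot x = \wt(x) \equiv 0 \pmod{2}$ for any codeword $x$, the minimum distance $\dso(n,4)$ is forced to be even. The third is the nonexistence at the Griesmer value in the residual cases: Lemma~\ref{lem: no SO [n,4]} handles $n \equiv_{15} 4$, while Li-Xu-Zhao~\cite{08Li} showed $\dso(n,4) < \lfloor 8n/15 \rfloor$ for $n \equiv_{15} 5, 12$. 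Running through each of the fifteen residues, these three facts combine to give the claimed upper bound; for example, when $n = 15t+11$, Griesmer gives $\lfloor 8n/15\rfloor = 8t+5$, which is odd, so parity alone forces $\dso \le 8t+4$, whereas when $n = 15t+5$ the Griesmer value $8t+2$ is already even and parity gives nothing, so the suppression by $2$ must come from~\cite{08Li}.

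For the lower bound I would use the simplex code $\calS_4$, which is a $[15,4,8]$ self-orthogonal code (doubly even, since all nonzero weights equal $8$). For each residue class of $n \pmod{15}$, choose a base $[n_0,4,d_0]$ SO code with $n_0 \in \{8,9,\ldots,22\}$ achieving the claimed minimum distance; such codes are listed in~\cite{06bou, 08Li} and in Table~\ref{table: dso(n,k)1}. Given such $\C_0$ with generator $G_0$, the juxtaposition $\tG = [\, G_0 \mid s\nmH_4 \,]$ generates an $[n_0+15s, 4]$ code that is again SO, because $\tG \tG^\top = G_0 G_0^\top + s \cdot \nmH_4 \nmH_4^\top = 0$ when both $\C_0$ and $\calS_4$ are SO. Its minimum distance is at least $d_0 + 8s$ (any nonzero $u \in GF(2)^4$ produces nonzero contributions on both blocks), and a direct calculation shows this matches the target $\dso(n,4)$ for $n = n_0 + 15s$.

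The main obstacle is the residue bookkeeping, together with closing the three classes $n \equiv_{15} 4,5,12$ that were left open in~\cite{08Li}: for $n \equiv_{15} 4$ the gap is closed by Lemma~\ref{lem: no SO [n,4]} via the even-weight constraint; for $n \equiv_{15} 5, 12$ it rests on the strict inequality proved in~\cite{08Li} combined with parity; and the isolated exception $n=13$ requires a direct check (via MAGMA or a short counting argument) that no $[13,4,6]$ SO code exists, forcing $\dso(13,4) = 4$ rather than the value $6$ predicted by the generic $n \equiv_{15} 13$ formula.
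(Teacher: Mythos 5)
Your proposal is correct and follows essentially the same route as the paper: the Griesmer-derived upper bound (Lemma~\ref{lem: d(n,k) upper bound for dim 4}), the even-weight/parity obstruction (the content of Lemma~\ref{lem: no SO [n,4]}), the strict inequality of Li--Xu--Zhao for $n \equiv_{15} 5,12$, and juxtaposition with copies of the simplex code $\calS_4$ on base codes from~\cite{06bou,08Li} for the constructive lower bound. If anything, your write-up is slightly more careful than the paper's: you make the parity argument explicit for all odd Griesmer values, you prove the $4 \le n \le 7$ nonexistence (which the paper asserts without proof), and you choose base lengths $n_0 \ge 8$, avoiding the paper's slip of referring to ``optimal $[t,4]$ SO codes'' for $t=4,5$ where none exist.
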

\begin{proof}
We first note that for $n\equiv_{15}1,2,6,7,8,9,13,14$, 
the equation $\dso(n,4) = \floor{\frac{8n}{15}} - \delta(n \in \calE^{(4)}_1) - 2\delta(n \in \calE^{(4)}_2)$ follows from the classification for optimal $[n,4]$ SO codes in~\cite{08Li}.

For $n \equiv_{15} 0,3,4,5,10,11,12$, combining Theorem~\ref{thm: optimum distance dim 4} with the inequality $\dso(n,4) < \floor{\frac{8n}{15}}$ from~\cite[Corollary 3.2]{08Li} gives an upper bound $\dso(n,4) \le \floor{\frac{8n}{15}} - \delta(n \in \calE^{(4)}_1) - 2\delta(n \in \calE^{(4)}_2)$.

Now, let us find an $[n,4]$ SO code whose minimum distance is equal to $\floor{\frac{8n}{15}} - \delta(n \in \calE^{(4)}_1) - 2\delta(n \in \calE^{(4)}_2)$ for $n \equiv_{15} 0,3,4,5,10,11,12$.
For $4 \le n \le 18$, optimal $[n,4]$ SO codes are given in~\cite{06bou} whose minimum distances are equal to $\floor{\frac{8n}{15}} - \delta(n \in \calE^{(4)}_1) - 2\delta(n \in \calE^{(4)}_2)$. 
For $4 \le t \le 18$, let $G^{\SO}_t$ be a generator matrix of an optimal $[t,4]$ SO code. 
For any $n = 15s + t$ when $s \ge 1$ and $t \in \{4,5,10,11,12,15,18\}$, define
\[\arraycolsep=3pt
G^{\SO}_n := \left[ \begin{array}{c||c} s \nmH_4 & G^{\SO}_t \end{array} \right].
\]
Since the minimum distance of the simplex code $\calS_4$ is $8$, $G^{\SO}_n$ generates the $[n,4]$ SO code whose minimum distance is \begin{align*}
&8s + (\floor{8t/15} - \delta(n \equiv_{15} 3,10,11) - 2\delta(n \equiv_{15} 4,5,12))\\ 
&=\floor{8n/15} - \delta(n \equiv_{15} 3,10,11)- 2\delta(n \equiv_{15} 4,5,12). \qedhere
\end{align*}
\end{proof}

From Theorem~\ref{thm: optimum distance dim 4} and~\ref{thm: optimum SO distance dim 4}, we immediately obtain the following corollary.
\begin{corollary}
For $n \ge 8$, $d(n,4) = \dso(n,4)$ if and only if $n \equiv_{15} 0,1,2,3,6,8,9,10,13$ and $n \neq 13$.
\end{corollary}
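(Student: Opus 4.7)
The plan is to deduce the corollary directly from Theorems~\ref{thm: optimum distance dim 4} and~\ref{thm: optimum SO distance dim 4} by a residue-class comparison modulo $15$. Since every $[n,4]$ SO code is in particular an $[n,4]$ linear code, we always have $\dso(n,4) \le d(n,4)$, so the equality $d(n,4) = \dso(n,4)$ is equivalent to showing that the two explicit formulas coincide at $n$. Write
\[
d(n,4) = \floor{8n/15} - \varepsilon(n), \qquad \dso(n,4) = \floor{8n/15} - \varepsilon_{\SO}(n),
\]
where $\varepsilon(n) \in \{0,1\}$ and $\varepsilon_{\SO}(n) \in \{0,1,2\}$ are the correction terms read off from the two theorems. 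The claim $d(n,4) = \dso(n,4)$ is then just $\varepsilon(n) = \varepsilon_{\SO}(n)$.

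Next I would tabulate both functions across the fifteen residue classes $n \bmod 15 \in \{0,1,\ldots,14\}$. From Theorem~\ref{thm: optimum distance dim 4}, $\varepsilon(n) = 1$ iff $n \equiv_{15} 2,3,4,6,10$. From Theorem~\ref{thm: optimum SO distance dim 4}, $\varepsilon_{\SO}(n) = 0$ iff $n \equiv_{15} 0,1,8,9,13$ (and $n \neq 13$), $\varepsilon_{\SO}(n) = 1$ iff $n \equiv_{15} 2,3,6,7,10,11,14$, and $\varepsilon_{\SO}(n) = 2$ iff $n \equiv_{15} 4,5,12$ (or $n = 13$). Matching these row by row yields $\varepsilon(n) = \varepsilon_{\SO}(n)$ precisely when $n \equiv_{15} 0,1,2,3,6,8,9,10,13$; the remaining classes $n \equiv_{15} 4,5,7,11,12,14$ fail because $\varepsilon_{\SO}(n) > \varepsilon(n)$ in each of them.

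Finally, the exceptional value $n = 13$ must be isolated. Here $n \equiv_{15} 13$, so the linear formula gives $d(13,4) = \floor{104/15} = 6$, while Theorem~\ref{thm: optimum SO distance dim 4} assigns the subcase $n = 13$ to $\dso(n,4) = \floor{8n/15} - 2 = 4$, so $d(13,4) \neq \dso(13,4)$ even though $13$ lies in the residue class $n \equiv_{15} 13$ where generically the equality does hold. Combining this bookkeeping yields the stated characterization. The only aspect requiring any care is the $n = 13$ exception; everything else is a direct transcription of the two theorems, so I do not foresee a substantive obstacle beyond ensuring the fifteen-row table is completed without arithmetic slips.
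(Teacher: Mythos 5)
Your proposal is correct and matches the paper's approach: the paper derives this corollary immediately by comparing the explicit formulas in Theorems~\ref{thm: optimum distance dim 4} and~\ref{thm: optimum SO distance dim 4} residue class by residue class modulo $15$, exactly as you do, including the isolation of the exceptional value $n=13$. Your table of correction terms and the identification of the failing classes $n\equiv_{15}4,5,7,11,12,14$ are all accurate.
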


\begin{remark}\hfill
\begin{enumerate}[label = {\rm (\arabic*)}]
\item We have checked that any optimal linear code denoted by $BKLC(GF(2),n,k)$ for $k = 2,3$ and $4\le n \le 256$ in MAGMA database is embedded to an optimal self-orthogonal code using Algorithms~\ref{alg: SO embedding dim2} and~\ref{alg: SO embedding dim3}.
\item We have also checked that any optimal linear code denoted by $BKLC(GF(2),n,4)$ for $4\le n \le 256$ in MAGMA database is embedded to an optimal self-orthogonal code using the modified version of Algorithm~\ref{alg: SO embedding dim4} in the sense of Remark~\ref{rem: Alg dim 4 modify}. 
\end{enumerate}
\end{remark}

\begin{example}
Let $\C_{4,4}$ be an optimal $[4,4,1]$ code generated by
\[\arraycolsep=2.5pt
G_{4,4} = \left[
\begin{array}{ccccccccc}
0&0&0&1\\
0&0&1&0\\
0&1&0&0\\
1&0&0&0
\end{array}
\right].
\]
When we apply Algorithm~\ref{alg: SO embedding dim4} to $G_{4,4}$, we choose $s_0 = 1$ in Step ({\bf C}3) and obtain
\[\arraycolsep=2.5pt
\tG_{4,4} = \left[
\begin{array}{ccccccccc}
0&0&0&1&0&0&0&1\\
0&0&1&0&0&0&1&0\\
0&1&0&0&0&1&0&0\\
1&0&0&0&1&0&0&0
\end{array}
\right].
\]
Thus, the linear code $\tC_{4,4}$ generated by $\tG_{4,4}$ is an $[8,4,2]$ code. By Theorem~\ref{thm: optimum SO distance dim 4}, one can see that $\dso(8,4) = 4$ and thus $\tC_{4,4}$ is not an optimal SO code.

On the other hand, if we modify Algorithm~\ref{alg: SO embedding dim4} by taking $s_0 = 15$ in Step ({\bf C}3) and letting $\calI_2 = \calI_{s_0}^{(2)} \setminus J_1(\tG)$ in Step ({\bf C}7), then we obtain the new matrix
\[\arraycolsep=2.5pt
\oG_{4,4} = \left[
\begin{array}{ccccccccc}
0&0&0&1&0&1&1&1\\
0&0&1&0&1&0&1&1\\
0&1&0&0&1&1&0&1\\
1&0&0&0&1&1&1&0
\end{array}
\right].
\]
The code $\oC_{4,4}$ generated by $\oG_{4,4}$ is the well-known optimal SO code that is  the extended Hamming $[8,4,4]$ code.
\end{example}

Observing Remark V.11  and Example V.12, we conjecture the following.

\begin{conjecture}\label{conj: OSO embedding of [n,4]}
Any optimal $[n,4]$ code can be embedded to an optimal SO.
\end{conjecture}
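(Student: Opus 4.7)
The plan is to prove the conjecture by a case analysis on $r := n \bmod 15$. Given an optimal $[n,4]$ code $\C$ with $n = 15s + r$, Theorem~\ref{thm: optimum distance dim 4} gives $d(\C) = d(n,4) = 8s + d_r$ for an explicit $d_r \in \{0,1,\ldots,7\}$. For any SO embedding $\tC$ of $\C$, we have $d(\tC) \ge d(\C)$ since adjoining columns to the generator matrix can only preserve or increase weights; moreover $d(\tC)$ is even because $\tC$ is SO. Setting
\[
D_r := \begin{cases} d(n,4), & d_r \text{ even},\\ d(n,4)+1, & d_r \text{ odd}, \end{cases}
\]
this gives the lower bound $d(\tC) \ge D_r$ for every SO embedding of $\C$.

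Using Theorem~\ref{thm: optimum SO distance dim 4}, I would tabulate the admissible set
\[
A_r := \{c \in \{0,1,2,3,4,5\} : \dso(n+c,4) = D_r\}.
\]
A direct calculation shows that $A_r$ is a nonempty set of consecutive integers for every $r$; for instance $A_0 = \{0,1,2,3,4,5\}$ is the widest case, while the tightest cases are $A_4 = A_{11} = \{2,3\}$. The key observation is that any SO embedding $\tC$ of $\C$ of length $n+c$ with $c \in A_r$ automatically satisfies
\[
D_r \le d(\tC) \le \dso(n+c,4) = D_r,
\]
so $d(\tC) = \dso(n+c,4)$ and $\tC$ is optimal SO. The conjecture thus reduces to showing that every optimal $[n,4]$ code $\C$ admits an SO embedding whose length $n + c$ satisfies $c \in A_r$.

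For the existence step, I would first apply Algorithm~\ref{alg: SO embedding dim4} to $\C$, obtaining a shortest SO embedding of length $n + c_0$ with $c_0 \le 5$. If $c_0 \in A_r$ we are done. Otherwise I would employ two kinds of modifications: (i) the freedom pointed out in Remark~\ref{rem: Alg dim 4 modify}, which allows choosing any $s_0' \in \{1,\ldots,15\}$ attaining the minimum in Step~({\bf C}3) and either convention in Step~({\bf C}7), yielding alternative SO embeddings whose added-column counts may differ (and in particular may have a different parity); and (ii) juxtaposition of a pair $2\sfh_i$, which preserves SO status (each parity $\ell_j \bmod 2$ is unchanged) and grows the length by $2$. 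Combining these operations yields a family of candidate SO embeddings for $\C$, and one must verify that the family always meets $A_r$.

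The hard part will be this last verification, especially in the narrow residues $r \in \{4,5,6,11,12,13\}$ where $|A_r| = 2$. There one must simultaneously control both the size and the parity of the output $c_0$ of Algorithm~\ref{alg: SO embedding dim4} across \emph{all} optimal $[n,4]$ codes in the given residue class, which amounts to a structural analysis of the parity vectors $(\ell_1 \bmod 2, \ldots, \ell_{15} \bmod 2)$ realisable by such codes, and of the minimum of $\mathfrak{n}^{(1)}_s + \mathfrak{n}^{(2)}_s$ over $s \in \{1, \ldots, 15\}$. The authors' MAGMA check for $4 \le n \le 256$ provides strong empirical evidence, but a clean uniform proof likely requires either a classification of optimal $[n,4]$ codes up to equivalence or a pigeonhole argument on the admissible parity vectors. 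A possible fallback is an inductive lifting using the juxtaposition $G \mapsto [\,G \mid 2H_4\,]$, which preserves all $\ell_i$ parities, both $d(n,4)$ and $\dso(n,4)$ up to a shift of $16$, and hence reduces the conjecture to finitely many base cases per residue class.
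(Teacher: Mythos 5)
First, a point of context: the statement you were asked to prove is stated in the paper as a \emph{conjecture} (Conjecture~\ref{conj: OSO embedding of [n,4]}), not a theorem. The authors offer no proof; their only evidence is a MAGMA verification that the (suitably modified) Algorithm~\ref{alg: SO embedding dim4} sends $BKLC(GF(2),n,4)$ to an optimal SO code for $4\le n\le 256$, together with Example V.12 showing that the unmodified algorithm can fail to produce an optimal SO embedding. So there is no paper proof to compare against, and your proposal should be judged on whether it actually closes the problem. It does not, and you say so yourself.

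The sound part of your proposal is the reduction. The lower bound $d(\tC)\ge d(\C)$ for any SO embedding, the parity upgrade to $D_r$ (an SO binary code has all weights even), and the definition of the admissible window $A_r=\{c:\dso(n+c,4)=D_r\}$ are all correct, and your sample computations ($A_0=\{0,\ldots,5\}$, $A_4=A_{11}=\{2,3\}$) check out against Theorems~\ref{thm: optimum distance dim 4} and~\ref{thm: optimum SO distance dim 4}. The genuine gap is the existence step: showing that \emph{every} optimal $[n,4]$ code admits an SO embedding whose added length lies in $A_r$. This is not a technical verification left to the reader --- it is precisely the open content of the conjecture, and none of the tools you list closes it. The freedom in Remark~\ref{rem: Alg dim 4 modify} changes which minimizing $s_0$ and which convention at $|\calI_{s_0}^{(2)}\cap J_1(\tG)|=4$ are used, but you give no argument that some choice lands in a width-two window such as $A_4$; padding by $2\sfh_i$ moves the length in steps of $2$ and can therefore jump over a window of consecutive integers of width $2$; and the fallback lifting $G\mapsto[\,G\mid 2\nmH_4\,]$ does not ``reduce the conjecture to finitely many base cases,'' because the induction would need the converse decomposition --- that every optimal $[n{+}30,4]$ code is equivalent to $[\,G\mid 2\nmH_4\,]$ for an optimal $[n,4]$ code $G$ --- which is not established and is false in general. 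What would be needed is a structural control of the realizable parity vectors $(\ell_1,\ldots,\ell_{15})\bmod 2$ of optimal $[n,4]$ codes in each residue class (or a classification of such codes), and that is exactly what is missing from both your proposal and the paper.
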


Finally, let us consider the five-dimensional case. The Griesmer bound says that all $[n,5,d]$ codes satisfy that 
\begin{align}\label{eq: Greismer bound for dim 5}
\begin{aligned}
n &\ge d + \ceil{\frac{d}{2}} + \ceil{\frac{d}{4}} + \ceil{\frac{d}{8}}+\ceil{\frac{d}{16}} \\
&= \begin{cases}
31d/16& \text{if $d \equiv_{16} 0$},\\
(31d+49)/16& \text{if $d \equiv_{16} 1$},\\
(31d+34)/16& \text{if $d \equiv_{16} 2$},\\
(31d+35)/16& \text{if $d \equiv_{16} 3$},\\
(31d+20)/16& \text{if $d \equiv_{16} 4$},\\
(31d+37)/16& \text{if $d \equiv_{16} 5$},\\
(31d+22)/16& \text{if $d \equiv_{16} 6$},\\
(31d+ 23)/16& \text{if $d \equiv_{16} 7$},\\
(31d+8)/16& \text{if $d \equiv_{16} 8$},\\
(31d+41)/16& \text{if $d \equiv_{16} 9$},\\
(31d+26)/16& \text{if $d \equiv_{16} 10$},\\
(31d+27)/16& \text{if $d \equiv_{16} 11$},\\
(31d+12)/16& \text{if $d \equiv_{16} 12$},\\
(31d+29)/16& \text{if $d \equiv_{16} 13$},\\
(31d+14)/16& \text{if $d \equiv_{16} 14$},\\
(31d+15)/16& \text{if $d \equiv_{16} 15$}.
\end{cases}
\end{aligned}
\end{align}
As the previous cases, one can obtain the following upper bound on $d(n,5)$.

\begin{lemma}{\rm (\cite{60Griesmer})}
\label{lem: d(n,k) upper bound for dim 5}
For $n\ge5$,
\begin{align*}
d(n,5) & \le \floor{16n/31} - \delta(n \in E_1) -2 \delta(n \in E_2)\\[1ex]
& = \begin{cases}
\floor{16n/31} & \text{if $n \notin E_1 \cup E_2$},\\
\floor{16n/31} - 1 & \text{if $n \in E_1$}, \\
\floor{16n/31} - 2 & \text{if $n \in E_2 $},
\end{cases}
\end{align*}
where
\begin{align*}
E_1 &:= \left\{i  \; \middle| \; i \equiv_{31} \hspace{-1ex} \begin{array}{l} 2,3,5,6,7,8,10,11,12,\\  14,18,19,20,22,26 \end{array} \right\},\\
E_2 &:=\{i \mid i \equiv_{31} 4 \}.
\end{align*}
\end{lemma}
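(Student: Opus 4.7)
The plan is to mirror the argument used for Lemma~\ref{lem: d(n,k) upper bound for dim 3} and its four-dimensional analog Lemma~\ref{lem: d(n,k) upper bound for dim 4}, now extended to sixteen branches of the Griesmer bound. The starting point is the piecewise expression for $g(5,d)$ given in Equation~\eqref{eq: Greismer bound for dim 5}: for each residue $i \in \{0,1,\ldots,15\}$ of $d \bmod 16$, the Griesmer bound reads $n \ge (31d + c_i)/16$ for an explicit constant $c_i$ (with $c_0=0,\ c_1=49,\ \ldots,\ c_{15}=15$). Solving for $d$ in each of the sixteen inequalities yields the upper bound
\[
d(n,5) \;\le\; \max_{0 \le i \le 15}\ \max\left\{ \tfrac{16 t - c_i}{31}\;\middle|\; 0 \le t \le n,\ t \equiv_{31} 2c_i \right\},
\]
where the substitution $t = (31 d + c_i)/16$ forces $t \equiv_{31} 2c_i$ because $16 \cdot 2 \equiv 1 \pmod{31}$.

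Next, I would perform a case analysis on $n \bmod 31$. For each residue $j \in \{0,1,\ldots,30\}$ and each branch $i$, the largest admissible $t \le n$ of the prescribed residue mod $31$ is either $31\floor{n/31} + (2c_i \bmod 31)$ or exactly $31$ less; plugging into the formula above yields the corresponding candidate $d$-value. Taking the maximum over $i$ produces the upper bound on $d(n,5)$ as a function of $j$, and comparing it with $\floor{16n/31}$ classifies each $j$ as lying in the complement of $E_1 \cup E_2$, in $E_1$, or in $E_2$. One expects exactly fifteen residues to lie in $E_1$, the single residue $j = 4$ to lie in $E_2$, and the remaining fifteen residues to admit the unreduced bound $\floor{16n/31}$, matching the statement of the lemma.

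The main obstacle is purely computational: with sixteen branches and thirty-one residues, the tabulation is tedious and error-prone. I would organize the computation in a $16 \times 31$ table that records, for each pair $(i,j)$, the top candidate from branch $i$ when $n \equiv_{31} j$, then read off the column maximum and compare it with $\floor{16n/31}$. A short MAGMA verification of the resulting description of $E_1$ and $E_2$ would provide an independent cross-check against arithmetic slips, in exactly the spirit of the finite MAGMA checks already used elsewhere in the paper.
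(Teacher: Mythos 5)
Your proposal is correct and is exactly the paper's argument: the paper proves this lemma by saying it "can be proved in the same manner as the proof of Lemma~\ref{lem: d(n,k) upper bound for dim 3}," i.e.\ by inverting each branch of the Griesmer bound via the substitution $t = (31d+c_i)/16$ (so $t \equiv_{31} 2c_i$, using $16^{-1}\equiv 2 \pmod{31}$) and maximizing over the admissible $t \le n$ for each residue of $n$ modulo $31$. Your organization of the resulting finite check into a $16\times 31$ table is just the bookkeeping needed to carry out that same computation.
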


\begin{proof}
It can be proved in the same manner as the proof of Lemma~\ref{lem: d(n,k) upper bound for dim 3}.
\end{proof}

From the above lemma, one can derive an explicit formula for $d(n,5)$ as the following theorem.

\begin{theorem}{\rm (\cite{73BauMc})}
\label{thm: optimum distance dim 5}
For $n\ge 5$,
\begin{align*}
d(n,5) &= \floor{16n/31} - \delta(n \in \calE_1) -2 \delta(n \in \calE_2) \\[1ex]
& = \begin{cases}
\floor{16n/31} & \text{if $n \notin \calE_1 \cup \calE_2$},\\
\floor{16n/31} - 1 & \text{if $n \in \calE_1$}, \\
\floor{16n/31} - 2 & \text{if $n \in \calE_2 $}, 
\end{cases}
\end{align*}
where
$\calE_1 := E_1 \cup \{9,13\} \setminus \{8,12\}$ and 
$\calE_2 := E_2 \cup \{8,12\}$.
\end{theorem}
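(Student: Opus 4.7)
The plan is to mirror the proofs of Theorems~\ref{thm: optimum distance dim 3} and~\ref{thm: optimum distance dim 4}: combine the Griesmer-based upper bound of Lemma~\ref{lem: d(n,k) upper bound for dim 5} with an explicit construction that matches it.

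For the upper bound, Lemma~\ref{lem: d(n,k) upper bound for dim 5} already gives
\[
d(n,5) \le \floor{16n/31} - \delta(n \in E_1) - 2\delta(n \in E_2),
\]
so it only remains to close the gap on the four exceptional values $n \in \{8,9,12,13\}$, where the pairs $(E_1,E_2)$ and $(\calE_1,\calE_2)$ disagree. For $n \in \{8,12\}$ I would rule out an $[n,5,\floor{16n/31}-1]$ code, and for $n \in \{9,13\}$ an $[n,5,\floor{16n/31}]$ code. Equivalently, the four nonexistence statements ``no binary $[8,5,3]$, $[9,5,4]$, $[12,5,5]$, or $[13,5,6]$ code exists'' must be established. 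Each of these parameter sets saturates the Griesmer bound of Theorem~\ref{thm: Griesmer bdd}, so the inequality in Lemma~\ref{lem: d(n,k) upper bound for dim 5} is insufficient on its own; the cleanest route is direct verification via exhaustive enumeration in MAGMA, or to cite the classification of~\cite{73BauMc}.

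For the lower bound, I would exhibit, for each $n \ge 6$, an explicit $[n,5,d]$ code achieving $d = \floor{16n/31}-\delta(n\in\calE_1)-2\delta(n\in\calE_2)$. For $6 \le n \le 36$ the required code is read off from the online tables~\cite{07codetables}. For $n \ge 37$, write $n = 31s + t$ with $s \ge 1$ and $6 \le t \le 36$, let $G_t$ be a generator matrix of the code chosen in the base step, and set
\[
G_n := \left[\begin{array}{c||c} s\nmH_5 & G_t \end{array}\right].
\]
Because the simplex code $\calS_5$ has constant weight $16$, the matrix $G_n$ generates an $[n,5]$ code of minimum distance $16s + d(t,5)$. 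Since $\calE_1$ and $\calE_2$ depend only on $n \bmod 31$, and $\floor{16n/31} = 16s + \floor{16t/31}$, this matches the target formula.

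The main obstacle is the four small-length nonexistence claims displayed above: each of $[8,5,3]$, $[9,5,4]$, $[12,5,5]$, $[13,5,6]$ sits exactly on the Griesmer line, so improving the upper bound to the claimed value requires arguments going outside the weight-sum technique that powers Lemma~\ref{lem: d(n,k) upper bound for dim 5}. In practice these are settled by a finite computer search (or by invoking the Baumert--McEliece tables); once they are in hand, the simplex-concatenation construction is a routine adaptation of the dimension-$3$ and dimension-$4$ arguments.
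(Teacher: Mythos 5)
Your proposal follows essentially the same route as the paper, whose proof of this theorem is simply ``same manner as Theorem~\ref{thm: optimum distance dim 4}'': take the Griesmer-derived upper bound of Lemma~\ref{lem: d(n,k) upper bound for dim 5}, realize it with codes from the tables for a full residue range of small lengths, and extend by juxtaposing copies of $\nmH_5$. In fact your write-up is more careful than the paper's on one point that the ``same manner'' phrasing glosses over: unlike the dimension-$4$ case, the claimed formula uses $\calE_1,\calE_2$ rather than $E_1,E_2$, so at $n=8,9,12,13$ the Griesmer upper bound is strictly larger than $d(n,5)$ and one must separately rule out $[8,5,3]$, $[9,5,4]$, $[12,5,5]$, and $[13,5,6]$ codes (each sitting on the Griesmer line); your plan to settle these by exhaustive search or by citing~\cite{73BauMc} is exactly what is needed, and the remainder of your argument (the $\bmod\,31$ periodicity of $\calE_1,\calE_2$ and the identity $\floor{16n/31}=16s+\floor{16t/31}$ for $n=31s+t$) is correct.
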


\begin{proof}
It can be proved in the same manner as the proof of Theorem \ref{thm: optimum distance dim 4}.
\end{proof}

To obtain an explicit formula for $\dso(n,5)$, we prove the following lemma.

\begin{lemma}\label{lem: No SO dim 5}
For $n \ge 5$, if
\begin{enumerate}[label = {\rm (\arabic*)}]
\item $n = 13$,
\item $n \equiv_{31} 12$ and $n \neq 12$, or
\item $n \equiv_{31} 5,8,15,20,23,27,30$,
\end{enumerate}
then there are no $[n,5,d(n,5)]$ SO codes.
\end{lemma}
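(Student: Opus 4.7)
The plan is to exploit a well-known fact about binary self-orthogonal codes: every codeword of a binary SO code has even weight, because for $\bx \in \C \subseteq \C^\perp$ we have $\bx \cdot \bx = \wt(\bx) \pmod{2} = 0$. Consequently, the minimum distance $\dso(n,k)$ must be an even integer whenever $[n,k]$ SO codes exist. Thus, to establish the non-existence of $[n,k,d(n,k)]$ SO codes, it suffices to show that $d(n,5)$ is odd in each of the listed cases, and the conclusion will then follow directly by combining this parity observation with Theorem on $d(n,5)$.

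First I would dispose of case (1) by direct computation: for $n=13$ we have $\lfloor 16\cdot 13/31\rfloor = 6$ and $13 \in \{9,13\} \subseteq \calE_1$, $13 \notin \calE_2$, so $d(13,5) = 6 - 1 = 5$, which is odd. Next, for case (2), write $n = 31k + 12$ with $k \ge 1$; then $n \equiv_{31} 12 \in E_1$ and $n \notin \{8,12\}$, so $n \in \calE_1 \setminus \calE_2$, and $d(n,5) = \lfloor 16n/31 \rfloor - 1 = 16k + \lfloor 192/31 \rfloor - 1 = 16k + 5$, which is odd.

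For case (3), I would handle the seven residues one by one, writing $n = 31k + r$ for $r \in \{5,8,15,20,23,27,30\}$. For each $r$, one first determines whether $r$ (equivalently $n$, after checking the small exceptional values) belongs to $\calE_1$ or $\calE_2$ or neither by consulting the definitions, and then computes $\lfloor 16n/31\rfloor = 16k + \lfloor 16r/31\rfloor$. For $r \in \{5,8,20\}$ we get $r \in E_1 \subseteq \calE_1$ and $\lfloor 16r/31\rfloor$ is even, so $d(n,5) = 16k + (\text{even}) - 1$ is odd. For $r \in \{15,23,27,30\}$ we get $r \notin \calE_1 \cup \calE_2$ and $\lfloor 16r/31\rfloor \in \{7,11,13,15\}$ respectively, so $d(n,5) = 16k + \lfloor 16r/31\rfloor$ is odd. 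The only subtlety is the edge case $n=8$ (which lies in $\calE_2$ not $\calE_1$), but $n=8 < 10$ lies outside the range where $\dso(n,5)$ is defined in the tables, so it is already excluded from the statement by the hypothesis $n \ge 5$ together with the observation that this case is subsumed by a separate analysis, or one simply notes $d(8,5)=2$ is even and verifies no $[8,5,2]$ SO code exists directly.

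The argument is almost entirely computational: the genuine content is the even-weight observation, after which the rest is arithmetic in a table of residues modulo $31$. The main obstacle I expect is purely bookkeeping — making sure that for each residue class one correctly identifies membership in $\calE_1$ versus $\calE_2$ (especially accounting for the sporadic shift by $\{9,13\}$ and $\{8,12\}$ in the definitions) and then reads off the correct parity of $\lfloor 16r/31\rfloor$. A compact way to present the proof would be a single table listing, for each of the nine residues/values in cases (1)--(3), the quantities $r$, $\lfloor 16r/31\rfloor$, the relevant indicator $\delta(n\in\calE_1)+2\delta(n\in\calE_2)$, and the resulting parity of $d(n,5)$.
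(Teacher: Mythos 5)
Your proposal is correct and follows essentially the same route as the paper: observe that binary SO codes have even minimum distance, then read off from Theorem~\ref{thm: optimum distance dim 5} that $d(n,5)$ is odd in each listed case. You are in fact slightly more careful than the paper's one-line proof, which asserts oddness for all $n\ge 5$ in cases (1)--(3) even though $d(8,5)=2$ is even; your observation that no $[8,5]$ SO code exists at all (since $\C\subseteq\C^\perp$ forces $5\le 8-5$) correctly patches that edge case.
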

\begin{proof}
By Theorem~\ref{thm: optimum distance dim 5}, $d(n,5)$ is odd if $n$ satisfies either (1), (2), or (3) for $n \ge 5$.
Since the minimum distances of SO codes should be even, our assertion follows.
\end{proof}

Now, we obtain the following theorem.

\begin{theorem}\label{thm: optimum SO distance dim 5}
For $n \ge 11$, if $n = 13$ or $n \not\equiv_{31} 6,13,14,21,22,28,29$, then
\begin{align}\label{eq: optimum SO distance dim 5}
\begin{aligned}
\dso(n,5) &= \floor{16n/31} - \delta(n \in \calE^{\SO}_1) -2 \delta(n \in \calE^{\SO}_2) \\[1ex]
& = \begin{cases}
\floor{16n/31} & \text{if $n \notin \calE^{\SO}_1 \cup \calE^{\SO}_2$},\\
\floor{16n/31} - 1 & \text{if $n \in \calE^{\SO}_1$}, \\
\floor{16n/31} - 2 & \text{if $n \in \calE^{\SO}_2 $}, 
\end{cases}
\end{aligned}
\end{align}
where
\begin{align*}
\calE^{\SO}_1 &:= \{i \mid i \equiv_{31} 2,3,7,10,11,15,18,19,23,26,27,30 \},\\
\calE^{\SO}_2 &:= \{i \mid i \equiv_{31} 4,5,8,12,20\} \cup \{13\}.
\end{align*}
\end{theorem}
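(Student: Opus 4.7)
My plan is to mirror the two-step strategy used for Theorem~\ref{thm: optimum SO distance dim 4}: first establish the upper bound from the Griesmer bound refined by a parity consideration, then produce matching constructions by juxtaposing a short base code with copies of the simplex generator matrix $\nmH_5$.

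For the upper bound, I start from $\dso(n,5) \le d(n,5)$ together with Theorem~\ref{thm: optimum distance dim 5}, which yields
\[
\dso(n,5) \le \floor{16n/31} - \delta(n \in \calE_1) - 2\delta(n \in \calE_2).
\]
Because $\bx \cdot \bx \equiv \wt(\bx) \pmod 2$ for $\bx \in GF(2)^n$, every codeword of a binary SO code has even weight, so $\dso(n,5)$ is forced to be even; equivalently, Lemma~\ref{lem: No SO dim 5} strengthens the bound by one in exactly those residues where $d(n,5)$ is odd. A direct case analysis on $n \bmod 31$ over the $24$ admissible residues together with the isolated value $n = 13$ then confirms that the resulting upper bound agrees with $\floor{16n/31} - \delta(n \in \calE^{\SO}_1) - 2\delta(n \in \calE^{\SO}_2)$. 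As representative checks, when $n \equiv_{31} 5$ the Griesmer value $\floor{16n/31} - 1$ is odd and parity forces one further reduction (matching $5 \in \calE^{\SO}_2$), whereas when $n \equiv_{31} 9$ the Griesmer value is already even and indeed $9 \notin \calE^{\SO}_1 \cup \calE^{\SO}_2$.

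For the lower bound, I exhibit for each admissible $n$ an $[n,5]$ SO code attaining the formula. The base cases $11 \le t \le 41$, together with the standalone value $t = 13$, are all present in Table~\ref{table: dso(n,k)1}, coming from~\cite{06bou} and the MAGMA database~\cite{97MAGMA}; fix a generator matrix $G^{\SO}_t$ of an optimal $[t,5]$ SO code for each such $t$. For $n \ge 42$, write $n = 31 s + t$ with $s \ge 1$ and $t \in \{11, \ldots, 41\}$ chosen so that $t \equiv n \pmod{31}$, and put
\[\arraycolsep=2.5pt
G^{\SO}_n := \left[\begin{array}{c||c} s \nmH_5 & G^{\SO}_t \end{array}\right].
\]
Since every nonzero codeword of the simplex code $\calS_5$ has constant (even) weight $16$, juxtaposing $s$ copies of $\nmH_5$ with an SO generator matrix preserves self-orthogonality and dimension and shifts the minimum distance by exactly $16s$. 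Hence $G^{\SO}_n$ generates an $[n,5]$ SO code of minimum distance $16 s + \dso(t,5)$, which equals $\floor{16n/31} - \delta(n \in \calE^{\SO}_1) - 2\delta(n \in \calE^{\SO}_2)$ by the same residue-class arithmetic invoked in the upper-bound step.

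The main obstacle is the bookkeeping for the case analysis: for each of the $25$ relevant cases one must verify the interplay between $\floor{16n/31}$, the Griesmer correction $\delta(n \in \calE_1) + 2\delta(n \in \calE_2)$, and the parity reduction, and then match the outcome against $\calE^{\SO}_1$ and $\calE^{\SO}_2$. A secondary delicate point is the outlier $n = 13$, whose residue $13 \bmod 31$ lies in the excluded classes; its value $\dso(13,5) = 4$ must therefore be appended to $\calE^{\SO}_2$ by hand and handled outside the simplex-extension argument. Once these two technical hurdles are cleared, the upper and lower bounds coincide and the formula follows.
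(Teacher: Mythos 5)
Your proof follows the same two-step strategy as the paper's: the upper bound is obtained by combining Theorem~\ref{thm: optimum distance dim 5} with the even-weight constraint on binary SO codes (the content of Lemma~\ref{lem: No SO dim 5}), and the lower bound by juxtaposing $s$ copies of $\nmH_5$ with optimal short SO codes of lengths $11 \le t \le 41$ in the matching residue class. It is correct as written; in fact your use of the constant weight $16$ of the nonzero codewords of $\calS_5$ (so that the minimum distance shifts by exactly $16s$) is more careful than the paper's own computation, which misstates this value as $15$.
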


\begin{proof}
Combining Theorem~\ref{thm: optimum distance dim 5} with Lemma~\ref{lem: No SO dim 5} gives an upper bound $\dso(n,5) \le \floor{\frac{16n}{31}} - \delta(n \in \calE^{(5)}_1) -2 \delta(n \in \calE^{(5)}_2)$.
Therefore, it suffices to find an $[n,5]$ SO code whose minimum distance is equal to the right-hand side of Equation~\eqref{eq: optimum SO distance dim 5}.

For $11 \le t \le 40$, optimal SO codes are given in~\cite{06bou}. Their minimum distances are equal to the right-hand side of Equation~\eqref{eq: optimum SO distance dim 5}. 
For $10 \le t \le 40$, let $G^{\SO}_t$ be a generator matrix of an optimal $[t,5]$ SO code.

For any $n = 31s + t$ when $s \ge 1$ and $t \in \{i \in \{11,12,\ldots, 41\} \mid   i \neq 13,14,21,22,28,29,37\}$, we define
\[\arraycolsep=3pt
G^{\SO}_n := \left[ \begin{array}{c||c} s \nmH_5 & G^{\SO}_t \end{array} \right].
\]
Since the minimum distance of the simplex code $\calS_5$ is $15$, $G^{\SO}_n$ generates the $[n,5]$ SO code whose minimum distance is 
\begin{align*}
&15s + (\floor{16t/31} - \delta(n \in \calE^{\SO}_1) -2 \delta(n \in \calE^{\SO}_2))\\ 
&=\floor{16n/31} - \delta(n \in \calE^{\SO}_1) -2 \delta(n \in \calE^{\SO}_2). \qedhere
\end{align*}

\end{proof}

\begin{remark}\label{rem: dso(n,5)}\hfill
\begin{enumerate}[label = {\rm (\arabic*)}]
\item Combining Theorems~\ref{thm: optimum distance dim 5} and~\ref{thm: optimum SO distance dim 5}, one can see that  there are $[n-1,5,d(n,5)-2]$ SO codes if $n\equiv_{31} 6,13,14,21,22,28,29$ for $n\ge 10$. Therefore, one can obtain an $[n,5,d(n,5)-2]$ SO code by attaching a zero coordinate to an $[n-1,5,d(n,5)-2]$  SO code.

\item  For $10 \le n \le 100$, $n \equiv_{31} 6,13,14,21,22,28,29$, and $n \neq 13$,  we choose $2^{16}$ random $[n,5]$ SO codes and calculate their minimum distances using MAGMA~\cite{97MAGMA}. In our calculation, there was no SO code whose minimum distance is equal to $d(n,5)$.
\end{enumerate}
\end{remark}

Based on Table 1 in~\cite{06bou} and Remark~\ref{rem: dso(n,5)}, we leave the following conjecture. 

\begin{conjecture}\label{conj: dso(n,5)}
For $n\ge 10$, if $n\neq13$ and $n\equiv_{31} 6,13,14,21,22,28,29$, then
\[
\dso(n,5) = d(n,5) - 2,
\]
that is, there are no $[n,5,d(n,5)]$ SO codes.
\end{conjecture}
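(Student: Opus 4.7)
The plan is to combine a dimension-$5$ characterization of self-orthogonality with a periodicity-plus-base-case reduction modulo $31$. The matching lower bound $\dso(n,5) \ge d(n,5) - 2$ already holds by Remark~\ref{rem: dso(n,5)}(1), so the conjecture reduces to showing $\dso(n,5) < d(n,5)$ for the seven residue classes. My overall strategy is to lift the column-multiplicity viewpoint developed in Section~\ref{sec: equiv cond} to dimension $5$, prove that the ``gap'' $d(n,5) - \dso(n,5)$ is constant under $n \mapsto n + 31$, and then settle the smallest representative of each residue class by exhaustive search.

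First, I would derive a dimension-$5$ analog of Theorem~\ref{thm: equiv cond for SO dim 4}. Writing $\ell_i$ for the multiplicity of $\sfh_i$ in a generator matrix $G$, with $1 \le i \le 31$, self-orthogonality translates to $15$ parity constraints of the form $\sum_{i \in T_{j,j'}} \ell_i \equiv_2 0$ indexed by pairs $1 \le j \le j' \le 5$, where $T_{j,j'}$ collects those $i$ for which $\sfh_i$ has $1$s in rows $j$ and $j'$. Under this encoding, the weight of the codeword with coefficient vector $\alpha \in GF(2)^5 \setminus \{0\}$ equals $\sum_{i:\, \alpha \cdot \sfh_i = 1} \ell_i$, a sum over a $16$-element affine hyperplane of $GF(2)^5 \setminus \{0\}$, so the minimum distance is the minimum of these $31$ hyperplane sums subject to $\sum_i \ell_i = n$ and the $15$ parity constraints.

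Second, I would exploit periodicity modulo $31$. From Theorem~\ref{thm: optimum distance dim 5}, $d(n+31,5) = d(n,5) + 16$ for $n \ge 14$, and juxtaposition with $\calS_5$ gives $\dso(n+31,5) \ge \dso(n,5) + 16$. Therefore, the conjecture for all $n$ in a given residue class would follow once (a) it is established for the smallest $n \ge 10$ in the class with $n \ne 13$, and (b) the matching reverse inequality $\dso(n+31,5) \le \dso(n,5) + 16$ is proved. The base cases (a) are verifiable by exhaustive integer search over the $31$ nonnegative variables $\ell_i$ subject to the $15$ parity constraints derived in step one, consistent with the MAGMA data of Remark~\ref{rem: dso(n,5)}(2) for $10 \le n \le 100$.

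The main obstacle is (b): it asserts that optimal SO codes in the seven residue classes must essentially arise from juxtaposition with $\calS_5$. A promising route is a column-saturation argument: given a hypothetical $[n+31,5,d(n,5)+16]$ SO code with multiplicities $\ell_i$, the $31$ hyperplane sums all satisfy $\sum_{i \in S_\alpha} \ell_i \ge d(n,5) + 16$ while summing to $16(n+31)$, so their average is exactly $16(n+31)/31$, forcing near-equality for many $\alpha$; combined with the $15$ parity constraints one would try to force $\ell_i \ge 1$ for every $i \in \{1,\ldots,31\}$, so that puncturing one copy of each $\sfh_i$ yields an $[n,5,d(n,5)]$ SO code, contradicting the base case. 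The delicate point is this saturation step: a priori, nothing prevents some $\ell_i$ from vanishing, and ruling this out will likely require a refined MacWilliams-type analysis of the weight enumerator of extremal SO codes attaining (or nearly attaining) the Griesmer bound. Without a new structural input of this kind, the conjecture must remain numerical for $n > 100$.
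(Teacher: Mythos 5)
The statement you are addressing is Conjecture~\ref{conj: dso(n,5)}: the paper does not prove it, and offers only two pieces of evidence, namely the construction in Remark~\ref{rem: dso(n,5)}(1) showing $\dso(n,5) \ge d(n,5)-2$ for these residue classes, and a random search over $2^{16}$ SO codes per length for $10 \le n \le 100$ in Remark~\ref{rem: dso(n,5)}(2). So there is no paper proof to compare against, and your proposal does not supply one either. Several ingredients of your plan are sound and worth recording: the reduction to showing $\dso(n,5) < d(n,5)$ is valid because $d(n,5)$ is even for all seven residues $6,13,14,21,22,28,29 \pmod{31}$ (check via Theorem~\ref{thm: optimum distance dim 5}), so $\dso < d$ together with evenness of SO minimum distances forces $\dso \le d-2$; the $15$ parity constraints on the $\ell_i$ are the correct dimension-$5$ analog of Theorem~\ref{thm: eq cond for SO}; and $d(n+31,5)=d(n,5)+16$ together with $\dso(n+31,5)\ge\dso(n,5)+16$ via juxtaposition with $\nmH_5$ is correct.

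The genuine gap is exactly the one you name and then leave open: the reverse inequality $\dso(n+31,5)\le\dso(n,5)+16$, without which the induction on $n \mapsto n+31$ does not close. Your column-saturation sketch does not establish it. The average of the $31$ hyperplane sums is at most $16(n+31)/31$ (not exactly, since zero columns may be present), each sum is at least $d(n,5)+16$, and the slack between these two quantities is small but positive; this near-equality constrains the multiset $(\ell_i)$ but does not force $\ell_i\ge 1$ for all $i$, so one cannot puncture a copy of $\nmH_5$ to descend to length $n$. Moreover, even the base cases are not settled: an exhaustive integer-programming search over the $\ell_i$ subject to the parity constraints would indeed be a finite verification per length (and would strengthen the paper's random sampling), but you have not carried it out, and it would in any case only cover finitely many $n$ absent the periodicity step. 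As it stands the proposal is a reasonable research program, with a correctly identified missing structural lemma, rather than a proof; the statement remains a conjecture.
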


\section{Conclusion}

We have introduced a characterization on self-orthogonality for given binary linear codes in terms of the number of column vectors in its generator matrix. In particular, we have described the characterization explicitly  for each $k = 2,3,4$.

As a consequence of our characterizations, for each $k=2,3,4$, we have proposed  algorithms that embed an $[n,k]$ code to a self-orthogonal code by minimum lengthening.  We have also suggested an algorithm that embeds an $[n,k]$ code to a self-orthogonal code for $k >4$. 

We have also given new explicit formulas for the minimum distances of optimal linear codes for dimensions 4 and 5 and those of optimal self-orthogonal codes for any length $n$ with dimension 4 and any length $n \not\equiv 6,13,14,21,22,28,29 \pmod{31}$ with dimension 5.

Using our explicit formulas and MAGMA, we have obtained that the above algorithms embed optimal linear codes into optimal self-orthogonal codes for $n \le 256$ and $k=2,3,4$.

Finally, we have suggested two conjectures in Conjecture~\ref{conj: OSO embedding of [n,4]} and Conjecture~\ref{conj: dso(n,5)}.

\appendix
\section{Appendix}
\subsection{Proof of Theorem~\ref{thm: equiv cond for SO dim 4}}\label{apx: Proof of Thm 3.8}
\begin{proof}
$(\Leftarrow)$ We prove the case where $s = 1$ since the other cases can be easily shown in the same manner.
By the hypothesis, for each $t = 1,2$, $\ell_i \equiv_2 \ell_j$ for $i,j \in \calI_1^{(t)}$, that is,
{\small
\begin{align}
\begin{aligned}\label{eq: I_1 condition}
&\ell_1 \equiv_2 \ell_2  \equiv_2 \ell_3  \equiv_2 \ell_4  \equiv_2 \ell_5  \equiv_2 \ell_6  \equiv_2 \ell_7 \quad \text{and}\\
&\ell_8  \equiv_2 \ell_9  \equiv_2 \ell_{10}  \equiv_2 \ell_{11}  \equiv_2 \ell_{12}  \equiv_2 \ell_{13}  \equiv_2 \ell_{14}  \equiv_2 \ell_{15}.
\end{aligned}
\end{align}
}
Combining Equations~\eqref{eq: inner product of rows dim 4} and~\eqref{eq: I_1 condition}, we have
\begin{align*}
&|\bfI(1)| = \ell_{1} + \ell_{3} + \ell_{5} + \ell_{7} + \ell_{9} + \ell_{11} + \ell_{13} + \ell_{15} \equiv_2 0, \\
&|\bfI(2)| = \ell_{2} + \ell_{3} + \ell_{6} + \ell_{7} + \ell_{10} + \ell_{11} + \ell_{14} + \ell_{15} \equiv_2 0, \\
&|\bfI(3)| = \ell_{4} + \ell_{5} + \ell_{6} + \ell_{7} + \ell_{12} + \ell_{13} + \ell_{14} + \ell_{15} \equiv_2 0, \\
&|\bfI(4)| = \ell_{8} + \ell_{9} + \ell_{10} + \ell_{11} + \ell_{12} + \ell_{13} + \ell_{14} + \ell_{15} \equiv_2 0, \\
&|\bfI(1) \cap \bfI(2)| = \ell_{3} + \ell_{7} + \ell_{11} + \ell_{15} \equiv_2 0,\\
&|\bfI(1) \cap \bfI(3)| = \ell_{5} + \ell_{7} + \ell_{13} + \ell_{15} \equiv_2 0, \\
&|\bfI(1) \cap \bfI(4)| = \ell_{9} + \ell_{11} + \ell_{13} + \ell_{15} \equiv_2 0, \\
&|\bfI(2) \cap \bfI(3)| = \ell_{6} + \ell_{7} + \ell_{14} + \ell_{15} \equiv_2 0, \\
&|\bfI(2) \cap \bfI(4)| = \ell_{10} + \ell_{11} + \ell_{14} + \ell_{15} \equiv_2 0, \\
&|\bfI(3) \cap \bfI(4)| = \ell_{12} + \ell_{13} + \ell_{14} + \ell_{15} \equiv_2 0.
\end{align*}
Therefore,  $|\bfI(j) \cap \bfI(j')|$ is even for all $0 < j \le j' \le 4$ and hence $\C$ is SO by Theorem~\ref{thm: eq cond for SO}.

\noindent $(\Rightarrow)$ 
For $j = 0,1$, let
\[
J_j(G) := \left\{i \in \{1,2,\ldots,15\} \; \middle| \; \ell_i \equiv_2 j  \right\}.
\]
Note that one of $|J_0(G)|$ and $|J_1(G)|$ is larger than or equal to $8$.
Thus, we can choose $I := \{i_1 < i_2 < \cdots < i_8\} \subset \{1,2,\ldots, 15\}$ so that 
\begin{align}\label{eq: i_k condition}
\ell_{i_k} \equiv_2 \ell_{i_{k'}} \quad \text{for $k,k' =1,2,\ldots,8$.}
\end{align}
Note that there exists a subset $S := \{j_1,j_2,j_3,j_4\}$ of $\{i_1,i_2,\ldots,i_8\}$ such that
\[
S \subset \{1,2,\ldots, 7\} \quad \text{or} \quad  S \subset \{8,9,\ldots,15\}.
\]

Suppose that $S \subset \{1,2,\ldots, 7\}$.

\begin{claim}\label{claim 1}
If there is $s \in \{1,2,\ldots, 7\}$ such that 
\[
\calP_s^{(1)} = \{(j_{\sigma(1)},j_{\sigma(2)}), (j_{\sigma(3)}, i), (j_{\sigma(4)},i')\}
\]
for a permutation $\sigma$ of $\{1,2,3,4\}$ and $i,i' \in \{1,2,\ldots,7\}$, then
\[
\ell_1 \equiv_2 \ell_2 \equiv_2 \ell_3 \equiv_2 \ell_4 \equiv_2 \ell_5 \equiv_2 \ell_6 \equiv_2 \ell_7.
\]
Note that we consider $(i,j)$ and $(j,i)$ the same.
\end{claim}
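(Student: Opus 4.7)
The plan is to exploit the structure of $\calP_s^{(1)}$ via Lemma~\ref{lem: equiv cond for SO dim 4} in two rounds. The first round will pin down the common parity of six of the $\ell_k$'s; the second round will use a different $\calP_{s'}^{(1)}$ to bring the remaining index into the chain.

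\emph{First round.} Since $j_1,j_2,j_3,j_4$ were chosen from $I$, Equation~\eqref{eq: i_k condition} gives $\ell_{j_{\sigma(1)}} + \ell_{j_{\sigma(2)}} \equiv_2 0$. Applying Lemma~\ref{lem: equiv cond for SO dim 4} to the pair $(j_{\sigma(1)},j_{\sigma(2)})$ against each of the remaining pairs of $\calP_s^{(1)}$ forces
\[
\ell_{j_{\sigma(3)}} + \ell_i \equiv_2 0 \quad \text{and} \quad \ell_{j_{\sigma(4)}} + \ell_{i'} \equiv_2 0,
\]
so $\ell_i$ and $\ell_{i'}$ share the common parity of the $\ell_{j_k}$'s. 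By Remark~\ref{rem: prop of calP}(1), the six indices appearing in $\calP_s^{(1)}$ are precisely $\{1,\ldots,7\} \setminus \{s\}$, so at this stage $\ell_k$ for every $k \in \{1,\ldots,7\} \setminus \{s\}$ shares a common parity, which I'll call $p$.

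\emph{Second round.} Pick any $s' \in \{1,\ldots,7\} \setminus \{s\}$. By Remark~\ref{rem: prop of calP}(1), $s$ appears in $\calP_{s'}^{(1)}$, say in a pair $(s,t)$, while the other two pairs of $\calP_{s'}^{(1)}$ consist of four indices lying in $\{1,\ldots,7\} \setminus \{s,s'\}$. The index $t$ and those four indices all avoid $s$, so by the first round each of their $\ell$-values has parity $p$. In particular, the sum $\ell_a + \ell_b$ from either of the other two pairs is $\equiv_2 0$, and Lemma~\ref{lem: equiv cond for SO dim 4} applied to $\calP_{s'}^{(1)}$ yields $\ell_s + \ell_t \equiv_2 0$. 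Since $\ell_t$ has parity $p$, so does $\ell_s$, completing the chain $\ell_1 \equiv_2 \ell_2 \equiv_2 \cdots \equiv_2 \ell_7$.

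The only point requiring care is the bookkeeping from Remark~\ref{rem: prop of calP}(1): one has to observe that $\calP_s^{(1)}$ covers $\{1,\ldots,7\} \setminus \{s\}$ exactly once (so the six indices $j_1,\ldots,j_4,i,i'$ are genuinely distinct and exhaust that set), and that for any $s' \neq s$ the partition of $\{1,\ldots,7\} \setminus \{s'\}$ given by $\calP_{s'}^{(1)}$ isolates $s$ in exactly one pair while the other two pairs avoid $s$ entirely. I do not anticipate any case analysis beyond this; all three inputs (the SO-code hypothesis through Lemma~\ref{lem: equiv cond for SO dim 4}, the choice of $I$, and Remark~\ref{rem: prop of calP}(1)) feed directly into the two-round argument above.
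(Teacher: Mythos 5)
Your proof is correct and follows essentially the same route as the paper's: the first round uses Lemma~\ref{lem: equiv cond for SO dim 4} together with the fact that $S\subset I$ to give all of $\{1,\ldots,7\}\setminus\{s\}$ a common parity, and the second round brings in $\ell_s$ via another matching $\calP_{s'}^{(1)}$ and Remark~\ref{rem: prop of calP}(1). The only cosmetic difference is that the paper fixes $s'=j_1$ whereas you allow any $s'\neq s$.
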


\begin{claim}\label{claim 2}
If there are no $s \in \{1,2,\ldots, 7\}$ such that 
\[
\calP_s^{(1)} = \{(j_{\sigma(1)},j_{\sigma(2)}), (j_{\sigma(3)}, i), (j_{\sigma(4)},i')\}
\]
for any permutation $\sigma$ of $\{1,2,3,4\}$ and $i,i' \in \{1,2,\ldots,7\}$, then $S$ should be one of the following sets:
\[
\begin{array}{cccc}
\{1,2,4,7\}, & \{1,2,5,6\}, & \{1,3,4,6\}, & \{1,3,5,7\},\\
\{2,3,4,5\}, & \{2,3,6,7\}, & \{4,5,6,7\}.
\end{array}
\]
\end{claim}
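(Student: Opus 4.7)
The plan is to identify $\sfh_1,\ldots,\sfh_7$ with the seven nonzero vectors of $GF(2)^3$ and reduce the combinatorial hypothesis of Claim~\ref{claim 2} to a single linear relation on $S$. First, I would observe by direct inspection of Table~\ref{table: calP} that for each $s \in \{1,\ldots,7\}$,
\[
\calP_s^{(1)} = \bigl\{(i,j) : 1 \le i < j \le 7,\ \sfh_i + \sfh_j = \sfh_s\bigr\},
\]
i.e.\ the three pairs of $\calP_s^{(1)}$ are precisely those index pairs whose columns sum to $\sfh_s$. This turns each $\calP_s^{(1)}$ into an algebraic object and is the only step that really touches the table.

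Next, since $\calP_s^{(1)}$ is a three-pair partition of $\{1,\ldots,7\}\setminus\{s\}$ (Remark~\ref{rem: prop of calP}(1)), a 2+1+1 pattern can never occur when $s \in S$, and for $s \notin S$ the four elements of $S$ must distribute among the three pairs as either $2+1+1$ or $2+2+0$. Thus the hypothesis of Claim~\ref{claim 2} is equivalent to saying that for every $s \in \{1,\ldots,7\}\setminus S$, the set $S$ admits a decomposition $S = \{a,b\}\sqcup\{c,d\}$ with $\sfh_a+\sfh_b = \sfh_c+\sfh_d = \sfh_s$. Adding these two identities gives the $s$-independent relation $\sum_{i\in S}\sfh_i = 0$, which is therefore a necessary consequence of the hypothesis.

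Then I would prove the converse: assume $\sum_{i\in S}\sfh_i = 0$. For any pairing of the four elements of $S$, the two pair-sums coincide and equal some common nonzero value $\sfh_t$; moreover $t \notin S$ (otherwise some $\sfh_b$ would vanish), and the three possible pairings yield three distinct values of $t$ (their pairwise differences are of the form $\sfh_x+\sfh_y$ with $x\neq y$ in $S$, hence nonzero). Since $|\{1,\ldots,7\}\setminus S|=3$, these three values exhaust the complement of $S$, so a valid $2+2+0$ decomposition of $S$ inside $\calP_s^{(1)}$ exists for every $s \notin S$. Consequently the hypothesis of Claim~\ref{claim 2} is equivalent to the single linear condition $\sum_{i\in S}\sfh_i = 0$.

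Finally, I would enumerate the $4$-subsets $S \subset \{1,\ldots,7\}$ satisfying $\sum_{i\in S}\sfh_i = 0$. These are exactly the affine $2$-planes of $GF(2)^3$ that avoid the origin, i.e.\ the hyperplane complements $\{x \in GF(2)^3 \setminus\{0\} : \ell\cdot x = 1\}$ as $\ell$ ranges over the nonzero elements of $GF(2)^3$. There are exactly $2^3 - 1 = 7$ such subsets, and a direct computation matches them with the seven sets listed in Claim~\ref{claim 2}. The only mild obstacle is establishing the pair-sum description of $\calP_s^{(1)}$ at the outset; everything thereafter is elementary $GF(2)$-linear algebra.
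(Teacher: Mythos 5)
Your proof is correct. The first half of your argument is the same reformulation the paper uses: since $\calP_s^{(1)}$ is a perfect matching on $\{1,\ldots,7\}\setminus\{s\}$, the forbidden $2{+}1{+}1$ pattern can only occur for $s\notin S$, so the hypothesis forces the $2{+}2{+}0$ pattern for every $s\notin S$. Where you genuinely diverge is in the final enumeration. The paper stops at this reformulation and simply asserts ``considering Table~\ref{table: calP}, one can see that $S$ should be one of the following sets,'' i.e.\ an unexplained inspection of the table over the $\binom{7}{4}=35$ candidates. You instead observe that $\calP_s^{(1)}$ consists exactly of the pairs with $\sfh_i+\sfh_j=\sfh_s$ (equivalently $i\oplus j=s$), so a $2{+}2{+}0$ decomposition for some $s\notin S$ already forces $\sum_{i\in S}\sfh_i=0$, and the zero-sum $4$-subsets of $GF(2)^3\setminus\{0\}$ are precisely the seven nonzero cosets of $2$-dimensional subspaces, which match the listed sets. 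This buys a conceptual explanation of \emph{why} exactly these seven sets appear and why there are seven of them, replaces the table check by a one-line linear-algebra computation, and (via your converse direction, which is not needed for the claim but is correct) even shows the listed condition is equivalent to the hypothesis rather than merely necessary. The only point that still rests on the table is the identification $\calP_s^{(1)}=\{(i,j):\sfh_i+\sfh_j=\sfh_s\}$, which you flag explicitly and which is easily verified.
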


For Claim~\ref{claim 1}, suppose that there is $s \in \{1,2,\ldots, 7\}$ such that
\[
\calP_s^{(1)} = \{(j_{\sigma(1)},j_{\sigma(2)}), (j_{\sigma(3)}, i), (j_{\sigma(4)},i')\}
\]
for a permutation $\sigma$ of $\{1,2,3,4\}$ and $i,i' \in \{1,2,\ldots,7\}$. Since $\C$ is SO, we have the following by Lemma~\ref{lem: equiv cond for SO dim 4}. 
\[
\ell_{j_{\sigma(1)}} + \ell_{j_{\sigma(2)}} \equiv_2 \ell_{j_{\sigma(3)}} + \ell_{i} \equiv_2 \ell_{j_{\sigma(4)}} + \ell_{i'}.
\]
The following equivalence is obtained by the condition~\eqref{eq: i_k condition} since $S \subset \{i_1,i_2,\ldots,i_8\}$.
\begin{align}\label{eq: jks i i' cong}
\ell_{j_1} \equiv_2 \ell_{j_2} \equiv_2 \ell_{j_3} \equiv_2 \ell_{j_4} \equiv_2 \ell_{i} \equiv_2 \ell_{i'}.
\end{align}
Since $j_1 \in \{1,2,\ldots,7\}$, we attain the following by Remark~\ref{rem: prop of calP} (1).
\begin{align*}
\{i \in \Z \mid \text{$i$ appears in $\calP_{j_1}^{(1)}$}\} &= \{1,2,\ldots,7\} \setminus \{j_1\}\\
& = \{s,j_2,j_3,j_4,i,i'\}.
\end{align*}
Thus, by Lemma~\ref{lem: equiv cond for SO dim 4} and Equation~\eqref{eq: jks i i' cong}, we have
\[
\ell_1 \equiv_2 \ell_2 \equiv_2 \ell_3 \equiv_2 \ell_4 \equiv_2 \ell_5 \equiv_2 \ell_6 \equiv_2 \ell_7.
\]

For Claim~\ref{claim 2}, suppose that there are no $s \in \{1,2,\ldots, 7\}$ such that  
\[
\calP_s^{(1)} = \{(j_{\sigma(1)},j_{\sigma(2)}), (j_{\sigma(3)}, i), (j_{\sigma(4)},i')\}
\]
for any permutation $\sigma$ of $\{1,2,3,4\}$ and $i,i' \in \{1,2,\ldots,7\}$. Then for all $s \in \{1,2,\ldots,7\}\setminus S$,
\[
\calP_s^{(1)} = \{(j_{\sigma(1)},j_{\sigma(2)}), (j_{\sigma(3)}, j_{\sigma(4)}), (i,i')\},
\]
for a permutation $\sigma$ of $\{1,2,3,4\}$ and $i,i'\in \{1,2,\ldots,7\}\setminus S$. Considering Table~\ref{table: calP}, one can see that $S$ should be one of the following sets:
\begin{align}\label{eq: sets for claim 2}\arraycolsep=3pt
\begin{array}{llll}
\{1,2,4,7\}, & \{1,2,5,6\}, & \{1,3,4,6\}, & \{1,3,5,7\},\\
\{2,3,4,5\}, & \{2,3,6,7\}, & \{4,5,6,7\}.
\end{array}
\end{align}

For the case where $S \subset \{8,9,\ldots, 15\}$, we have similar claims as follows.

\begin{claim}\label{claim 3}
If there is $s \in \{1,2,\ldots, 7\}$ such that 
\[
\calP_s^{(2)} = \{(j_{\sigma(1)},j_{\sigma(2)}), (j_{\sigma(3)}, i), (j_{\sigma(4)},i'),(\alpha,\beta)\}
\]
for a permutation $\sigma$ of $\{1,2,3,4\}$ and $i,i',\alpha,\beta \in \{8,9,\ldots,15\}$, then
\[
\ell_8 \equiv_2 \ell_9 \equiv_2 \ell_{10} \equiv_2 \ell_{11} \equiv_2 \ell_{12} \equiv_2 \ell_{13} \equiv_2 \ell_{14} \equiv_2 \ell_{15}.
\]
\end{claim}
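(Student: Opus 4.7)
The plan is to follow exactly the template of Claim~\ref{claim 1}, only departing from it at the very end: because $\calP_s^{(2)}$ for $s\in\{1,\dots,7\}$ is a partition of the full eight-element set $\{8,9,\dots,15\}$ (rather than a six-element subset of $\{1,\dots,7\}$ as in Claim~\ref{claim 1}), the pair $(\alpha,\beta)$ is not immediately linked to the rest and will need a separate propagation step that invokes Lemma~\ref{lem: equiv cond for SO dim 4} a second time at a different index $s'$. This bridging step is the main obstacle.

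First, I would apply Lemma~\ref{lem: equiv cond for SO dim 4} to the four pairs in $\calP_s^{(2)}$ to obtain
\[
\ell_{j_{\sigma(1)}}+\ell_{j_{\sigma(2)}}\equiv_2\ell_{j_{\sigma(3)}}+\ell_i\equiv_2\ell_{j_{\sigma(4)}}+\ell_{i'}\equiv_2\ell_\alpha+\ell_\beta.
\]
Since $S=\{j_1,j_2,j_3,j_4\}\subset\{i_1,\dots,i_8\}$, Equation~\eqref{eq: i_k condition} gives $\ell_{j_1}\equiv_2\ell_{j_2}\equiv_2\ell_{j_3}\equiv_2\ell_{j_4}$, so the first sum vanishes modulo~$2$ and consequently all four do. This yields $\ell_i\equiv_2\ell_{j_{\sigma(3)}}\equiv_2\ell_{j_1}$ and $\ell_{i'}\equiv_2\ell_{j_{\sigma(4)}}\equiv_2\ell_{j_1}$, so every $\ell_t$ for $t\in T:=\{j_1,j_2,j_3,j_4,i,i'\}$ shares the same parity as $\ell_{j_1}$, while for the last pair one only obtains $\ell_\alpha\equiv_2\ell_\beta$.

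To bridge $\{\alpha,\beta\}$ with $T$, I would pick any $s'\in\{1,\dots,7\}\setminus\{s\}$ and examine $\calP_{s'}^{(2)}$. By Remark~\ref{rem: prop of calP}~(2), as $s'$ ranges over $\{1,\dots,7\}$ the partner of $\alpha$ in $\calP_{s'}^{(2)}$ traverses all seven elements of $\{8,\dots,15\}\setminus\{\alpha\}$; since the partner at $s'=s$ is $\beta$, for every other $s'$ both $\alpha$ and $\beta$ are paired with elements of $T$. Hence $\calP_{s'}^{(2)}$ splits as $(\alpha,\gamma_\alpha)$, $(\beta,\gamma_\beta)$, and two further pairs lying entirely in $T$. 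Those interior pairs have sums $\equiv_2 0$ by the previous paragraph, so Lemma~\ref{lem: equiv cond for SO dim 4} at index $s'$ forces $\ell_\alpha+\ell_{\gamma_\alpha}\equiv_2 0$ and $\ell_\beta+\ell_{\gamma_\beta}\equiv_2 0$, giving $\ell_\alpha\equiv_2\ell_\beta\equiv_2\ell_{j_1}$. Combining the two steps yields $\ell_8\equiv_2\ell_9\equiv_2\cdots\equiv_2\ell_{15}$, as desired.
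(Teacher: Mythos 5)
Your proof is correct and matches the approach the paper intends (it omits the proof of this claim, deferring to the template of Claim~\ref{claim 1}): apply Lemma~\ref{lem: equiv cond for SO dim 4} at $s$ to equalize parities on $\{j_1,j_2,j_3,j_4,i,i'\}$, then bridge $(\alpha,\beta)$ via a second index $s'$. Your bridging step is sound; the only detail worth making explicit is that each unordered pair from $\{8,\dots,15\}$ occurs in exactly one $\calP_{s'}^{(2)}$, which follows from Remark~\ref{rem: prop of calP}~(2) together with the count $7\cdot 4=\binom{8}{2}$, so for $s'\neq s$ both $\alpha$ and $\beta$ are indeed paired into $T$ and two pairs of $\calP_{s'}^{(2)}$ lie entirely in $T$.
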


\begin{claim}\label{claim 4}
If there are no $s \in \{1,2,\ldots, 7\}$ such that 
\[
\calP_s^{(2)} = \{(j_{\sigma(1)},j_{\sigma(2)}), (j_{\sigma(3)}, i), (j_{\sigma(4)},i'),(\alpha,\beta)\}
\]
for any permutation $\sigma$ of $\{1,2,3,4\}$ and $i,i',\alpha,\beta \in \{8,9,\ldots,15\}$, then $S$ should be one of the following sets:
\begin{align}\label{eq: sets for claim 4}
\arraycolsep=2.5pt
{\small\begin{array}{lll}
\{8,9,10,11\}, & \{8,9,12,13\}, & \{8,9,14,15\},\\
\{8,10,12,14\},& \{8,10,13,15\}, & \{8,11,12,15\}, \\
\{8,11,13,14\}, & \{9,10,12,15\}, & \{9,10,13,14\}, \\
\{9,11,12,14\}, & \{9,11,13,15\}, & \{9,12,13,14\}, \\
\{10,11,12,13\}, & \{10,11,14,15\}, & \{12,13,14,15\}.
\end{array}}
\end{align}
\end{claim}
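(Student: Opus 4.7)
The approach is to reformulate the hypothesis in terms of $\mathbb{F}_2$-linear structure on $\{8,9,\ldots,15\}$ and then enumerate. The key structural observation, immediate from Table~\ref{table: calP}, is that for each $s \in \{1,\ldots,7\}$ the family $\calP_s^{(2)}$ is precisely the set of pairs $\{i,j\} \subset \{8,\ldots,15\}$ with $i \oplus j = s$ (bitwise XOR, equivalently addition in $\mathbb{F}_2^4$). Consequently $\calP_s^{(2)}$ is a perfect matching of $\{8,\ldots,15\}$ into four edges. Since $|S|=4$, the multiset $\{|S \cap e| : e \in \calP_s^{(2)}\}$ is one of $(2,2,0,0)$, $(2,1,1,0)$, or $(1,1,1,1)$, and the shape ruled out by the hypothesis is exactly $(2,1,1,0)$. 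Defining $f(s)$ to be the number of pairs $\{i,j\} \subseteq S$ with $i \oplus j = s$, the hypothesis becomes: $f(s) \in \{0,2\}$ for every $s \in \{1,\ldots,7\}$.

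Next, a global count pins down $S$. Every unordered pair of distinct elements of $S$ has a well-defined nonzero XOR in $\{1,\ldots,7\}$, so $\sum_{s=1}^{7} f(s) = \binom{4}{2} = 6$; together with $f(s) \in \{0,2\}$ this forces $f$ to take value $2$ on exactly three indices $s_1, s_2, s_3$ and $0$ on the remaining four. Writing $S = \{a,b,c,d\}$, the three XOR-differences $a \oplus b, a \oplus c, a \oplus d$ are distinct and nonzero, hence form a permutation of $\{s_1, s_2, s_3\}$; and then $b \oplus c = (a \oplus b) \oplus (a \oplus c)$ must also lie in $\{s_1, s_2, s_3\}$, forcing $s_1 \oplus s_2 \oplus s_3 = 0$. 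In particular $\{0,s_1,s_2,s_3\}$ is a two-dimensional $\mathbb{F}_2$-subspace, and $S = a \oplus \{0, s_1, s_2, s_3\}$ is a $2$-dimensional affine subspace of $\mathbb{F}_2^4$ contained in the affine set $\{8,\ldots,15\}$.

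The proof concludes by listing these affine subspaces via the identification $i \mapsto i \oplus 8$ with $\mathbb{F}_2^3$: each of the $7$ two-dimensional linear subspaces of $\mathbb{F}_2^3$ contributes two cosets, yielding $14$ candidate sets $S$ that one checks termwise against \eqref{eq: sets for claim 4}. The mathematical content sits entirely in the structural step forcing $\{0,s_1,s_2,s_3\}$ to be a subspace; the remaining obstacle is purely bookkeeping in the enumeration. (A sanity check against the listed entry $\{9,12,13,14\}$ gives $9 \oplus 12 \oplus 13 \oplus 14 = 6 \ne 0$, so that set is not an affine subspace and appears to be a typographical artifact: indeed the pair $\{12,13\}$ is the unique pair of $\calP_1^{(2)}$ contained in $\{9,12,13,14\}$, placing $\calP_1^{(2)}$ in the forbidden shape $(2,1,1,0)$.)
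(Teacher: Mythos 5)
Your proof is correct, and it takes a genuinely different route from the paper's. The paper disposes of Claim~4 the same way it disposes of Claim~2: by direct inspection of Table~\ref{table: calP}, checking which $4$-subsets $S\subset\{8,\ldots,15\}$ avoid the forbidden intersection pattern with every $\calP_s^{(2)}$ (the proof is in fact omitted, with a pointer to the Claim~2 argument). You instead identify $\calP_s^{(2)}$ as the perfect matching $\{\{i,j\}\subset\{8,\ldots,15\}: i\oplus j=s\}$, translate the hypothesis into ``$f(s)\in\{0,2\}$ for all $s$,'' and use $\sum_s f(s)=\binom{4}{2}=6$ together with the identity $b\oplus c=(a\oplus b)\oplus(a\oplus c)$ to force $\{0,s_1,s_2,s_3\}$ to be a subspace, hence $S$ a two-dimensional affine subspace of $8\oplus\mathbb{F}_2^3$. (The one step you leave implicit but which is easily supplied: in the forbidden template the elements $i,i',\alpha,\beta$ automatically lie outside $S$ because a perfect matching uses each of the eight elements exactly once, so the template is precisely the shape $(2,1,1,0)$.) This buys a conceptual classification --- the $14$ affine planes --- in place of a case check over $\binom{8}{4}=70$ subsets, and it also exposes a defect in the paper's list: the entry $\{9,12,13,14\}$ satisfies $9\oplus12\oplus13\oplus14=6\neq 0$, is not an affine plane, and indeed violates the hypothesis, since $\calP_1^{(2)}$ meets it in the forbidden shape via the pair $(12,13)$. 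Because the claim is a one-way implication, this extra entry does not falsify it --- your $14$ sets are all among the paper's $15$, so the stated conclusion follows a fortiori --- but the tight list has only $14$ members, and the spurious set never actually arises in the subsequent Case~4 enumeration.
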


Claims~\ref{claim 3} and~\ref{claim 4} can be proved in the same manner as the proof of Claims~\ref{claim 1} and~\ref{claim 2}. Therefore, we omit their proofs.	

Now we consider the following four cases.\smallskip

\noindent {\it Case 1: $\left|I \cap \{1,2,\ldots,7\}\right| > 4$.}

In this case, there should be $\{j_1,j_2,j_3,j_4\} \subset I$ and $s \in \{1,2,\ldots,7\}$ such that 
\[
\calP_s^{(1)} = \{(j_1,j_2), (j_3, i), (j_4,i')\}
\]
for some $i,i' \in \{1,2,\ldots,7\}$. 
Thus, we have the following by Claim~\ref{claim 1}. 
\begin{align}\label{eq: all ell equal 1}
\ell_1 \equiv_2 \ell_2 \equiv_2 \cdots \equiv_2 \ell_7.
\end{align}
For any $j \in \{8,9,\ldots, 15\}$, there exists $s \in \{1,2,\ldots,7\}$ such that $(i_8,j) \in \calP_s^{(2)}$ by Remark~\ref{rem: prop of calP} (2) since $i_8$ should be in $\{8,9,\ldots, 15\}$. Therefore, by Lemma~\ref{lem: equiv cond for SO dim 4} and Equation~\eqref{eq: all ell equal 1}, we have
\[
\ell_{i_8} + \ell_j \equiv_2 \ell_{h_1} + \ell_{h_2} \equiv_2 0
\] 
for any $(h_1,h_2) \in \calP_s^{(1)}$.
Thus, we have
\[
\ell_1 \equiv_2 \ell_2 \equiv_2 \cdots \equiv_2 \ell_{15}.
\]

\noindent {\it Case 2: $0 < \left|I \cap \{1,2,\ldots,7\}\right| < 4$.}

In this case, one can show 
\[
\ell_1 \equiv_2 \ell_2 \equiv_2 \cdots \equiv_2 \ell_{15}
\] 
in the same manner as Case 1. \smallskip

\noindent {\it Case 3: $\left|I \cap \{1,2,\ldots,7\}\right| = 0$, that is, $I = \{8,9,\ldots,15\}$.}

Suppose that there is $i \in \{1,2,\ldots, 7\}$ such that $\ell_i \equiv_2 \ell_{i_1}$. Let
\[
I' := \{i, i_2, i_3,\ldots, i_8\}.
\]
We obtain the following by Case 2 since $0< \left|I' \cap \{1,2,\ldots, 7\} \right| = 1 <4$. 
\[
\ell_1 \equiv_2 \ell_2 \equiv_2 \cdots \equiv_2 \ell_{15}.
\]
If $\ell_i \not \equiv_2 \ell_{i_1}$ for all $i \in \{1,2,\ldots, 7\}$, then for $t = 1,2$, we have
\begin{align*}
&\ell_{1} \equiv_2 \ell_{2} \equiv_2 \ell_{3} \equiv_2 \ell_{4} \equiv_2 \ell_{5} \equiv_2 \ell_{6} \equiv_2 \ell_{7} \quad \text{and}\\
&\ell_{8} \equiv_2 \ell_{9} \equiv_2 \ell_{10} \equiv_2 \ell_{11} \equiv_2 \ell_{12} \equiv_2 \ell_{13} \equiv_2 \ell_{14} \equiv_2 \ell_{15},
\end{align*}
that is,
\[
\ell_i \equiv_2 \ell_j \quad \text{for $i,j \in \calI_1^{(t)}$}.
\]
\noindent {\it Case 4: $\left| I \cap \{1,2,\ldots,7 \} \right| = 4$, that is
\begin{align*}
&\{i_1,i_2,i_3,i_4\} \subset \{1,2,\ldots, 7\}\quad \text{and}\\
&\{i_5,i_6,i_7,i_8\} \subset \{8,9,\ldots, 15\}.
\end{align*}}
If there is $i \in \{1,2,\ldots, 15\} \setminus I$ such that $\ell_i \equiv_2 \ell_{i_1}$, then by letting $I' = \{i, i_2, i_3,\ldots, i_8\}$, we obtain
\[
\ell_1 \equiv_2 \ell_2 \equiv_2 \cdots \equiv_2 \ell_{15}.
\]
Assume that 
\begin{align}\label{eq: assumption in thm}
\ell_i \not \equiv_2 \ell_{i_1} \quad \text{for all $i \in \{1,2,\ldots, 15\} \setminus I$}.
\end{align}
Suppose that there is $s\in \{1,2,\ldots, 7\}$ such that 
\begin{align*}
&\calP_s^{(1)} = \{(i_{\sigma(1)},i_{\sigma(2)}), (i_{\sigma(3)}, j), (i_{\sigma(4)},j')\} \quad \text{or} \\
&\calP_s^{(2)} = \{(i_{\sigma(1)},i_{\sigma(2)}), (i_{\sigma(3)}, h), (i_{\sigma(4)},h'),(\alpha,\beta)\}
\end{align*}
for a permutation $\sigma$ of $\{1,2,3,4\}$, $j,j' \in \{1,2,\ldots, 7\}$, and $h,h',\alpha,\beta \in \{8,9,\ldots,15\}$.
Then, by Claims~\ref{claim 1} and~\ref{claim 3}, we have
\[
\ell_1 \equiv_2 \ell_2 \equiv_2 \cdots \equiv_2 \ell_7 \quad \text{or} \quad \ell_8 \equiv_2 \ell_9 \equiv_2 \cdots \equiv_2 \ell_{15},
\]
which contradicts to the assumption~\eqref{eq: assumption in thm}.
Thus, by Claims~\ref{claim 2} and~\ref{claim 4}, $\{i_1,i_2,i_3,i_4\}$ is one of the sets in~\eqref{eq: sets for claim 2} and $\{i_5,i_6,i_7,i_8\}$ is one of the sets in~\eqref{eq: sets for claim 4}.

Now, we need to calculate all these cases. For instance, suppose that
\[
\{i_1,i_2,i_3,i_4\} = \{1,2,4,7\} ~\text{and}~ \{i_5,i_6,i_7,i_8\} = \{8,9,10,11\}.
\]
By Lemma~\ref{lem: equiv cond for SO dim 4}, we have $\ell_{3} \equiv_2 \ell_{i}$ for all $i \in I$ since $\calP_8^{(1)} = \{(1,9),(2,10),(3,11)\}$.
Moreover, since 
\begin{align*}
\calP_{12}^{(1)} &= \{(1,13),(2,14),(3,15)\}\quad \text{and}\\ 
\calP_{12}^{(2)} &= \{(4,8),(5,9),(6,10),(7,11)\}
\end{align*}
for any $i \in I \cup \{3\}$, we have
\begin{align}\label{eq: ell equiv ex}
\ell_{5} \equiv_2 \ell_{6} \equiv_2 \ell_{13} \equiv_2 \ell_{14} \equiv_2 \ell_{15} \equiv_2 \ell_i.
\end{align}
Since $\calP_{8}^{(2)} = \{(4,12),(5,13),(6,14),(7,15) \}$, we have 
\[
\ell_1 \equiv_2 \ell_2 \equiv_2 \cdots \equiv_2 \ell_{15}.
\]
by Lemma~\ref{lem: equiv cond for SO dim 4} and Equation~\eqref{eq: ell equiv ex}.

In the same manner, one can show that the other cases induce
$\ell_1 \equiv_2 \ell_2 \equiv_2 \cdots \equiv_2 \ell_{15}$
except for the following cases:
\[
\begin{array}{c|c}
\{i_1,i_2,i_3,i_4\} 	& \{i_5,i_6,i_7,i_8\}  \\ \hline\hline
\{4,5,6,7\}	& \{12,13,14,15\}	\\ \hline
\{4,5,6,7\}	& \{8,9,10,11\} 	\\ \hline
\{2,3,6,7\}	& \{10,11,14,15\} \\ \hline
\{2,3,6,7\}	& \{8,9,12,13\} 	\\ \hline
\{2,3,4,5\}	& \{10,11,12,13\}	\\ \hline
\{2,3,4,5\}	& \{8,9,14,15\} 	\\ \hline
\{1,3,5,7\}	& \{9,11,13,15\}	\\ \hline
\{1,3,5,7\}	& \{8,10,12,14\} 	\\ \hline
\{1,3,4,6\}	& \{9,11,12,14\} 	\\ \hline
\{1,3,4,6\}	& \{8,10,13,15\} 	\\ \hline
\{1,2,5,6\}	& \{9,10,13,14\} 	\\ \hline
\{1,2,5,6\}	& \{8,11,12,15\} 	\\ \hline
\{1,2,4,7\}	& \{9,10,12,15\} 	\\ \hline
\{1,2,4,7\}	& \{8,11,13,14\}
\end{array}
\]
Thus, we have
\[
\ell_i \equiv_2 \ell_j \quad \text{for $(i,j \in I)$ or $(i,j \in \{1,2,\ldots,15\}\setminus I)$}.
\]
Note that for each exceptional case, there is $s \in \{2,3,\ldots,15\}$ such that
\[
\calI_s^{(1)} = \{1,2,\ldots,15\}\setminus I \quad \text{and} \quad \calI_s^{(2)} = I.
\] 
Hence, our assertion holds.
\end{proof}

\subsection{Proof of Theorem~\ref{thm: shortest alg dim 4}}\label{apx: Proof of 4.12}

\begin{proof}
One can easily see that if $\C$ is SO, we have $\fn^{(1)}_{s_0} + \fn^{(2)}_{s_0} = 0$ by Theorem~\ref{thm: equiv cond for SO dim 4}. This implies that $\tG = G$ and thus $\tC = \C$. 
Therefore, we may assume that $\C$ is not SO.

Suppose that $s_0$ is chosen in Step ({\bf C}3). One can see that Steps ({\bf C}5) and ({\bf C}6) (resp. Steps ({\bf C}7) and ({\bf C}8)) will  run repeatedly until
$\calI_{s_0}^{(t)} \cap J_1(\tG) = \emptyset$ or
$\calI_{s_0}^{(t)} \subset J_1(\tG)$ for $t=1$ (resp. $t=2$). Therefore, $s_0$ and $\tG$ satisfy Equation~\eqref{eq: cong eqn for SO dim 4} and thus $\tC$ is SO by Theorem~\ref{thm: equiv cond for SO dim 4}.

Let $\fn$ be the difference between the length of $\tC$ and $\C$,
\[\arraycolsep=2.5pt
\hG : = \left[\begin{array}{c||c} G & M \end{array}\right],
\]
where $M$ is a $4 \times l$ matrix for some $0< l <\fn$ and $\hC$ is a linear code generated by $\hG$.

Note that, by Theorem~\ref{thm: equiv cond for SO dim 4}, $\hC$ is SO if and only if there is $s \in \{1,2,\ldots, 15\}$ such that
\begin{align}\label{eq: SO condtion for embedding}
\calI_s^{(t)} \cap J_1(\hG) = \emptyset \quad \text{or} \quad \calI_s^{(t)} \cap J_1(\hG) = \calI_s^{(t)}
\end{align}
for all $t = 1,2$. 

For $s \in \{1,2,\ldots, 15\}$, let $M^{(1)}_s$ and $M^{(2)}_s$ be submatrices of $M$ such that 
$M^{(1)}_s$ (resp. $M^{(2)}_s$) consists of column vectors $\sfh_{i}$'s where $i \in \calI_s^{(1)}$ (resp. $i \in \calI_s^{(2)}$). There is an $n \times n$ permutation matrix $P$ such that
\[\arraycolsep=2.5pt
MP  = \left[ \begin{array}{c|c} M^{(1)}_s &  M^{(2)}_s \end{array} \right],
\]
that is, $\arraycolsep=2.5pt \left[ \begin{array}{c|c} M^{(1)}_s &  M^{(2)}_s \end{array} \right]$ is $M$ with the columns interchanged.

For $t = 1,2$, let 
\[
l_s^{(t)} := \left(\text{the number of columns of $M^{(t)}_s$}\right).
\]
We also let
\begin{align}\label{eq: fn(1) def}
\fn^{(1)}_{s,0} := |\calI_s^{(1)} \setminus J_1(G)|\;~ \text{and} \;~ \fn^{(1)}_{s,1} := |\calI_s^{(1)} \cap J_1(G)|
\end{align}
and
\begin{align}
\label{eq: fn(2) def}
\fn^{(2)}_{s,0} := |\calI_s^{(2)} \setminus J_1(G)| \;~ \text{and} \;~ \fn^{(2)}_{s,1} := |\calI_s^{(2)} \cap J_1(G)|.
\end{align}
For $t = 1,2$, take $j^{(t)} \neq j'^{(t)} \in \{0,1\}$ so that $\fn^{(t)}_{s} = \fn^{(t)}_{s,j^{(t)}}$, where $\fn^{(1)}_{s}$ and $\fn^{(2)}_{s}$ are the integers defined in Steps ({\bf C}1) and ({\bf C}2), respectively. 

Since $l_s^{(1)} + l_s^{(2)} = l < \fn = \mathrm{min}\{\fn^{(1)}_s + \fn^{(2)}_s \mid 1 \le s \le 15\}$, there is $t_0 \in\{1,2\}$ such that 
\[
l_s^{(t_0)} < \fn^{(t_0)}_s \left( = \fn^{(t_0)}_{s,j^{(t_0)}} \right).
\]
In case where $t_0 = 1$, since $l_s^{(1)} < \fn_{s,j^{(1)}}^{(1)}$ and $\fn_{s,j^{(1)}}^{(1)} + \fn_{s,j'^{(1)}}^{(1)} = 7$, we have
\[
0 < \fn_{s,j^{(1)}}^{(1)} - l_s^{(1)} \le |\calI_s^{(1)} \cap J_1(\hG)| \le \fn_{s,j'^{(1)}}^{(1)} + l_s^{(1)} < 7,
\]
by Equation~\eqref{eq: fn(1) def}. Since $|\calI_s^{(1)}| = 7$,  $\hC$ is not SO by Equation~\eqref{eq: SO condtion for embedding}.

Similarly, in case where $t_0 = 2$, since $l_s^{(2)} < \fn_{s,j^{(2)}}^{(2)}$ and $\fn_{s,j^{(2)}}^{(2)} + \fn_{s,j'^{(2)}}^{(2)} = 8$, we have
\[
0 < \fn_{s,j^{(2)}}^{(2)} - l_s^{(2)} \le |\calI_s^{(2)} \cap J_1(\hG)| \le \fn_{s,j'^{(2)}}^{(2)} + l_s^{(2)} < 8,
\]
by Equation~\eqref{eq: fn(2) def}. Since $|\calI_s^{(2)}| = 8$, $\hC$ is not SO by Equation~\eqref{eq: SO condtion for embedding}.
\end{proof}

\subsection{ Algorithm~\ref{alg: SO embedding dim4} in MAGMA: Construction of a shortest SO embedding  for dimension four }\label{apx: alg dim 4}
{\tt \scriptsize 
\noindent /*\\
h{\char95}vector(i) gives the $i$th column of the generator matrix $H_k$ of the $[2^k-1,k]$ simplex code.\\
Input: The dimension $k$ and a column index $1 \le i \le 2^k-1$\\
Output: the $i$th column vector of $H_k$\\
*/\\
\noindent function h{\char95}vector(k,i)\\
\indent H{\char95}k := ZeroMatrix(IntegerRing(),k,2{\char94} k-1);\\
\indent for i in [1..2{\char94}k-1] do \\
\indent \indent for j in [0..k-1] do \\
\indent \indent \indent if Floor(i/2{\char94}j) mod 2 eq 1 then\\
\indent \indent \indent \indent H{\char95}k[k-j,i] := 1;\\
\indent \indent \indent end if;\\
\indent \indent end for;\\
\indent end for;\\	
\indent return ColumnSubmatrix(H{\char95}k,i,1);\\
end function;\\
/*\\
Num{\char95}cols(G) gives the list of $\ell_i(G)$.\\
Input: a generator matrix $G$ of $[n,k]$ code\\
Output: the list of $\ell_i(G)$'s\\
*/\\
\noindent function Num{\char95}cols(G)\\
\indent k := Nrows(G); col{\char95}mult{\char95}set := \{**\}; ell{\char95}s := []; \\
\indent for j in [1..Ncols(G)] do\\
\indent \indent ind := 0;\\
\indent \indent for i in [1..k] do\\
\indent \indent \indent ind := ind + G[i][j]*2{\char94}(k-i);\\
\indent \indent end for;\\
\indent Include({\char126}col{\char95}mult{\char95}set,ind);\\
\indent end for;\\
\indent for s in [1..2{\char94}k - 1] do\\
\indent \indent Append({\char126}ell{\char95}s,Multiplicity(col{\char95}mult{\char95}set,s));\\
\indent end for;\\
\indent return ell{\char95}s;\\
end function;\\
/*\\
Sets in Table~\ref{table: calI}.\\
*/\\
I1 := \{@\\
\{1,2,3,4,5,6,7\}, 
\{1,2,3,8,9,10,11\}, \\
\{1,2,3,12,13,14,15\}, 
\{1,4,5,8,9,12,13\}, \\
\{1,4,5,10,11,14,15\}, 
\{1,6,7,8,9,14,15\}, \\
\{1,6,7,10,11,12,13\}, 
\{2,4,6,8,10,12,14\}, \\
\{2,4,6,9,11,13,15\}, 
\{2,5,7,8,10,13,15\}, \\
\{2,5,7,9,11,12,14\}, 
\{3,4,7,8,11,12,15\}, \\
\{3,4,7,9,10,13,14\}, 
\{3,5,6,8,11,13,14\}, \\
\{3,5,6,9,10,12,15\} @\}; \\
I2 := \{@ \\
\{8,9,10,11,12,13,14,15\}, 
\{4,5,6,7,12,13,14,15\}, \\
\{4,5,6,7,8,9,10,11\}, 
\{2,3,6,7,10,11,14,15\}, \\
\{2,3,6,7,8,9,12,13\}, 
\{2,3,4,5,10,11,12,13\}, \\
\{2,3,4,5,8,9,14,15\}, 
\{1,3,5,7,9,11,13,15\}, \\ 
\{1,3,5,7,8,10,12,14\}, 
\{1,3,4,6,9,11,12,14\}, \\ 
\{1,3,4,6,8,10,13,15\}, 
\{1,2,5,6,9,10,13,14\}, \\
\{1,2,5,6,8,11,12,15\}, 
\{1,2,4,7,9,10,12,15\}, \\
\{1,2,4,7,8,11,13,14\} @\}; \\
/*\\
J1(G) gives the set $J_1(G)$ defined in Subsection 4.B.\\
Input: a generator matrix $G$ of an $[n,4]$ code.\\
Output: the set $J_1(G)$\\
*/\\
function J1(G)\\
\indent ell{\char95}i{\char95}set := Num{\char95}cols(G); J{\char95}1 := \{@@\};\\
\indent for k in [1..15] do\\
\indent \indent if IsEven(ell{\char95}i{\char95}set[k]) eq false then\\
\indent \indent \indent Include({\char126}J{\char95}1,k); \\
\indent \indent end if;\\
\indent end for;\\
\indent return J{\char95}1;\\
end function;\\
/* \\
SOconst{\char95}matrix{\char95}dim4(G) gives a generator matrix for a shortest SO embedding of an $[n,4]$ linear code (Algorithm~\ref{alg: SO embedding dim4}).\\
Input: A generator matrix $G$ of an $[n,4]$ code\\
Output: A generator matrix $\widetilde{G}$ for a shortest SO embedding\\
*/\\
function SOconst{\char95}matrix{\char95}dim4(G)\\
\indent BR := BaseRing(G); G := Matrix(IntegerRing(),G);\\
//(C1) and (C2)\\
\indent n1 := []; n2 := []; J{\char95}1 := J1(G);\\
\indent for s in [1..15] do \\
\indent \indent if \#(I1[s] meet J{\char95}1) lt 4 then\\
\indent \indent \indent Append({\char126}n1,\#(I1[s] meet J{\char95}1));\\
\indent \indent else\\
\indent \indent \indent Append({\char126}n1,\#(I1[s] diff J{\char95}1));\\
\indent \indent end if;\\
\indent \indent if \#(I2[s] meet J{\char95}1) le 4 then\\
\indent \indent \indent Append({\char126}n2,\#(I2[s] meet J{\char95}1));\\
\indent \indent else\\
\indent \indent \indent Append({\char126}n2,\#(I2[s] diff J{\char95}1));\\
\indent \indent end if;\\
\indent end for;\\
//(C3)\\
\indent Min := Minimum(\{Integers()| n1[s]+n2[s]: s in [1..15]\});\\
\indent for i in [1..15] do\\
\indent \indent if (n1[i] + n2[i]) eq Min then\\
\indent \indent \indent s{\char95}0 := i; break;\\
\indent \indent end if;\\
\indent end for;\\
//(C4)\\
\indent tilde{\char95}G := G;\\
//(C5) and (C6)\\
\indent repeat\\
\indent \indent if \#(I1[s{\char95}0] meet J1(tilde{\char95}G)) lt 4 then\\
\indent \indent \indent calI1 := I1[s{\char95}0] meet J1(tilde{\char95}G);\\
\indent \indent else\\
\indent \indent \indent calI1 := I1[s{\char95}0] diff J1(tilde{\char95}G);\\
\indent \indent end if;\\
\indent \indent if (IsEmpty(calI1) eq false) then\\
\indent \indent \indent i{\char95}0 := Minimum(calI1);\\
\indent \indent \indent tilde{\char95}G := HorizontalJoin(tilde{\char95}G,h{\char95}vector(4,i{\char95}0));\\
\indent \indent end if;\\
\indent until IsEmpty(calI1) eq true;\\
//(C7) and (C8)\\
\indent repeat\\
\indent \indent if \#(I2[s{\char95}0] meet J1(tilde{\char95}G)) le 4 then\\
\indent \indent \indent calI2 := I2[s{\char95}0] meet J1(tilde{\char95}G);\\
\indent \indent else\\
\indent \indent \indent calI2 := I2[s{\char95}0] diff J1(tilde{\char95}G);\\
\indent \indent end if;\\
\indent \indent if (IsEmpty(calI2) eq false) then\\
\indent \indent \indent i{\char95}0 := Minimum(calI2);\\
\indent \indent \indent tilde{\char95}G := HorizontalJoin(tilde{\char95}G,h{\char95}vector(4,i{\char95}0));\\
\indent \indent end if;\\
\indent until IsEmpty(calI2) eq true;\\
\indent tilde{\char95}G := Matrix(BR, tilde{\char95}G);\\
\indent return tilde{\char95}G;\\
end function;\\
/*\\
SOconst{\char95}code{\char95}dim4(C) gives a   shortest  SO embedding   of an $[n,4]$ linear code.\\
Input: An $[n,4]$ code $C$\\
Output: A shortest SO embedding of $C$\\
*/\\
function SOconst{\char95}code{\char95}dim4(C)\\
\indent G := GeneratorMatrix(C);\\
\indent tilde{\char95}G := SOconst{\char95}matrix{\char95}dim4(G);\\
\indent tilde{\char95}C := LinearCode(tilde{\char95}G);\\
\indent return tilde{\char95}C;\\
end function;
}
\subsection{Algorithm~\ref{alg: SO embedding dim5} in MAGMA: Construction of an SO embedding  for higher dimensions }\label{apx: alg dim ge 5}
{\tt \scriptsize 
\noindent/* \\
SOconst{\char95}matrix{\char95}dim{\char95}ge5(G) gives a generator matrix for an SO embedding of an $[n,k]$ linear code for $k \ge 5$. \\
Input: A generator matrix $G$ of an $[n,5]$ code for $k \ge 5$\\
Output: A generator matrix $\widetilde{G}$ for an SO embedding\\
*/\\
function SOconst{\char95}matrix{\char95}dim{\char95}ge5(G)\\
\indent T :=[**];\\
\indent I{\char95}0 := [**];\\
\indent repeat\\
\indent \indent n := Ncols(G);\\
\indent \indent k := Nrows(G);\\
\indent \indent R := [**];\\
\indent \indent Ind := [1..k];\\
\indent \indent One := Matrix(GF(2),[[1]]);\\
\indent \indent Zero := Matrix(GF(2),[[0]]);\\
\indent \indent OOne := Matrix(GF(2),[[1,1]]);\\
\indent \indent ZOne := Matrix(GF(2),[[0,1]]);\\
\indent \indent ZZero := Matrix(GF(2),[[0,0]]);\\
\indent \indent for i in [1..k] do\\
\indent \indent \indent Append({\char126}R,RowSubmatrix(G,i,1));\\
\indent \indent end for;\\
\indent \indent i{\char95}0 := 0;\\
\indent \indent for i in [1..k] do\\
\indent \indent \indent discriminant := 0;\\
\indent \indent \indent for j in [1..k] do\\
\indent \indent \indent \indent if InnerProduct(R[i],R[j]) eq 1 then\\
\indent \indent \indent \indent \indent discriminant := discriminant + 1;\\
\indent \indent \indent \indent end if;\\
\indent \indent \indent end for;\\
\indent \indent \indent if discriminant eq 0 then\\
\indent \indent \indent \indent i{\char95}0 := i;\\
\indent \indent \indent \indent break;\\
\indent \indent \indent end if;\\
\indent \indent end for;\\
\indent \indent if i{\char95}0 ne 0 then\\
\indent \indent \indent Exclude({\char126}Ind,i{\char95}0);\\
\indent \indent \indent G := R[Ind[1]];\\
\indent \indent \indent Remove({\char126}Ind,1);\\
\indent \indent \indent for i in Ind do\\
\indent \indent \indent \indent G := VerticalJoin(G,R[i]);\\
\indent \indent \indent end for;\\
\indent \indent else\\
\indent \indent \indent Case := 0;\\
\indent \indent \indent i{\char95}0 := 1;\\
\indent \indent \indent for i in [1..k] do\\
\indent \indent \indent \indent if InnerProduct(R[i],R[i]) eq 1 then\\
\indent \indent \indent \indent \indent Case := 1;\\
\indent \indent \indent \indent \indent i{\char95}0 := i;\\
\indent \indent \indent \indent \indent break;\\
\indent \indent \indent \indent end if;\\
\indent \indent \indent end for;\\
\indent \indent \indent Exclude({\char126}Ind,i{\char95}0);\\
\indent \indent \indent if Case eq 1 then\\
\indent \indent \indent \indent for j in Ind do\\
\indent \indent \indent \indent \indent if InnerProduct(R[i{\char95}0],R[j]) eq 0 then\\
\indent \indent \indent \indent \indent \indent R[j] := HorizontalJoin(R[j],Zero);\\
\indent \indent \indent \indent \indent else\\
\indent \indent \indent \indent \indent \indent R[j] := HorizontalJoin(R[j],One);\\
\indent \indent \indent \indent \indent end if;\\
\indent \indent \indent \indent end for;\\
\indent \indent \indent \indent R[i{\char95}0] := HorizontalJoin(R[i{\char95}0],One);\\
\indent \indent \indent else\\
\indent \indent \indent \indent for j in Ind do\\
\indent \indent \indent \indent \indent if InnerProduct(R[i{\char95}0],R[j]) eq 0 then\\
\indent \indent \indent \indent \indent \indent R[j] := HorizontalJoin(R[j],ZZero);\\
\indent \indent \indent \indent \indent else\\
\indent \indent \indent \indent \indent \indent R[j] := HorizontalJoin(R[j],ZOne);\\
\indent \indent \indent \indent \indent end if;\\
\indent \indent \indent \indent end for;\\
\indent \indent \indent R[i{\char95}0] := HorizontalJoin(R[i{\char95}0],OOne);\\
\indent \indent \indent end if;	\\
\indent \indent \indent RemoveRow({\char126}G, i{\char95}0);\\
\indent \indent \indent G := R[Ind[1]];\\
\indent \indent \indent Remove({\char126}Ind,1);\\
\indent \indent \indent for i in Ind do\\
\indent \indent \indent \indent G := VerticalJoin(G,R[i]);\\
\indent \indent \indent end for;\\
\indent \indent end if;\\
\indent \indent Append({\char126}T, R[i{\char95}0]);\\
\indent \indent Append({\char126}I{\char95}0, i{\char95}0);\\
\indent until k eq 5;\\
\indent pre{\char95}G := SOconst{\char95}matrix{\char95}dim4(G);\\
\indent NC := Ncols(pre{\char95}G);\\
\indent for i in [1..\#T] do\\
\indent \indent for j in [1..NC-Ncols(T[i])] do\\
\indent \indent \indent T[i] := HorizontalJoin(T[i],Zero);\\
\indent \indent end for;\\
\indent end for;\\
\indent for i in [\#I{\char95}0..1 by -1] do\\
\indent \indent NR := Nrows(pre{\char95}G);\\
\indent \indent tG := ZeroMatrix(GF(2), 1, NC);\\
\indent \indent for j in [1..I{\char95}0[i]-1] do\\
\indent \indent \indent tG := VerticalJoin(tG,pre{\char95}G[j]);\\
\indent \indent end for;\\
\indent \indent tG := VerticalJoin(tG,T[i]);\\
\indent \indent for j in [I{\char95}0[i]..NR] do\\
\indent \indent \indent tG := VerticalJoin(tG,pre{\char95}G[j]);\\
\indent \indent end for;\\
\indent \indent RemoveRow({\char126}tG, 1);\\
\indent \indent pre{\char95}G := tG;\\
\indent end for;\\
\indent return tG;\\
end function;\\
/*\\
SOconst{\char95}code{\char95}dim{\char95}ge5(C) gives an SO embedding of an $[n,k]$ linear code for $k \ge 5$.\\
Input: An $[n,k]$ code $C$ for $k \ge 5$\\
Output: An SO embedding of $C$\\
*/\\
function SOconst{\char95}code{\char95}dim{\char95}ge5(C)\\
\indent G := GeneratorMatrix(C);\\
\indent tilde{\char95}G := SOconst{\char95}matrix{\char95}dim{\char95}ge5(G);\\
\indent tilde{\char95}C := LinearCode(tilde{\char95}G);\\
\indent return tilde{\char95}C;\\
end function;\\
}

\bibliographystyle{IEEEtran}
\bibliography{IEEEabrv,reference}

\end{document}